\documentclass[a4paper,USenglish,cleveref,thm-restate]{lipics-v2021}
\usepackage{tabularray}
\usepackage{booktabs}
\usepackage[linesnumbered]{algorithm2e}
\usepackage{tikz}
\usepackage{forest}
\usepackage{float}
\usetikzlibrary{arrows.meta,calc,fit,positioning,decorations.pathreplacing}
\usetikzlibrary{overlay-beamer-styles}

\newcommand{\RMin}{\mathsf{RMin}}
\newcommand{\RMax}{\mathsf{RMax}}
\newcommand{\PSV}{\mathsf{PSV}}
\newcommand{\NSV}{\mathsf{NSV}}
\newcommand{\PLV}{\mathsf{PLV}}
\newcommand{\NLV}{\mathsf{NLV}}
\newcommand{\Min}{\mathsf{Min}}
\newcommand{\Max}{\mathsf{Max}}
\newcommand{\cMin}{\mathsf{cMin}}
\newcommand{\cMax}{\mathsf{cMax}}
\newcommand{\rev}{\operatorname{rev}}
\newcommand{\Perm}{\mathcal P}
\newcommand{\Aneq}{\mathcal A'}
\newcommand{\Agen}{\mathcal A}
\newcommand{\BP}{\mathsf{BP}}
\newcommand{\clr}{\mathsf{color}}
\newcommand{\PRS}{\mathsf{PRS}}
\newcommand{\NRS}{\mathsf{NRS}}
\newcommand{\excess}{\mathsf{excess}}
\newcommand{\wordacc}{\mathsf{word\_access}}
\newcommand{\rank}{\mathsf{rank}}
\newcommand{\select}{\mathsf{select}}
\newcommand{\Xmin}{X_{\min}}
\newcommand{\Xmax}{X_{\max}}

\ccsdesc[500]{Theory of computation~Data structures design and analysis}

\keywords{Encoding data structures, Colored 2d min/max heaps, Generating trees, Quarter-plane walks}

\title{Optimal Encodings for Next/Previous Larger/Smaller Value Queries: A Unified Approach}

\author{Seungbum Jo}{Chungnam National University, South Korea}{sbjo@cnu.ac.kr}{https://orcid.org/0000-0002-8644-3691}{}
\author{Geunho Kim}{The University of Tokyo, Japan}{research@gnho.kim}{https://orcid.org/0000-0003-4705-0489}{}
\author{Kunihiko Sadakane}{The University of Tokyo, Japan}{sada@mist.i.u-tokyo.ac.jp}{https://orcid.org/0000-0002-8212-3682}{}
\author{Srinivasa Rao Satti}{Norwegian University of Science and Technology, Norway}{srinivasa.r.satti@ntnu.no}{https://orcid.org/0000-0003-0636-9880}{}

\hideLIPIcs
\begin{document}
\nolinenumbers
\authorrunning{S. Jo, G. Kim, K. Sadakane, and S. R. Satti}

\maketitle
\pagenumbering{gobble}

\begin{abstract}
Given a 1D array over a total order and a query position in the array, a \textit{next larger value query} returns the nearest position to the right whose value is larger than the value at the query position. Next smaller value queries and previous larger/smaller value queries are defined analogously. These queries are closely related to the $k$-th \textit{range minimum/maximum problem}, which asks for the position of the $k$-th leftmost minimum/maximum value when the query range contains multiple minimum/maximum values.

In this paper, we consider the \textit{encoding} problem for answering any arbitrary subfamily of these four queries, where the queries must be answered without accessing the input array. We give a unified approach for obtaining space lower bounds for such encodings. We also obtain data structures for the query subfamilies that match the lower bounds up to lower-order additive terms, while supporting the queries in constant time. Our results generalize and improve several previous results on this problem [Gawrychowski and Nicholson, ICALP-2015; Jo and Satti, TCS-2016; Tsur, IPL-2019; Jo and Kim, Algorithmica-2025].

In addition, these results give tight bounds on the encoding complexity of all subfamilies of these queries for (1) permutations, (2) arrays without consecutive equal elements, and (3) general arrays.
\end{abstract}

\newpage
\pagenumbering{arabic}

\section{Introduction}

Given an array $A[1,\dots,n]$ over a total order, and a query position $i$, we consider the following queries:
\begin{enumerate}
    \item $\PSV(i)$, the \emph{previous smaller value query}, returns
    $\max\{j<i : A[j] < A[i]\}$, or $0$ if no such position exists.

    \item $\PLV(i)$, the \emph{previous larger value query}, returns
    $\max\{j<i : A[j] > A[i]\}$, or $0$ if no such position exists.

    \item $\NSV(i)$, the \emph{next smaller value query}, returns
    $\min\{j>i : A[j] < A[i]\}$, or $n+1$ if no such position exists.

    \item $\NLV(i)$, the \emph{next larger value query}, returns
    $\min\{j>i : A[j] > A[i]\}$, or $n+1$ if no such position exists.
\end{enumerate}

We also consider the following range queries. Given an interval $[i,j]$ and a positive integer $k$, let $p_1 < p_2 < \cdots < p_t$ be the positions of the minimum element in $A[i,\dots,j]$, and let $q_1 < q_2 < \cdots < q_s$ be the positions of the maximum element in $A[i,\dots,j]$.
\begin{enumerate}
    \item The \emph{$k$-th range minimum query}, denoted by $\RMin(i,j,k)$, returns $p_k$ if $k\le t$, and $p_t$ otherwise.
    \item The \emph{$k$-th range maximum query}, denoted by $\RMax(i,j,k)$, returns $q_k$ if $k\le s$, and $q_s$ otherwise.
\end{enumerate}

We consider the above queries on the following classes of arrays:
\begin{enumerate}
    \item $\Perm = \bigcup_{n\geq 1}\Perm_n$, where $\Perm_n$ denotes the set of all permutations of length $n$.

    \item $\Aneq = \bigcup_{n\geq 1}\Aneq_n$, where $\Aneq_n$ denotes the set of all arrays $A[1,\dots,n]$ such that $A[i-1]\ne A[i]$ for every $1< i \le n$.
    
    \item $\Agen = \bigcup_{n\geq 1}\Agen_n$, where $\Agen_n$ denotes the set of all arrays of length $n$.
\end{enumerate}

These queries are widely studied primitives for accessing the order structure of an array. Previous/next smaller value queries are used to support navigational queries on compressed suffix trees~\cite{DBLP:conf/spire/OhlebuschFG10}. They also arise in the all-nearest-smaller-values problem, which has been studied in the context of parallel algorithms~\cite{DBLP:journals/jal/BerkmanSV93}. 
Also, as a special case of $k$-th range minimum queries, the problem to find the median among them (range minima in the middle queries) was considered by Fischer and Heun~\cite{fischer2010} for efficient construction of compressed suffix arrays.

We study the problem of designing data structures for each subfamily of $\{\PSV,\PLV,\NSV,\NLV\}$ on each of the three array classes.
We work in the \emph{encoding model}~\cite{raman2015}: after preprocessing, the input array is discarded, and all queries must be answered using only the stored representation.
In this model, if a query family distinguishes $N$ equivalence classes of input objects, then any representation requires at least $\lceil\log N\rceil$ bits\footnote{All logarithms are base two. Throughout the paper, we omit ceilings and floors when they affect the bounds only by an additive constant.}; we call this the \emph{information-theoretic lower bound} (ITLB).
Here, two objects are equivalent if the queries in the family give identical answers on them. A representation matching the ITLB up to lower-order terms is called \emph{succinct}.
In many cases, the ITLB is much smaller than the space needed to store the input array itself~\cite{raman2015}.
Throughout the paper, we assume the word-RAM model with word size $w=\Theta(\log n)$.

We also consider range queries because of their close connections to the previous/next smaller/larger value queries.
From a lower bound perspective, $\{\PSV,\NSV\}$ and $\{\RMin\}$ induce the same equivalence classes of arrays, as do $\{\PLV,\NLV\}$ and $\{\RMax\}$, and hence have the same ITLBs (see Proposition~\ref{prop:query-eq}).
For upper bounds, \emph{colored 2d-min heaps}~\cite{fischer2011} provide a structure for supporting $\PSV$, $\NSV$, and $\RMin$ simultaneously; \emph{colored 2d-max heaps} similarly support $\PLV$, $\NLV$, and $\RMax$ (see Section~\ref{sec:pre} for definitions).

\subsection{Previous results}

For permutations, every range has a unique minimum. It is known that two permutations have identical answers to all range minimum queries if and only if their Cartesian trees are identical. Since the number of Cartesian trees on $n$ nodes is the $n$-th Catalan number (\href{https://oeis.org/A000108}{A000108})\footnote{OEIS identifiers refer to entries in the On-Line Encyclopedia of Integer Sequences~\cite{oeis}}~\cite{guerrini2017}, the ITLB for supporting range minimum queries on permutations is $2n-\Theta(\log n)$ bits.

Fischer and Heun~\cite{fischerheun2011} matched this bound with a $2n+o(n)$-bit data structure supporting $\PSV$ and range minimum queries in constant time. Their representation is based on the 2d-min heap, an alternative representation of the Cartesian tree; see \Cref{sec:pre} for the definition. The same data structure also applies to general arrays with the same space bound; in the presence of equal minima, it returns the rightmost minimum in the query range, which corresponds to $\RMin(\cdot,\cdot,n)$ in our notation.

Fischer and Heun~\cite{fischer2010} also considered the problem of finding a minimum whose rank among all minima in the query range lies in the middle (range minima in the middle queries).
For general arrays, they gave a data structure using $n\log(3+2\sqrt{2})+o(n)\approx 2.54n+o(n)$ additional bits that returns, in constant time, a minimum position whose rank among the $t$ minimum positions in the query range lies between $t/16$ and $15t/16$.

Fischer~\cite{fischer2011} studied the simultaneous support of $\PSV$ and $\NSV$. For permutations, $\NSV$ can be supported in constant time using the data structure of Fischer and Heun~\cite{fischerheun2011}. For general arrays, Fischer introduced colored 2d-min heaps and gave a data structure using $n\log(3+2\sqrt{2}) +o(n) \approx 2.54n+o(n)$ bits that supports $\PSV$ and $\NSV$ in constant time. He also proved a matching lower bound based on the little Schr\"oder numbers. Later, Jo and Satti~\cite{josatti2016} observed that this representation can also support $\RMin$ for arbitrary $k$ in constant time.

Gawrychowski and Nicholson~\cite{gawrychowski2015} considered the simultaneous support of $\PSV$ and $\PLV$ on permutations.
Their representation combines the 2d-min and 2d-max heaps and supports range minimum and maximum queries in constant time using $3n+o(n)$ bits.
They also showed that this representation is succinct by proving that a Baxter permutation can be reconstructed from its answers to $\PSV$ and $\PLV$ queries, yielding an ITLB of $3n-\Theta(\log n)$ bits.\footnote{The same lower bound is also implied by the bijection of Dulucq and Guibert~\cite{dulucq1996stack}, which shows that pairs of minimum and maximum Cartesian trees are counted by the Baxter numbers (\href{https://oeis.org/A001181}{A001181}).}
Jo and Satti~\cite{josatti2016} showed that the same bound can be achieved for arrays in $\Aneq$.
For general arrays, they also gave a representation using $n\log 9 + o(n) \approx 3.17n+o(n)$ bits with $O(n)$ query time.

For the full query family $\{\PSV,\PLV,\NSV,\NLV\}$ together with $\RMin$ and $\RMax$, Jo and Satti~\cite{josatti2016} gave a data structure for $\Aneq$ using $4n+o(n)$ bits with constant query time.
For $\Agen$, they gave two data structures: one using $n \log 17 + o(n) \approx 4.09n+o(n)$ bits with linear query time and another using $(3+\log 3)n +o(n) \approx 4.58n+o(n)$ bits with constant query time.
Tsur~\cite{tsur2019} improved the space to $(2+\log 3)n \approx 3.58n$ bits for $\Aneq$ and $n\log 13 \approx 3.70n$ bits for $\Agen$, but with linear query time.
More recently, Jo and Kim~\cite{jokim2025} obtained the same space bounds up to lower-order terms and improved the query time to $O(\log^{(\ell)} n)$ for any constant $\ell$, where $\log^{(1)} n=\log n$ and $\log^{(i)} n=\log(\log^{(i-1)} n)$.
They also established a lower bound of $n\log 9-\Theta(\log n)\approx 3.17n-\Theta(\log n)$ bits for supporting $\{\RMin,\RMax\}$ on arrays in $\Agen$.

\subsection{Our results}

We obtain the first ITLBs and succinct data structures for the full query family on $\Aneq$ and $\Agen$, resolving the open problem of constructing succinct representations for this family~\cite{josatti2016,jokim2025}.
These results are stated in the following two theorems.

\begin{restatable}[Lower bounds]{theorem}{restAllLb}\label{thm:all-lb}
The ITLB for simultaneously supporting $\PSV$, $\PLV$, $\NSV$, and $\NLV$ is
\begin{enumerate}
    \item $n\log(6+4\sqrt{2})-\Theta(\log n)\approx 3.54n-\Theta(\log n)$ bits for arrays in $\Aneq_n$;
    \item $n\log(7+4\sqrt{2})-\Theta(\log n)\approx 3.66n-\Theta(\log n)$ bits for arrays in $\Agen_n$.
\end{enumerate}
The same ITLBs hold for simultaneously supporting $\RMin$ and $\RMax$ on the respective array classes.
\end{restatable}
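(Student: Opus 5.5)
The plan is to count the equivalence classes exactly and then extract their exponential growth rate. By Proposition~\ref{prop:query-eq}, $\{\PSV,\NSV\}$ induces the same equivalence on arrays as $\{\RMin\}$, and $\{\PLV,\NLV\}$ the same as $\{\RMax\}$. Hence all four queries together induce the same classes as $\{\RMin,\RMax\}$, and it suffices to count the classes of the full family. A class is determined by the pair consisting of the colored 2d-min heap and the colored 2d-max heap of $A$. So I would count, for each length $n$, the number $N_n$ of realizable pairs (colored min heap, colored max heap), separately for $\Aneq_n$ and $\Agen_n$. The goal is to show $N_n = \Theta(\rho^n/\mathrm{poly}(n))$ with $\rho=6+4\sqrt2=2(1+\sqrt2)^2$ and $\rho=7+4\sqrt2$ respectively. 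The ITLB $\log N_n = n\log\rho-\Theta(\log n)$ then follows.

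\textbf{Generating tree.} Build the arrays left to right, as is standard for Baxter-type objects. Each class is identified with its sequence of prefix classes, by induction on $n$. The state after reading $A[1..i]$ is the pair of right-to-left stacks: the min stack of positions that are still candidates for $\PSV$ of future elements, and the max stack. The new element $A[i+1]$ is determined, up to equivalence, by three choices:
\begin{enumerate}
\item how many elements it pops from the min stack;
\item how many it pops from the max stack;
\item whether it ties with the new top of either stack. Ties are what the colors record.
\end{enumerate}
The key claim, to be verified, is that only the two stack heights $(a,b)$ matter. Exactly one of the two stacks is popped nontrivially, because $A[i]$ is the top of both stacks and $A[i+1]$ is either above or below it. I expect the admissible moves to have the following form: from $(a,b)$ one goes to $(a-k,\,2)$-type or $(2,\,b-k)$-type states, with tie options giving the extra colored moves. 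For $\Aneq$, equality with the immediately preceding element is forbidden, but equality with deeper stack elements is allowed; the colored variants therefore each contribute an additional option. This yields a rewriting rule $(a,b)\to\{\dots\}$, that is, a generating tree labeled by pairs. Equivalently, the classes correspond to weighted walks in the quarter plane with small but unbounded downward steps.

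\textbf{Asymptotics.} The counting then reduces to a quarter-plane walk enumeration with a kernel $K(x,y;t)=1-t\,S(x,y)$. For each array class I would write the functional equation for the bivariate generating function $F(x,y;t)=\sum_{\text{states}} x^a y^b t^n$ and compute the step set $S$. The exponential growth of such walks is governed by the minimum of the inventory $S(x,y)$ over positive $(x,y)$ (Garbit--Raschel / Denisov--Wachtel), provided the drift conditions hold. The polynomial correction is $n^{-\alpha}$ with some constant $\alpha$, which is consistent with the $\Theta(\log n)$ term. As a sanity check, the same method must recover the known special cases: the Schr\"oder growth $3+2\sqrt2$ for $\{\PSV,\NSV\}$ on one stack, Baxter growth $8$ for permutations, and $9$ for $\{\RMin,\RMax\}$ on $\Agen$ from~\cite{jokim2025}. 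The computation should give $\min S = 6+4\sqrt2$ for $\Aneq$. For $\Agen$, the extra option of equality with the previous element adds exactly $1$ to the inventory, giving $7+4\sqrt2$.

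\textbf{Main obstacle.} The hardest step is the bijectivity claim in the generating-tree step. I must show that distinct sequences of choices yield distinct pairs of colored heaps, and that every choice is realizable. For realizability I would construct an explicit array by assigning values greedily between the appropriate stack entries. The subtle part is the interaction of ties across both stacks: for example, a new element equal to a buried min-stack element while simultaneously popping the max stack. I would handle this by case analysis on whether $A[i+1]>A[i]$ or $A[i+1]<A[i]$ (plus the equality case in $\Agen$), and check that the colors in the heaps record exactly the tie information and nothing more.
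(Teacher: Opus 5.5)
Your reduction to counting realizable pairs $(\cMin(A),\cMax(A))$ matches the paper, as does your use of the two stack heights as generating-tree labels. The gap is that the succession rule, which is the actual content of the count, is misstated and never derived. If $A[m+1]<A[m]$, the min stack loses some $r\in\{1,\dots,k\}$ indices and the max stack loses none. So $(k,\ell)\to(k-r+1,\ell+1)$, and symmetrically $(k,\ell)\to(k+1,\ell-r+1)$: the unpopped height grows by one, and nothing resets to $2$. The free color bit is also not a tie with ``the new top'', since the new top is strictly smaller than $A[m+1]$. It is a possible tie with the new node's left sibling, which is the last popped index. That sibling is internal, and both colors are realizable, iff $r\ge2$. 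For $r=1$ the sibling is the leaf $m$, and in $\Aneq$ the color is forced red. The cross-stack tie you worry about cannot occur: if the max stack is popped, then $A[m+1]>A[m]$ exceeds every value on the min stack. So the children are $(j,\ell+1)$ with multiplicity $2$ for $j<k$ and $1$ for $j=k$, symmetrically in the second coordinate, plus one child $(k,\ell)$ for $\Agen$. The growth constant is decided by exactly these multiplicities: all single gives Baxter's $8$, all doubled gives $16$. Without this rule, ``the computation should give $6+4\sqrt2$'' is an assertion, not a derivation.

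The asymptotic step also has a gap. Read as a walk on labels, your tree has infinitely many step types $(-i,1)$ and $(1,-i)$, $i\ge0$, with infinite total weight, and you need the free-endpoint count. The results you cite do not apply directly:
Garbit--Raschel gives only $N_n^{1/n}\to\rho$, which is too weak for $-\Theta(\log n)$. Denisov--Wachtel would need an exponential tilt, a check of its hypotheses for this infinite step distribution, and a local limit theorem summed over endpoints with uniform bounds. Also, ``minimum of $S$ over positive $(x,y)$'' is the growth rate of excursions of a finite step set, not of free-endpoint walks.

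The idea you are missing is the paper's substitution $y=1+a$, $z=1+b$ in the functional equation. The factors $1/(1-y)$ become $-1/a$, and the equation becomes the kernel equation for quarter-plane walks with the finite small-step multiset $\mathcal S'=\{(1,0),(0,1),2(-1,0),2(0,-1),2(1,-1),2(-1,1),2(0,0)\}$. It also shows $\mathcal F(x;1,1)=xW_{\mathcal S'}(x;0,0)$, so the count equals the number of excursions of length $n-1$. The theorem of Bostan et al.\ then applies and gives $\Theta(\mu^n/n^c)$.

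Your heuristic does land on the right value once the rule is fixed. The inventory $z\frac{y+1}{y-1}+y\frac{z+1}{z-1}$ (convergent only for $y,z>1$) becomes exactly $Q_{\mathcal S'}(a,b)$ under this substitution. Its minimum is $2(1+\sqrt2)^2=6+4\sqrt2$, attained at $y=z=1+\sqrt2$.
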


\begin{restatable}[Upper bounds]{theorem}{restAllUb}\label{thm:all-ub}
There exist data structures that simultaneously support $\PSV$, $\PLV$, $\NSV$, $\NLV$, $\RMin$, and $\RMax$ in constant time using
\begin{enumerate}
    \item $n\log(6+4\sqrt{2})+o(n)\approx 3.54n+o(n)$ bits for arrays in $\Aneq_n$;
    \item $n\log(7+4\sqrt{2})+o(n)\approx 3.66n+o(n)$ bits for arrays in $\Agen_n$.
\end{enumerate}
\end{restatable}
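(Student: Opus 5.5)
The plan is to encode the pair formed by the colored 2d-min heap and the colored 2d-max heap, which by Proposition~\ref{prop:query-eq} together determine the answers to $\PSV,\NSV,\RMin$ and $\PLV,\NLV,\RMax$. The space will be split into two parts: a compact \emph{local} description of how the two heaps evolve as positions are processed left to right, and standard $o(n)$-bit rank/select, $\excess$ and range-min-max machinery on top of it.

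First I will describe the joint structure as a generating tree, reading the array from left to right. After inserting $A[1..i]$, the state is the pair $(a,b)$. Here $a$ is the length of the rightmost path of the min heap, i.e.\ the stack of candidates for $\PSV$, and $b$ is the analogous length for the max heap. Inserting $A[i+1]$ pops some number $x$ of elements from one stack and pushes onto both. The last element is always on both stacks, and it lies below or above the new value, so in a permutation exactly one stack loses elements. Colors record ties: they mark equal values among popped or remaining elements, and in $\Agen$ they additionally allow $A[i+1]=A[i]$.

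This turns equivalence classes into lattice walks in the quarter plane whose steps are $(-x,+1)$, $(+1,-x)$ and colored variants. I will count these walks via a kernel-method or transfer argument. Since only the exponential growth rate matters (the $\Theta(\log n)$ term absorbs polynomial factors), it suffices to show that the growth rate is $6+4\sqrt2=(2+\sqrt2)^2$ for $\Aneq$ and $7+4\sqrt2$ for $\Agen$.

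The key step is to exhibit a product-like decomposition. The min side alone with ties has growth $3+2\sqrt2$, the little Schr\"oder rate. The interaction between the two sides should multiply rates, adjusted by one extra step type for equal neighbors. This should yield the stated constants via singularity analysis of the algebraic generating function. I expect this enumeration to be the main obstacle. I would first try a bijection showing that the walk count equals a sum over interleavings of two Schr\"oder-like structures, and fall back to computing the dominant singularity of the kernel equation.

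For the upper bound, I will encode each step by a variable-length code whose total length matches $\log$ of the count. Concretely, I will partition the sequence into blocks of $\Theta(\log n/\log\log n)$ steps. For each block I store its index among the valid step sequences, given its start and end state, using an enumerative (rank) code. The sum of these block lengths is at most $\log N+o(n)$, because the counts are supermultiplicative up to polynomial factors.

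Queries reduce to $\excess$, forward/backward search and color lookups on the implicit parenthesis sequences of each heap. These are answered in $O(1)$ by precomputed tables on blocks, which decode any block in constant time via lookup over $o(n)$-bit universal tables, plus a range min-max tree over block summaries stored in $o(n)$ bits.
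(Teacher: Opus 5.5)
Your plan has a genuine gap in the encoding step, which is where this theorem actually lives. The blockwise enumerative code is indexed ``given its start and end state''. But the states are stack heights $(k,\ell)$ ranging up to $n$, and the generating tree has unbounded branching: from $(k,\ell)$ a single step reaches every $(j,\ell+1)$ with $j\le k$. So the number of valid step sequences of a block between two states depends on the exact heights. Already for two steps, the number of paths from $(k,\ell)$ to $(j',\ell+2)$ is $k-j'+1$, and one block's code can need $\Theta(L\log n)$ bits.

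This has three consequences your plan does not address. First, the decoder needs the exact boundary states, which your space bound does not account for. Stored explicitly they cost $\Theta(\log n)$ bits per block, i.e.\ $\Theta(n\log\log n)$ bits for $L=\Theta(\log n/\log\log n)$. Second, even granting them, a universal table indexed by (start state, end state, code) has $\Omega(n^2)$ entries, so blocks cannot be decoded in constant time by table lookup. Third, one step may emit arbitrarily many closing parentheses, so a block of steps is not a bounded-length piece of either $\BP$ sequence. Decoding blocks therefore does not by itself give $\wordacc$ on $\BP(\Min(A))$ and $\BP(\Max(A))$.

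The bound $\prod_i c_i\le N$ on the code lengths does hold. It holds because the subtree of the generating tree depends only on the label, not because of supermultiplicativity, and it says nothing about this side information. Finally, ``color lookups'' do not yield $\NSV$ and $\RMin(\cdot,\cdot,k)$. These need previous/next red sibling, which are nonlocal operations; the paper obtains them from the modified tree cover in Lemma~\ref{lem:cbp}.

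The enumeration is also left open, and your heuristic points the wrong way. Multiplying the two little Schr\"oder rates gives $(3+2\sqrt2)^2$, whereas $6+4\sqrt2=2(3+2\sqrt2)$: one free bit per position times a single little Schr\"oder object. If you need $N$ at all, it is available from Proposition~\ref{prop:heap-walk} and Lemma~\ref{lem:quarter-walk}.

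That factorization is exactly the paper's construction, which never uses $N$:
\begin{enumerate}
\item Store $U$ in $n-1$ bits.
\item Show via Proposition~\ref{prop:excess} that the merged close-run string $S$ has nonnegative excess. Hence $S$ is a prefix of $\BP(T)$ for an $n$-node tree $T$ whose non-leftmost children are in bijection with the free color bits $c_{\min},c_{\max}$. So $T$ with these colors is a colored 2d-min heap.
\item Store this colored heap by Lemma~\ref{lem:ub-cmin} in $n\log(3+2\sqrt2)+o(n)$ bits, with locally decodable pieces (the Narayana arrays $L,R$ and color blocks) and constant-time $\wordacc$.
\item Jo and Kim's reconstruction then gives $\wordacc$ on both $\BP$ sequences and on $\Xmin(A),\Xmax(A)$ (Lemma~\ref{lem:all-wordacc}).
\item For $\Agen$, add the bit array $C$ and use $cm+\log\binom{n}{m}\le n\log(1+2^c)$.
\end{enumerate}
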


\begin{table}[t]
    \centering
    \resizebox{0.8\textwidth}{!}{
        \begin{tabular}{crcc}
            \toprule
            Array type & \multicolumn{1}{c}{Space (in bits)} & Query time & Reference \\
            \midrule
            \multicolumn{4}{c}{Upper bounds} \\
            \midrule
            $\Perm$ & $3n+o(n)$ & $O(1)$ &
            \cite{fischer2011, gawrychowski2015, josatti2016} \\
            \midrule
            \multirow[c]{4}{*}{$\Aneq$} & $4n+o(n)$ & $O(1)$ &
            \cite{josatti2016} \\
            & $\begin{aligned}
                (2+\log 3)n + o(n)
                \approx 3.58n + o(n)
            \end{aligned}$ & $O(n)$ & \cite{tsur2019} \\
            & $\begin{aligned}
                (2+\log 3)n+o(n)
                \approx 3.58n+o(n)
            \end{aligned}$ & $O(\log^{(\ell)} n)$ & \cite{jokim2025} \\
            & $\begin{aligned}
                n\log(6+4\sqrt{2})+o(n)
                \approx 3.54n+o(n)
            \end{aligned}$ & $O(1)$ & Theorem~\ref{thm:all-ub} \\
            \midrule
            \multirow[c]{5}{*}{$\Agen$} & $\begin{aligned}
                (3 + \log 3)n+o(n)
                \approx 4.58n+o(n)
            \end{aligned}$ & $O(1)$ & \cite{josatti2016} \\
            & $\begin{aligned}
                n\log 17+o(n)
                \approx 4.09n+o(n)
            \end{aligned}$ & $O(n)$ & \cite{josatti2016} \\
            & $\begin{aligned}
                n\log 13 + o(n)
                \approx 3.70n + o(n)
            \end{aligned}$ & $O(n)$ & \cite{tsur2019} \\
            & $\begin{aligned}
                n\log 13+o(n)
                \approx 3.70n+o(n)
            \end{aligned}$ & $O(\log^{(\ell)} n)$ & \cite{jokim2025} \\
            & $\begin{aligned}
                n\log(7+4\sqrt{2})+o(n)
                \approx 3.66n+o(n)
            \end{aligned}$ & $O(1)$ & Theorem~\ref{thm:all-ub} \\
            \midrule
            \multicolumn{4}{c}{Lower bounds} \\
            \midrule
            $\Perm$ & $3n-\Theta(\log n)$ & &
            \cite{dulucq1996stack, gawrychowski2015} \\
            \midrule
            $\Aneq$ & $\begin{aligned}
                n\log(6+4\sqrt{2})-\Theta(\log n)
                \approx 3.54n-\Theta(\log n)
            \end{aligned}$ & & Theorem~\ref{thm:all-lb} \\
            \midrule
            \multirow[c]{2}{*}{$\Agen$} &
            $n\log 9-\Theta(\log n)\approx 3.17n-\Theta(\log n)$ & &
            \cite{jokim2025} \\
            & $\begin{aligned}
                n\log(7+4\sqrt{2})-\Theta(\log n)
                \approx 3.66n-\Theta(\log n)
            \end{aligned}$ & & Theorem~\ref{thm:all-lb} \\
            \bottomrule
        \end{tabular}}
    \caption{
        Summary of lower and upper bounds for the full query family.
        The stated lower bounds hold for both $\{\PSV,\PLV,\NSV,\NLV\}$ and $\{\RMin,\RMax\}$.
        The upper bounds support $\PSV$, $\PLV$, $\NSV$, $\NLV$, $\RMin$, and $\RMax$ simultaneously.
    }
    \label{tab:partial-results}
\end{table}

Table~\ref{tab:partial-results} compares our lower and upper bounds with previous results for the full query family.

More generally, we give a unified approach to deriving ITLBs for every subfamily of $\{\PSV,\PLV,\NSV,\NLV\}$ on the three array classes $\Perm$, $\Aneq$, and $\Agen$.
We also give succinct data structures for every such subfamily, matching the corresponding ITLBs up to lower-order terms while supporting queries in constant time.
Table~\ref{tab::result} (in Appendix~\ref{app:tables}) summarizes the lower and upper bounds for these query subfamilies across the three array classes.

The rest of the paper is organized as follows.
In the main text, we focus on the lower and upper bounds for the full query family.
Section~\ref{sec:pre} introduces the heap structures used throughout the paper and their connections to the query families.
Section~\ref{sec:lb} establishes the ITLBs for the full query family, and Section~\ref{sec:ub} presents matching succinct data structures supporting these queries in constant time.
Due to space constraints, some proofs from Sections~\ref{sec:lb} and~\ref{sec:ub} are deferred to Appendices~\ref{app:proof-lb} and~\ref{app:proof-ub}, respectively.
The lower bounds and data structures for the remaining query subfamilies are presented in Appendices~\ref{app:lb} and~\ref{app:ub}, respectively.

\section{Preliminaries}\label{sec:pre}

\begin{figure}[t]
\centering
\forestset{
    heap tree/.style={
        for tree={
            circle,
            draw,
            line width=1pt,
            minimum size=15mm,
            inner sep=0pt,
            font=\Huge,
            edge={line width=1pt},
            l sep=7mm,
            s sep=2mm,
            anchor=center,
            child anchor=north,
            parent anchor=south
        }
    },
    blue node/.style={fill=blue!18},
    red node/.style={fill=red!18}
}

\resizebox{0.8\textwidth}{!}{
\begin{tabular}{@{}c@{\hspace{10mm}}c@{\hspace{10mm}}c@{}}
\begin{forest}
baseline=(current bounding box.north),
heap tree
[0
    [1]
    [2
        [3]
        [4]
    ]
    [5
        [6]
        [7
            [8]
        ]
        [9
            [10]
        ]
    ]
]
\end{forest}
&
\begin{forest}
baseline=(current bounding box.north),
heap tree
[0, blue node
    [1, blue node]
    [2, red node
        [3, blue node]
        [4, blue node]
    ]
    [5, red node
        [6, blue node]
        [7, blue node
            [8, blue node]
        ]
        [9, red node
            [10, blue node]
        ]
    ]
]
\end{forest}
&
\begin{forest}
baseline=(current bounding box.north),
heap tree
[0, blue node
    [1, blue node
        [2, blue node]
    ]
    [3, red node]
    [4, blue node
        [5, blue node]
        [6, red node]
        [7, blue node]
    ]
    [8, red node
        [9, blue node]
        [10, red node]
    ]
]
\end{forest}
\\
\raisebox{-10mm}{{\Huge $\Min(A)$}} &
\raisebox{-10mm}{{\Huge $\cMin(A)$}} &
\raisebox{-10mm}{{\Huge $\cMax(A)$}}
\end{tabular}
}

\vspace{2mm}
{\centering
$
\begin{array}{c|cccccccccc}
i & 1 & 2 & 3 & 4 & 5 & 6 & 7 & 8 & 9 & 10\\
\hline
A[i] & 4 & 2 & 5 & 5 & 1 & 3 & 3 & 6 & 2 & 4\\
\end{array}
$
\par}
\vspace{-1mm}
\caption{Examples of $\Min(A)$, $\cMin(A)$, and $\cMax(A)$ for $A=4\;2\;5\;5\;1\;3\;3\;6\;2\;4$.}
\label{fig:heaps}
\end{figure}

\subparagraph{2d-min heaps.} Given an array $A[1,\dots,n]$, the \emph{2d-min heap} on $A$, denoted by $\Min(A)$, is a rooted ordered tree with nodes labeled $0,1,\dots,n$ satisfying the following properties: 
\begin{enumerate}
    \item the root is labeled $0$,
    \item for every $i>0$, the parent of the node labeled $i$ is the node labeled $\PSV(i)$, and
    \item the preorder traversal of $\Min(A)$ visits the nodes in the order of their labels, $0,1,\dots,n$.
\end{enumerate}
Similarly, the \emph{2d-max heap} on $A$, denoted by $\Max(A)$, is defined by replacing $\PSV$ with $\PLV$. Here, we simply denote a node labeled $x$ by a node $x$.

Fischer and Heun~\cite{fischerheun2011} showed that $\Min(A)$ supports $\PSV$ and $\RMin(\cdot, \cdot, n)$ queries on $A$. 
Thus, by maintaining both $\Min(A)$ and $\Max(A)$, $\PSV$, $\PLV$, $\RMin(\cdot, \cdot, n)$, and $\RMax(\cdot, \cdot, n)$ queries can be supported simultaneously.

\begin{algorithm}
\caption{Construction of the 2d-min heap.}\label{alg:minheap}
\DontPrintSemicolon
\KwIn{$A[1 \ldots n]$; assume $A[0]=-\infty$}
\KwOut{2d-min heap on $A$}

Initialize a stack $S$ and $S.\mathrm{push}(0)$\;
Set node $0$ as the root\;

\For{$i = 1 \ldots n$}{
    \While{$A[S.\mathrm{top}()] \ge A[i]$}{
        $S.\mathrm{pop}()$\;
    }
    $p := S.\mathrm{top}()$\;
    Append node $i$ as the rightmost child of node $p$\;
    $S.\mathrm{push}(i)$\;
}
\end{algorithm}

Algorithm~\ref{alg:minheap} gives the stack-based construction of $\Min(A)$ due to Fischer and Heun~\cite{fischerheun2011}.
The construction of $\Max(A)$ is analogous, with the comparison $\ge$ replaced by $\le$ and $A[0]=+\infty$ assumed.
We will use the following consequences of earlier work on the simultaneous construction of $\Min(A)$ and $\Max(A)$.

\begin{proposition}[\cite{gawrychowski2015, josatti2016, tsur2019}]\label{prop:minmax}
Consider the stacks maintained by the construction algorithms for $\Min(A)$ and $\Max(A)$. For each step $i>1$, the following holds:
\begin{itemize}
    \item If $A[i-1] \neq A[i]$, pops occur in exactly one of the two stacks. Consequently, in the heap corresponding to the stack with no pop, node $i$ becomes a leftmost child and node $i-1$ becomes an internal node; in the other heap, node $i$ becomes a non-leftmost child and node $i-1$ becomes a leaf.
    \item If $A[i-1] = A[i]$, exactly one pop occurs in each of the two stacks. Consequently, in both heaps, node $i$ is appended as the right sibling of node $i-1$ and node $i-1$ becomes a leaf.
\end{itemize}
\end{proposition}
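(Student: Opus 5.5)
I will argue directly from the two stack-based constructions run in lockstep: Algorithm~\ref{alg:minheap} for $\Min(A)$, with sentinel $A[0]=-\infty$ and pop condition $A[S.\mathrm{top}()]\ge A[i]$, and its analogue for $\Max(A)$, with $A[0]=+\infty$ and pop condition $A[S.\mathrm{top}()]\le A[i]$. The key observation is that at the start of step $i>1$, node $i-1$ sits on top of both stacks, since it was pushed at the end of step $i-1$ in each construction. So whether a pop occurs at step $i$ in either stack is decided solely by comparing $A[i-1]$ with $A[i]$, and the structural consequences follow from how the constructions append nodes.

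\textbf{Counting pops.} If $A[i-1]<A[i]$, the min-stack test $A[i-1]\ge A[i]$ fails, so the min-stack has no pop. The max-stack test $A[i-1]\le A[i]$ holds, so the max-stack pops at least once. The case $A[i-1]>A[i]$ is symmetric. If $A[i-1]=A[i]$, both tests succeed and $i-1$ is popped from both stacks. For exactly one pop in each stack I must show that the new top $t$ fails the test. In the min-stack, the elements are strictly increasing in value from bottom to top. Indeed, an element pushed after $t$ without popping $t$ satisfies $A[t]<A[\cdot]$, because the loop stops at the first element with value $<A[i]$. Hence $A[t]<A[i-1]=A[i]$ and popping stops; the max-stack case is symmetric. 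The sentinels guarantee the loop never empties either stack.

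\textbf{Structural consequences.} In the heap whose stack had no pop, the parent chosen for $i$ is $i-1$. Since $i-1$ was just created as a leaf, $i$ becomes its first, and hence leftmost, child, and $i-1$ becomes internal. In the heap with a pop, $i-1$ is removed from the stack before $i$ is attached. No later node can ever become a child of $i-1$, because children are only attached to stack elements and a popped element never returns. So $i-1$ remains a leaf. Moreover, the parent $p$ of $i$ is an element below $i-1$ on the stack, and $i-1$ is a descendant of $p$: a stack element is the parent of the element directly above it, so the stack is a root-to-$(i-1)$ path. That path passes through a child $c$ of $p$ with $c\le i-1<i$, so $i$ is appended to the right of an existing child and is non-leftmost. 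In the equal case, the single pop makes the new top $t=\PSV(i-1)$, which is the parent of $i-1$. Node $i$ is therefore attached to the same parent immediately after $i-1$, becoming its right sibling, and $i-1$ is a leaf as argued above. The same reasoning applies to the max-heap.

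\textbf{Main obstacle.} The only step that needs care is the stack-monotonicity invariant used in the equal case, which rules out a cascade of pops. It must be stated precisely: values are strictly increasing upward in the min-stack and strictly decreasing in the max-stack. Its proof is a routine induction over the steps, using the strict inequality that holds whenever the while loop terminates.
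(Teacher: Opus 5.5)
Your proof is correct and complete. The paper gives no proof of its own for this proposition; it imports it from \cite{gawrychowski2015, josatti2016, tsur2019}, and your lockstep analysis of the two stacks is the standard direct derivation.

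Your three ingredients suffice for every claimed consequence:
\begin{enumerate}
\item Node $i-1$ is on top of both stacks at step $i$.
\item Stack values are strictly monotone, which is exactly what rules out a second pop when $A[i-1]=A[i]$.
\item Each stack is the rightmost root-to-$(i-1)$ path, a fact the paper also uses in Section~\ref{subsec:lb-min}.
\end{enumerate}
These include the claim that $i-1$ remains a leaf, since popped indices never return to the stack.
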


\subparagraph{Colored 2d-min heaps.} The \emph{colored 2d-min heap} on $A$, denoted by $\cMin(A)$, is defined by coloring each node of $\Min(A)$ as follows:
\begin{enumerate}
    \item the root and leftmost children are colored blue, and
    \item for each node $x$ with left sibling $y$, node $x$ is colored blue if $A[x] = A[y]$ and red otherwise.
\end{enumerate}
Using the same coloring rule, the \emph{colored 2d-max heap} on $A$, denoted by $\cMax(A)$, is defined analogously.

Fischer~\cite{fischer2011} showed that $\cMin(A)$ additionally supports $\NSV$ on $A$. Jo and Satti~\cite{josatti2016} further showed that $\cMin(A)$ supports $\RMin$ queries. Thus, by maintaining both $\cMin(A)$ and $\cMax(A)$, $\PSV$, $\NSV$, $\PLV$, $\NLV$, $\RMin$, and $\RMax$ queries can be supported simultaneously.
We will use the following consequences for the colored heaps.

\begin{proposition}[\cite{tsur2019,jokim2025}]\label{prop:cminmax}
For each $i>1$, the following properties hold.
\begin{enumerate}
    \item If $A[i-1] \neq A[i]$, then, in either $\cMin(A)$ or $\cMax(A)$, whenever node $i$ has a leaf as its left sibling, the color of node $i$ in that heap is forced to be red.
    \item If $A[i-1] = A[i]$, then the color of node $i$ is forced to be blue in both heaps.
\end{enumerate}
\end{proposition}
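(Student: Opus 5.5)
Both parts follow from the stack-based construction (Algorithm~\ref{alg:minheap} and its max analogue) together with Proposition~\ref{prop:minmax}, which tells us how node $i$ is attached relative to node $i-1$ in each heap. Throughout I use the coloring rule: a node with a left sibling $y$ is blue iff its value equals $A[y]$, and red otherwise.

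\textbf{Part 2.} Suppose $A[i-1]=A[i]$. By Proposition~\ref{prop:minmax}, in both $\Min(A)$ and $\Max(A)$ node $i$ is appended as the right sibling of node $i-1$. So in each heap the left sibling of $i$ is exactly $y=i-1$. Since $A[y]=A[i]$, the coloring rule makes $i$ blue in both $\cMin(A)$ and $\cMax(A)$. Because this holds for every array in the class, the color is forced.

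\textbf{Part 1.} Suppose $A[i-1]\neq A[i]$. By Proposition~\ref{prop:minmax}, there is exactly one heap in which node $i$ is a leftmost child. There it has no left sibling, so the hypothesis is vacuous.

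In the other heap, say $\Min(A)$ (so pops occurred in the min stack), node $i-1$ becomes a leaf and $i$ is a non-leftmost child. Its left sibling $y$ is the last child of $p=\PSV(i)$ before $i$. By the preorder property, $y$ is the last node pushed after $p$ that survives until step $i$; every node after $p$ up to $i-1$ lies in the subtree of $y$.

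\begin{itemize}
\item \emph{Case $y=i-1$.} Then $A[i-1]\neq A[i]$ gives red immediately.
\item \emph{Case $y\neq i-1$.} Then $y$ has descendants, so $y$ is not a leaf, and the hypothesis does not apply.
\end{itemize}

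So whenever the left sibling of $i$ is a leaf, it is node $i-1$ and $i$ is red. The argument for $\Max(A)$ is symmetric.

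\textbf{Main obstacle.} The delicate step is justifying that a leaf left sibling must be $i-1$. This needs the preorder property: the nodes $y+1,\dots,i-1$ all lie in the subtree of $y$, so $y$ is a leaf only if $y=i-1$. I would also double-check the interpretation of ``forced'': it means that, given the shapes of both heaps, the color of $i$ is determined and so carries no information for counting.
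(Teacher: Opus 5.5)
Your proof is correct and is essentially the argument the paper relies on. The paper does not prove this proposition itself; it cites it from \cite{tsur2019,jokim2025}. Where it applies it (Section~\ref{subsec:lb-cminmax} and the $\{\PSV,\NSV\}$ case), it reasons as you do: a leaf left sibling means the left sibling is node $i-1$ (a single pop), so comparing $A[i-1]$ with $A[i]$ fixes the color.

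One wording slip needs fixing. It is not true that every node after $p$ up to $i-1$ lies in the subtree of $y$, since earlier children of $p$ and their subtrees can come in between. Also, $y$ is the \emph{earliest}, not the last, surviving node pushed after $p$. What you actually need, and state correctly in your final paragraph, is that $y+1,\dots,i-1$ are exactly the proper descendants of $y$, so $y$ is a leaf iff $y=i-1$.
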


Figure~\ref{fig:heaps} gives examples of the heaps and colored heaps defined above.

\section{Lower bounds}\label{sec:lb}

In this section, we establish ITLBs for supporting the full query family $\{\PSV,\PLV,\NSV,\NLV\}$. By the following proposition, these bounds also imply the corresponding lower bounds for the query family $\{\RMin,\RMax\}$.

\begin{proposition}\label{prop:query-eq}
    The following properties hold:
    \begin{enumerate}
        \item The query families $\{\PSV, \NSV\}$ and $\{\RMin\}$ can simulate each other.
        \item The query families $\{\PLV, \NLV\}$ and $\{\RMax\}$ can simulate each other.
    \end{enumerate}
\end{proposition}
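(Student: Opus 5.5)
By symmetry (negating all values swaps smaller/larger and minima/maxima), it suffices to prove part~1; part~2 follows by applying part~1 to the array $-A$. I interpret ``simulate'' in the encoding sense: the answers to one family on $A$ determine the answers to the other family on $A$. Equivalently, two arrays agree on all $\PSV$ and $\NSV$ answers if and only if they agree on all $\RMin(i,j,k)$ answers. I will give explicit reductions in both directions.

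\textbf{From $\RMin$ to $\PSV,\NSV$.} Fix a position $i$. First note that $A[i]$ is a minimum of $A[j..i]$ exactly when $\RMin(j,i,n)=i$. Indeed, $\RMin(j,i,n)$ returns the rightmost minimum, and $i$ is the right end of the range, so the rightmost minimum is $i$ precisely when $A[i]$ attains the minimum. Hence $\PSV(i)$ is the largest $j<i$ with $\RMin(j,i,n)\neq i$, or $0$ if no such $j$ exists; it can be found by scanning $j=i-1,i-2,\dots$. Symmetrically, $A[i]$ is a minimum of $A[i..j]$ exactly when $\RMin(i,j,1)=i$, so $\NSV(i)$ is the smallest $j>i$ with $\RMin(i,j,1)\neq i$, or $n+1$ if none exists. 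Query time does not matter for the encoding statement.

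\textbf{From $\PSV,\NSV$ to $\RMin$.} First locate a minimum of $A[i..j]$. Let $p$ be the rightmost position in $[i,j]$ satisfying $\PSV(p)<i$. Such a $p$ exists because $\PSV(i)<i$.
\begin{itemize}
\item \emph{Claim: $p$ is the rightmost minimum.} The rightmost minimum $m$ satisfies $\PSV(m)<i$, since no element of $A[i..m-1]$ is strictly smaller than $A[m]$. Any $q\in(m,j]$ has $A[q]>A[m]$, so $\PSV(q)\ge m\ge i$. Hence $p=m$.
\end{itemize}
Next recover all minima. I need the equal values to the left of a position, which a single $\PSV$ query does not expose directly. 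Define $\mathrm{eq}(q)$ as the largest $r<q$ with $A[r]=A[q]$ and $\NSV(r)>q$. This condition is equivalent to $A[r]=A[q]$ with no smaller value strictly between $r$ and $q$.

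So the main task is to determine equality from the queries. I would use the following characterisation: for $r<q$, we have $A[r]=A[q]$ with no smaller value in $A[r..q]$ if and only if $\NSV(r)>q$ and $\PSV(q)<r$.
\begin{itemize}
\item \emph{Forward direction.} This is immediate from the definitions of $\PSV$ and $\NSV$.
\item \emph{Converse.} The condition $\NSV(r)>q$ gives $A[q]\ge A[r]$. The condition $\PSV(q)<r$ gives $A[r]\ge A[q]$. Together these force equality, and both conditions also exclude any smaller value strictly between $r$ and $q$.
\end{itemize}

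\textbf{Finishing the reduction.} The minima of $A[i..j]$ are exactly the positions $r\in[i,j]$ satisfying $A[r]=A[m]$ with no smaller value between $r$ and $m$. By the characterisation, these are the $r\le m$ in $[i,j]$ with $\NSV(r)>m$ and $\PSV(m)<r$, together with $m$ itself. Sorting these positions gives $p_1<\dots<p_t$, from which $\RMin(i,j,k)$ is read off directly. This direction is the main point, namely detecting equal minima from only strict-inequality queries. The two-sided characterisation resolves it.
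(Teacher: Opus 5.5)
Your proof is correct. The direction from $\RMin$ to $\{\PSV,\NSV\}$ is the same scan the paper uses. The paper writes it out for $\PSV$ via $\RMin(t,i,n)$ and calls $\NSV$ symmetric; you make the symmetric query $\RMin(i,j,1)$ explicit.

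For the direction from $\{\PSV,\NSV\}$ to $\RMin$ you take a different, self-contained route.

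\emph{The paper's route.} It cites the known fact (Fischer and Heun) that $\PSV$ alone simulates the rightmost minimum $\RMin(\cdot,\cdot,n)$. It mirrors this to get the leftmost minimum $\RMin(\cdot,\cdot,1)$ from $\NSV$. It then lists all minima by iterating $p_{\ell+1}:=\RMin(p_\ell+1,r,1)$ until the rightmost minimum $r$ is reached.

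\emph{Your route.} You prove the rightmost-minimum characterization directly: it is the rightmost $p\in[i,j]$ with $\PSV(p)<i$. You then replace the iteration by a membership test for each candidate $r<m$.

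Two simplifications are available. In your two-sided test the condition $\PSV(m)<r$ holds automatically, because $\PSV(m)<i\le r$. So a position $r\in[i,m)$ is a minimum iff $\NSV(r)>m$. Also, the auxiliary $\mathrm{eq}(q)$ you define is never used and can be dropped.

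\emph{Comparison.} Your route does not depend on the cited results. It also shows clearly how strict-inequality queries detect ties: one $\NSV$ query per candidate position suffices. The paper's route is shorter given the cited results. It is also phrased in terms of range queries that the heap structures already support, which fits the data-structure perspective of the rest of the paper. For the encoding statement itself only determinability matters, so both arguments are complete.
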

\begin{proof}
    We prove the first statement; the second is analogous.

    We first simulate $\RMin$ using $\PSV$ and $\NSV$. It is known that $\PSV$ can simulate $\RMin(\cdot,\cdot,n)$, which returns the rightmost position among the minimum elements in a given interval~\cite{fischerheun2011}. Similarly, $\NSV$ can simulate $\RMin(\cdot,\cdot,1)$, which returns the leftmost position among the minimum elements in a given interval.

    Together, these two queries can simulate $\RMin(i,j,k)$ as follows. Let $p_1 := \RMin(i,j,1)$ and $r := \RMin(i,j,n)$. If $p_1=r$, then there is a single minimum position in $[i,j]$. Otherwise, starting from $p_1$, repeatedly set $p_{\ell+1} := \RMin(p_\ell+1,r,1)$ until $p_{\ell+1}=r$. This recovers all minimum positions in $[i,j]$ in increasing order, and therefore answers $\RMin(i,j,k)$.

    Conversely, we can also simulate $\PSV$ and $\NSV$ using $\RMin$. First consider $\PSV(i)$. If $i=1$, then $\PSV(i)=0$. Otherwise, for $t=i-1,i-2,\dots,1$, query $\RMin(t,i,n)$. The first position $t$ for which the answer is not $i$ is exactly $\PSV(i)$; if no such position exists, then $\PSV(i)=0$. The simulation of $\NSV(i)$ is symmetric.
\end{proof}

\subsection{Main ideas}

To compute desired ITLBs, we count the number of different inputs that give distinct answers to at least one query in $\{\PSV,\PLV,\NSV,\NLV\}$.

First, consider supporting only $\PSV$.
Two arrays have identical answers to all $\PSV$ queries if and only if they have the same 2d-min heap~\cite{fischerheun2011}.
Every ordinal tree with $n+1$ nodes can arise as the 2d-min heap of a permutation of length $n$, and hence of an array in any of the three classes.
These heaps are therefore counted by the Catalan numbers (\href{https://oeis.org/A000108}{A000108}), yielding an ITLB of $2n-\Theta(\log n)$ bits.

By this equivalence and its max-heap analogue, two arrays have identical answers to all $\PSV$ and $\PLV$ queries if and only if they induce the same pair $(\Min(A),\Max(A))$.
Thus, it suffices to count the distinct realizable pairs in the array class under consideration.
Here, not every pair of ordinal trees with $n+1$ nodes is \emph{realizable};
i.e., some pairs cannot arise as $(\Min(A),\Max(A))$ for any array $A$ in the class under consideration.
For permutations, the realizable pairs are counted by the Baxter numbers (\href{https://oeis.org/A001181}{A001181}), yielding an ITLB of $3n-\Theta(\log n)$ bits rather than $4n-\Theta(\log n)$ bits~\cite{dulucq1996stack,gawrychowski2015}.

Two arrays have identical answers to all $\PSV$ and $\NSV$ queries if and only if they have the same colored 2d-min heap~\cite{fischer2011}.
Together with the symmetric statement for $\PLV$ and $\NLV$, this implies that two arrays have identical answers to all queries in $\{\PSV,\PLV,\NSV,\NLV\}$ if and only if they induce the same pair $(\cMin(A),\cMax(A))$.
Hence, it suffices to count the realizable pairs of colored 2d-min and 2d-max heaps.
As in the $\{\PSV,\PLV\}$ case, counting these pairs requires accounting for dependencies between the heaps, now involving both their topologies and colors.
However, the earlier arguments~\cite{dulucq1996stack,gawrychowski2015} do not directly extend to this case; to our knowledge, no combinatorial class was previously known to capture the dependencies between the two colored heaps in the way that Baxter permutations do for the $\{\PSV,\PLV\}$ case.

To capture these dependencies systematically, we use a \emph{generating tree}.
Let $\mathcal{C}=\bigcup_{n\geq 1}\mathcal{C}_n$ be a combinatorial class, where $\mathcal{C}_n$ denotes the set of objects of size $n$, and suppose that $\mathcal{C}_1$ contains a unique object. A generating tree for $\mathcal{C}$ is a rooted tree satisfying the following properties:
\begin{itemize}
    \item the root corresponds to the unique object in $\mathcal{C}_1$;
    \item the children of a node corresponding to an object in $\mathcal{C}_n$ are determined by a prescribed rule that extends the object to objects in $\mathcal{C}_{n+1}$; and
    \item every object in $\mathcal{C}$ appears exactly once in the tree.
\end{itemize}

The stack-based construction in Algorithm~\ref{alg:minheap} scans the array from left to right, as does its max-heap counterpart.
Thus, appending a new element to the array extends each colored heap by one node, with the extension determined by the stack operations and the color of the new node.
We observe that Propositions~\ref{prop:minmax} and~\ref{prop:cminmax} clearly capture the dependencies between the two heaps, specifying which stack operations can occur simultaneously and which colors are forced.
Together, these constructions and properties allow us to characterize exactly how a realizable colored heap pair can be extended and thereby determine the children of each node in the generating tree.
We first rederive the generating tree for Baxter permutations given in~\cite{bousquet-melou2003} and then extend it to colored heap pairs.

From the generating trees, we derive functional equations for the associated generating functions.
For Baxter numbers, the obstinate kernel method yields an explicit formula~\cite{bousquet-melou2003}, but this method does not appear to be applicable to our equations for $\Aneq$ and $\Agen$.
Since asymptotic estimates suffice to derive the desired ITLBs, we follow the approach of Bouvel et al.~\cite{bouvel2018}, who relate the generating function for strong-Baxter permutations to that for quarter-plane excursions (defined in Section~\ref{subsec:gen-to-lb}).
Similarly, we relate our generating functions to those for quarter-plane excursions and apply known asymptotic results for these excursions.
Taking logarithms of the resulting counts yields the desired ITLBs.

This generating-tree approach provides a unified framework for deriving ITLBs for every subfamily of $\{\PSV,\PLV,\NSV,\NLV\}$ on the three array classes $\Perm$, $\Aneq$, and $\Agen$.
We apply this framework to the remaining query subfamilies in Appendix~\ref{app:lb}.

The remainder of this section is organized as follows.
Section~\ref{subsec:lb-min} illustrates the generating-tree approach using a single 2d-min heap.
Section~\ref{subsec:lb-minmax} extends this approach to pairs of uncolored 2d-min and 2d-max heaps.
Section~\ref{subsec:lb-cminmax} incorporates node colors to obtain generating trees for colored heap pairs.
Finally, Section~\ref{subsec:gen-to-lb} derives the ITLBs for the full query family on all three array classes from the resulting generating trees.

\subsection{Generating tree for 2d-min heaps}\label{subsec:lb-min}
In this subsection, we derive a generating tree for 2d-min heaps induced by arrays in any of the three classes $\Perm$, $\Aneq$, and $\Agen$.
Recall the construction algorithm for 2d-min heaps described in Algorithm~\ref{alg:minheap}.
The algorithm scans the array from left to right.
At each step, it pops indices from the stack until the top corresponds to a smaller value, attaches the new node as the rightmost child of that node, and pushes the new index onto the stack.

Consider the step of the construction algorithm that processes $A[m+1]$.
Let $s_0,s_1,\dots,s_k$ be the indices in the stack immediately before this step, listed from bottom to top, where $s_0=0$ and $s_k=m$.
Since $A[s_1]<\cdots<A[s_k]$, the number of indices popped depends on where $A[m+1]$ lies relative to these values.
Every pop count $r\in\{0,\dots,k\}$ can occur in each of the three array classes $\Perm$, $\Aneq$, and $\Agen$.

The nodes whose indices are stored in the stack form the rightmost spine of the current heap, namely the path from the root to node $m$.
We label each node of the generating tree by the stack size excluding index $0$, equivalently the number of edges on this spine.
The root of the generating tree corresponds to the unique heap induced by an array of length one, consisting of nodes $0$ and $1$, and has label $1$.
After $r$ pops and the push of index $m+1$, the label becomes $k-r+1$.
Since each pop count determines a distinct extension of the heap, a node labeled $k$ has $k+1$ children, labeled $1,2,\dots,k+1$, respectively.
Figure~\ref{fig:gen-tree} shows the first three levels of the generating tree for 2d-min heaps.

It is convenient to summarize a generating tree by its \emph{succession rule}, which specifies the label of the root and, for each possible label, the labels of the children of a node with that label. Here, the generating tree for 2d-min heaps is described by

\begin{equation}\label{rule:min}
    \text{Root: }(1) \qquad
    \text{Rule: }(k) \rightsquigarrow (1)(2)\cdots(k+1).
\end{equation}
This succession rule coincides with the succession rule for the Catalan numbers (\href{https://oeis.org/A000108}{A000108}).

\subsection{Combining 2d-min and 2d-max heaps}\label{subsec:lb-minmax}

In this subsection, we derive generating trees for pairs of 2d-min and 2d-max heaps induced by arrays in each of the three classes $\Perm$, $\Aneq$, and $\Agen$.
Assume we run the construction algorithms for the 2d-min and 2d-max heaps simultaneously, processing the same array element at each step.
We label each node of the generating tree by a pair $(k,\ell)$, where $k$ and $\ell$ are the stack sizes excluding index $0$ for the min-heap and max-heap constructions, respectively.

Consider the step that processes $A[m+1]$.
Let $(k,\ell)$ be the label immediately before this step.
We first consider the case where $A[m]\neq A[m+1]$.
By the first part of Proposition~\ref{prop:minmax}, pops occur in exactly one of the two stacks.
Suppose that no pop occurs in the stack of $\Min(A)$.
Its label then increases from $k$ to $k+1$.
In the stack of $\Max(A)$, any pop count from $1$ to $\ell$ can occur, so its label after pushing $m+1$ can be any value from $1$ to $\ell$.
The other case is symmetric.
Thus, for $\Perm$ and $\Aneq$, where consecutive elements are always distinct, we obtain the following succession rule:
\begin{equation}\label{rule:minmax-no-cons}
    \text{Root: }(1,1) \qquad
    (k,\ell) \rightsquigarrow
    \begin{cases}
        (1,\ell+1)(2,\ell+1)\cdots(k,\ell+1),\\
        (k+1,1)(k+1,2)\cdots(k+1,\ell).
    \end{cases}
\end{equation}
This succession rule coincides with the succession rule for the Baxter numbers (\href{https://oeis.org/A001181}{A001181}).
Figure~\ref{fig:gen-tree-pair} shows the first three levels of the generating tree for uncolored heap pairs.

We next consider the case where $A[m]=A[m+1]$.
By the second part of Proposition~\ref{prop:minmax}, exactly one pop occurs in each stack.
After pushing $m+1$, both labels remain unchanged.
Thus, for $\Agen$, we add a child node labeled $(k,\ell)$ to each node labeled $(k,\ell)$ in the generating tree defined by rule~\eqref{rule:minmax-no-cons}, obtaining
\begin{equation}\label{rule:minmax}
    \text{Root: }(1,1) \qquad
    \text{Rule: }(k,\ell) \rightsquigarrow
    \begin{cases}
        (1,\ell+1)(2,\ell+1)\cdots(k,\ell+1),\\
        (k+1,1)(k+1,2)\cdots(k+1,\ell),\\
        (k,\ell).
    \end{cases}
\end{equation}

\subsection{Combining colored 2d-min and 2d-max heaps}\label{subsec:lb-cminmax}

In this subsection, we derive generating trees for pairs of colored 2d-min and 2d-max heaps induced by arrays in each of the three classes $\Perm$, $\Aneq$, and $\Agen$.
Counting these pairs yields the ITLB for supporting the full query family on each array class.

For $\Perm$, the colors are determined by the heap structure: the root and all leftmost children are blue, while all other nodes are red.
Thus, the same succession rule~\eqref{rule:minmax-no-cons} applies to colored heap pairs on permutations.

We now turn to $\Aneq$ and $\Agen$.
Consider the step that processes $A[m+1]$, and let $(k,\ell)$ be the label immediately before this step.
We first consider the case where $A[m]\neq A[m+1]$.
By Proposition~\ref{prop:minmax}, the new node is a leftmost child in one of the two heaps and is therefore blue in that heap.
It remains to determine its color in the other heap.
If its left sibling is a leaf, then its color is forced to be red by Proposition~\ref{prop:cminmax}.
This happens exactly when one index is popped from the corresponding stack, giving the label $(k,\ell+1)$ or $(k+1,\ell)$.
Otherwise, the left sibling is internal, and its value may be equal to or different from $A[m+1]$, so either color can occur.
Thus, for $\Aneq$, we obtain the following succession rule, where a superscript $2$ denotes two children with the same label, one for each possible color:
\begin{equation}\label{rule:all-no-cons}
    \text{Root: }(1,1) \qquad
    (k,\ell)\rightsquigarrow
    \begin{cases}
        (1,\ell+1)^2(2,\ell+1)^2\cdots(k-1,\ell+1)^2(k,\ell+1),\\
        (k+1,1)^2(k+1,2)^2\cdots(k+1,\ell-1)^2(k+1,\ell).
    \end{cases}
\end{equation}
Figure~\ref{fig:gen-tree-colored-pair} shows the first three levels of this generating tree.
\begin{figure}[t]
\centering
\input{figures/gen_tree_pair_helpers.tex}
\def\GPMinName{\cMin}
\def\GPMaxName{\cMax}

\resizebox{0.95\linewidth}{!}{
\begin{tikzpicture}
    \GPObjectSide{root}{(0,0)}
        {[0,GPblue[1,GPblue]]}
        {[0,GPblue[1,GPblue]]}
        {1,1}

    \GPObjectSide{a}{(-5.8,-4.8)}
        {[0,GPblue[1,GPblue][2,GPred]]}
        {[0,GPblue[1,GPblue[2,GPblue]]]}
        {1,2}
    \GPObjectSide{b}{(5.8,-4.8)}
        {[0,GPblue[1,GPblue[2,GPblue]]]}
        {[0,GPblue[1,GPblue][2,GPred]]}
        {2,1}

    \GPObject{aa}{(-10.15,-12.3)}
        {[0,GPblue[1,GPblue][2,GPred][3,GPred]]}
        {[0,GPblue[1,GPblue[2,GPblue[3,GPblue]]]]}
        {1,3}
    \GPObject{ab}{(-7.25,-12.3)}
        {[0,GPblue[1,GPblue][2,GPred[3,GPblue]]]}
        {[0,GPblue[1,GPblue[2,GPblue]][3,GPblue]]}
        {2,1}
    \GPObject{ac}{(-4.35,-12.3)}
        {[0,GPblue[1,GPblue][2,GPred[3,GPblue]]]}
        {[0,GPblue[1,GPblue[2,GPblue]][3,GPred]]}
        {2,1}
    \GPObject{ad}{(-1.45,-12.3)}
        {[0,GPblue[1,GPblue][2,GPred[3,GPblue]]]}
        {[0,GPblue[1,GPblue[2,GPblue][3,GPred]]]}
        {2,2}

    \GPObject{ba}{(1.45,-12.3)}
        {[0,GPblue[1,GPblue[2,GPblue]][3,GPblue]]}
        {[0,GPblue[1,GPblue][2,GPred[3,GPblue]]]}
        {1,2}
    \GPObject{bb}{(4.35,-12.3)}
        {[0,GPblue[1,GPblue[2,GPblue]][3,GPred]]}
        {[0,GPblue[1,GPblue][2,GPred[3,GPblue]]]}
        {1,2}
    \GPObject{bc}{(7.25,-12.3)}
        {[0,GPblue[1,GPblue[2,GPblue][3,GPred]]]}
        {[0,GPblue[1,GPblue][2,GPred[3,GPblue]]]}
        {2,2}
    \GPObject{bd}{(10.15,-12.3)}
        {[0,GPblue[1,GPblue[2,GPblue[3,GPblue]]]]}
        {[0,GPblue[1,GPblue][2,GPred][3,GPred]]}
        {3,1}

    \draw[GPedge] (rootLabel.south) -- (a.north);
    \draw[GPedge] (rootLabel.south) -- (b.north);
    \foreach \child in {aa,ab,ac,ad} {
        \draw[GPedge] (aLabel.south) -- (\child.north);
    }
    \foreach \child in {ba,bb,bc,bd} {
        \draw[GPedge] (bLabel.south) -- (\child.north);
    }
\end{tikzpicture}
}
\caption{The first three levels of the generating tree for colored
2d-min and 2d-max heap pairs on arrays in $\Aneq$. The two heaps are
shown side by side in the first two levels and vertically in the third
level, with the label below. Children with the same label can differ in
their node colors.}
\label{fig:gen-tree-colored-pair}
\end{figure}

We next consider the case where $A[m]=A[m+1]$.
By Proposition~\ref{prop:minmax}, exactly one pop occurs in each stack, so both labels remain unchanged after pushing $m+1$.
Moreover, the new node is blue in both heaps by Proposition~\ref{prop:cminmax}.
Thus, for $\Agen$, we add a child node labeled $(k,\ell)$ to each node labeled $(k,\ell)$ in the generating tree defined by rule~\eqref{rule:all-no-cons}, obtaining
\begin{equation}\label{rule:all}
    \text{Root: }(1,1) \qquad
    (k,\ell)\rightsquigarrow
    \begin{cases}
        (1,\ell+1)^2(2,\ell+1)^2\cdots(k-1,\ell+1)^2(k,\ell+1),\\
        (k+1,1)^2(k+1,2)^2\cdots(k+1,\ell-1)^2(k+1,\ell),\\
        (k,\ell).
    \end{cases}
\end{equation}

\subsection{Lower bounds for supporting all queries}\label{subsec:gen-to-lb}

In this section, we give a proof of Theorem~\ref{thm:all-lb}, restated below.

\restAllLb*

As outlined above, we follow the approach of Bouvel et al.~\cite[Section~5.2]{bouvel2018}.
We derive functional equations for the generating functions associated with the succession rules and relate these generating functions to those of quarter-plane walks.
Known asymptotic results for such walks then yield the desired ITLBs.

\subparagraph{Quarter-plane walks.}

Given a finite step multiset $\mathcal S$ in $\mathbb Z^2$, a \textit{quarter-plane walk} with steps in $\mathcal S$ is a lattice path that starts at $(0,0)$, stays in $\mathbb Z_{\geq 0}^2$, and takes each step from $\mathcal S$. The \textit{length} of a walk is the number of its steps. We allow a step to have a multiplicity: two copies of the same step are counted as two different choices. Equivalently, multiple copies of the same step may be viewed as steps of different colors. For a step multiset $\mathcal S$, let
\[
 W_{\mathcal S}(t;a,b)
 =\sum_{n,i,j\geq0}w_{n,i,j}t^na^ib^j,
\]
where $w_{n,i,j}$ is the number of walks of length $n$ that end at $(i,j)$.
A walk ending again at $(0,0)$ is called an \textit{excursion}, so
$W_{\mathcal S}(t;0,0)$ is the generating function for excursions. The
\textit{characteristic polynomial} of $\mathcal S$ is defined by
\[
 Q_{\mathcal S}(a,b)=\sum_{(u,v)\in\mathcal S}a^ub^v.
\]

We use the following consequence of~\cite[Theorem~3.11]{bostan2021}, specialized to the step multisets considered here.
Its derivation is given in Appendix~\ref{app:proof-lb}.

\begin{restatable}{lemma}{restQuarterWalk}\label{lem:quarter-walk}
Let $\mathcal S$ be a finite step multiset in $\mathbb Z^2$ containing $(0,0)$, $(1,0)$, $(-1,0)$, $(0,1)$, and $(0,-1)$.
Then its characteristic polynomial $Q_{\mathcal S}$ has a unique critical point $(a,b)\in\mathbb R_{>0}^2$.
Writing $\mu=Q_{\mathcal S}(a,b)$, the number $e_n$ of excursions of length $n$ satisfies
\[
    e_n=\Theta\!\left(\frac{\mu^n}{n^c}\right)
\]
for some constant $c>0$ depending only on $\mathcal S$.
\end{restatable}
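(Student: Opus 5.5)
We derive the lemma from \cite[Theorem~3.11]{bostan2021}, which treats a weighted step set in $\mathbb Z^2$ whose walks are not confined to a half-plane, has zero drift after a suitable exponential reweighting, and is aperiodic. For such a set it gives excursion asymptotics of the form $e_n\sim \kappa\,\mu^n n^{-c}$ (along the appropriate subsequence in the periodic case). Here $\mu$ is the value of $Q_{\mathcal S}$ at its positive critical point, and $c=1+\pi/\arccos(-\rho)$, where $\rho$ is the correlation coefficient of the reweighted step distribution. Our task is therefore to check that the hypotheses of that theorem hold for every multiset $\mathcal S$ containing the five steps $(0,0),(\pm1,0),(0,\pm1)$.

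\emph{Existence and uniqueness of the critical point.} On $\mathbb R_{>0}^2$, substitute $a=e^x$, $b=e^y$ and set $F(x,y)=\log Q_{\mathcal S}(e^x,e^y)$. The function $F$ is a log-sum-exp of linear forms $ux+vy$, so it is convex. Because $\mathcal S$ contains steps $(1,0),(-1,0),(0,1),(0,-1)$ spanning $\mathbb R^2$, $F$ is strictly convex. Because $\mathcal S$ contains steps with positive and negative components in each coordinate, $F(x,y)\ge \max(|x|,|y|)$ up to a constant, so $F$ is coercive. A strictly convex coercive smooth function has exactly one critical point. Critical points of $F$ correspond to solutions of $\partial_aQ=\partial_bQ=0$ in $\mathbb R_{>0}^2$, which gives the unique critical point $(a,b)$.

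\emph{Remaining hypotheses.} At the critical point, the Cramér-tilted step distribution that assigns $(u,v)$ probability $a^ub^v/\mu$ has zero drift. The step set is not contained in any half-plane, since it includes all four axis steps. The support generates the lattice $\mathbb Z^2$. Aperiodicity follows from the step $(0,0)$, so the asymptotic holds for every $n$ and not only along a subsequence. Theorem~3.11 then yields $e_n\sim\kappa\mu^n n^{-c}$, which implies $e_n=\Theta(\mu^n/n^c)$. The constant satisfies $c>0$, in fact $c>1$, because $\arccos(-\rho)\in(0,\pi)$ whenever $|\rho|<1$. The nondegenerate covariance coming from strict convexity gives $|\rho|<1$.

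\emph{Main obstacle.} We expect the hardest part to be matching the precise hypotheses of \cite[Theorem~3.11]{bostan2021}, which is stated for weighted models with possibly large steps. In particular, we must confirm that multiplicities can be handled as positive weights, that "small steps" is not required, and that any required non-degeneracy condition holds for our multisets. If the theorem needs strong aperiodicity or a particular normalization, the $(0,0)$ step and the axis steps should suffice. Failing that, we would invoke the Denisov–Wachtel result that underlies it.
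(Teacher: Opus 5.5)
Your route is the paper's: both arguments verify the hypotheses of \cite[Theorem~3.11]{bostan2021} and specialize to $e_n=w_{n,0,0}$. Your convexity argument for the unique positive critical point (strict convexity and coercivity of $\log Q_{\mathcal S}(e^x,e^y)$) is correct. It is not needed, though, since the cited theorem itself asserts existence and uniqueness of that point under its hypotheses.

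What you leave unchecked are two of those hypotheses.

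First, the theorem requires \emph{strong} aperiodicity: every translate $u+\mathcal S$ with $u\in\mathbb Z^2$ must generate $\mathbb Z^2$. Having period $1$ is not enough. This holds because $u+\mathcal S$ contains $u$, $u+(1,0)$ and $u+(0,1)$, whose differences generate $\mathbb Z^2$. The step $(1,0)$ also supplies the required step in $\mathbb Z_{\ge 0}^2$.

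Second, and more importantly, the conclusion $w_{n,i,j}=\Theta(\mu^n/n^c)$ is asserted only for endpoints $(i,j)$ that are \emph{reachable from infinity}. This means some quarter-plane walk starting from a point $(a,b)$ with $a>i$ and $b>j$ ends at $(i,j)$. Without this condition the theorem says nothing about that endpoint. In particular it gives no lower bound on $e_n$, which is exactly the direction the ITLB needs. For $(i,j)=(0,0)$ the condition follows at once from the steps $(-1,0)$ and $(0,-1)$.

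With these two checks added, your proof is complete.
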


\subparagraph{Proof of Theorem~\ref{thm:all-lb}.}
We now derive the ITLBs for $\Aneq$ and $\Agen$ by relating the number of realizable colored heap pairs to the number of quarter-plane excursions.
This allows us to apply Lemma~\ref{lem:quarter-walk} to obtain the required asymptotic bounds.

Consider the step multiset
$
    \mathcal S'=\{(1,0),(0,1),2(-1,0),2(0,-1),2(1,-1),2(-1,1),2(0,0)\},
$
where $r(u,v)$ denotes $r$ copies of the step $(u,v)$.
Let $\mathcal S$ be obtained from $\mathcal S'$ by adding one further copy of $(0,0)$.

\begin{proposition}\label{prop:heap-walk}
    For every $n\ge1$, the numbers of distinct colored heap pairs $(\cMin(A),\cMax(A))$ induced by arrays in $\Aneq_n$ and $\Agen_n$ equal the numbers of excursions of length $n-1$ with steps in $\mathcal S'$ and $\mathcal S$, respectively.
\end{proposition}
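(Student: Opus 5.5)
The plan is to prove the statement at the level of generating functions, following Bouvel et al.~\cite[Section~5.2]{bouvel2018}. By the discussion in Section~\ref{subsec:lb-cminmax}, the colored heap pairs induced by arrays in $\Aneq_n$ (resp.\ $\Agen_n$) are in bijection with the nodes at level $n$ of the generating tree with succession rule~\eqref{rule:all-no-cons} (resp.~\eqref{rule:all}). It therefore suffices to show that the number of level-$n$ nodes equals the number of excursions of length $n-1$ with steps in $\mathcal S'$ (resp.\ $\mathcal S$).

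\textbf{Step 1: functional equation for the tree.} Let
\[
F(t;a,b)=\sum_{\text{nodes}} t^{n}a^{k-1}b^{\ell-1},
\]
summed over all tree nodes, where $n$ is the level and $(k,\ell)$ is the label. Summing the children of $(k,\ell)$ under~\eqref{rule:all-no-cons} gives the following contributions:
\begin{itemize}
\item the first line gives $b\,a^{k-1}b^{\ell-1}\bigl(1+2\sum_{i<k}a^{i-k}\bigr)$, which is a geometric sum in $a$;
\item the second line gives the symmetric expression in $b$.
\end{itemize}
This yields a linear equation with one catalytic variable in each coordinate. It has the form
\[
K(a,b)F(t;a,b)=t\,(\text{root term})+t\,R_1(b)F(t;0,b)+t\,R_2(a)F(t;a,0),
\]
where $K$ is an explicit kernel rational in $a,b$. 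For~\eqref{rule:all}, the extra child contributes one more term $tF(t;a,b)$, which shifts the kernel by $-t$.

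\textbf{Step 2: functional equation for the walks.} For quarter-plane walks with step multiset $\mathcal S'$ we have the standard equation
\[
\bigl(1-tQ_{\mathcal S'}(a,b)\bigr)abW=ab-t\,[\text{boundary terms in }W(t;0,b),\,W(t;a,0),\,W(t;0,0)].
\]
The boundary terms come from the steps $(-1,\cdot)$ and $(\cdot,-1)$. The weights $2$ in $\mathcal S'$ mirror the superscripts $2$ in the rule, and the extra $(0,0)$ in $\mathcal S$ mirrors the extra child $(k,\ell)$.

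\textbf{Step 3: match the two equations.} The guiding model is the one-dimensional case. There, rule~\eqref{rule:min} counts Catalan numbers, and $C_n$ equals the number of bicolored Motzkin excursions of length $n-1$ with steps $\{1,-1,2\cdot 0\}$. This is seen by pairing consecutive steps of the Dyck path encoding the stack pushes and pops.

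Accordingly, I would relate $F$ and $W$ by the natural time reversal and substitution: a node label $(k,\ell)$ corresponds to a walk position, and the total count $F(t;1,1)$ corresponds to $W(t;0,0)$. Concretely, I would transform $F$'s equation by a change of variables $a\mapsto \phi(a)$, $b\mapsto\psi(b)$ chosen to turn the geometric sums $\sum_{i<k}a^{i-k}$ into single steps. I would then check that the transformed kernel equals $1-tQ_{\mathcal S'}$, up to a factor, and that the boundary terms correspond. Uniqueness of the power-series solution of such kernel equations then gives the identity of coefficients.

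If the substitution route is messy, the fallback is a direct bijection generalizing the Dyck-path pairing. Encode each array step by its pop/push sequences in the two stacks. Using Proposition~\ref{prop:minmax}, each step has pushes in both stacks and pops in exactly one, or in neither for $\Agen$ when $A[i-1]=A[i]$. Then reorganize these unit moves into exactly $n-1$ small steps, with the color choices of Proposition~\ref{prop:cminmax} supplying the multiplicities $2$.

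I expect Step 3 to be the main obstacle. The difficulty is finding the right reparametrization, that is, deciding which tree quantity corresponds to the walk coordinates so that the unbounded jumps of the succession rule become the small steps of $\mathcal S'$. In particular, the diagonal steps $2(1,-1)$ and $2(-1,1)$ must be accounted for by the colored ``non-forced'' children. As a sanity check, I would first verify the correspondence numerically on the first few levels using Figure~\ref{fig:gen-tree-colored-pair}: levels $1,2,3$ have sizes $1,2,8$, which should match the excursion counts for lengths $0,1,2$.
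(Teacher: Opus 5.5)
Your architecture matches the paper's: a functional equation from rule~\eqref{rule:all-no-cons}, the kernel equation for quarter-plane walks, an identification of the two, and uniqueness of the power-series solution. Your sanity check $1,2,8$ is also correct. The gap is that Step~3, which is the whole content of the proof, is never carried out: you leave $\phi,\psi$ unspecified and flag it as the main obstacle.

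There is also a slip in Step~1 that hides the answer. In your normalization $F(t;y,z)=\sum t^n y^{k-1}z^{\ell-1}$, summing the geometric series produces the catalytic terms $F(t;1,z)$ and $F(t;y,1)$, not $F(t;0,b)$ and $F(t;a,0)$. The rule actually gives
\[
F=t+\frac{2tz}{1-y}\bigl(F(t;1,z)-F\bigr)+tzF+\frac{2ty}{1-z}\bigl(F(t;y,1)-F\bigr)+tyF .
\]

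\textbf{The missing idea.} It is the plain shift $y=1+a$, $z=1+b$. It turns evaluation at $1$ into evaluation at $0$, and turns the denominators $1-y$, $1-z$ into $-a$, $-b$, i.e.\ into single down-steps. With $\widetilde F(t;a,b)=F(t;1+a,1+b)$ one gets
\[
\bigl(1-tQ_{\mathcal S'}(a,b)\bigr)\widetilde F(t;a,b)=t-\frac{2t(1+b)}{a}\widetilde F(t;0,b)-\frac{2t(1+a)}{b}\widetilde F(t;a,0),
\]
because $\frac{2(1+b)}{a}+(1+b)+\frac{2(1+a)}{b}+(1+a)=Q_{\mathcal S'}(a,b)$. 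This is exactly $t$ times the walk equation for $\mathcal S'$:
\begin{itemize}
\item the factors $2(1+b)/a$ and $2(1+a)/b$ are the forbidden steps $2(-1,0),2(-1,1)$ and $2(0,-1),2(1,-1)$;
\item there is no $W(t;0,0)$ correction, since no step decreases both coordinates.
\end{itemize}
Both equations determine the coefficient of $t^n$ from that of $t^{n-1}$, so $F(t;1+a,1+b)=t\,W_{\mathcal S'}(t;a,b)$. Setting $a=b=0$ gives the claim. For $\Agen$ the extra child contributes $tF$, i.e.\ a third copy of $(0,0)$. The paper does the same computation with the normalization $y^kz^\ell$, obtaining $\mathcal F(x;1+a,1+b)=x(1+a)(1+b)W_{\mathcal S'}(x;a,b)$.

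\textbf{Why your heuristics mislead.} They would have steered you away from this. No time reversal is involved, and labels are not walk positions. Comparing coefficients above, the number of walks of length $n-1$ ending at $(i,j)$ equals $\sum_{k,\ell}f_{n,k,\ell}\binom{k-1}{i}\binom{\ell-1}{j}$, where $f_{n,k,\ell}$ counts level-$n$ nodes labeled $(k,\ell)$. So walk coordinates count marked stack entries, and only at $(i,j)=(0,0)$ does this reduce to the number of heap pairs.

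Even your one-dimensional model is explained by the shift, not by pairing Dyck steps. Under $y=1+a$, rule~\eqref{rule:min} yields the kernel $1-t(1+a)^2/a=1-t(a+2+1/a)$, which is precisely the bicolored Motzkin kernel. The fallback bijection is therefore unnecessary. As sketched, it also does not yet explain where the multiplicity-$2$ diagonal steps come from, so it cannot stand in for Step~3.
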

\begin{proof}
    We prove the statement for $\Aneq$; the case of $\Agen$ is analogous.
    We begin by translating rule~\eqref{rule:all-no-cons} into a generating function.

    Let $f_{n,k,\ell}$ be the number of nodes at depth $n-1$ with label $(k,\ell)$ in the generating tree, and set
    \[
        \mathcal F(x;y,z)
        =\sum_{n\ge1}\sum_{k,\ell\ge1} f_{n,k,\ell}x^ny^kz^\ell .
    \]
    
    By rule~\eqref{rule:all-no-cons}, the root contributes $xyz$,
    and the children of a node labeled $(k,\ell)$ contribute the following polynomial in $y$ and $z$:
    \[
        2z^{\ell+1}\sum_{i=1}^{k-1}y^i+y^kz^{\ell+1}
        +2y^{k+1}\sum_{j=1}^{\ell-1}z^j+y^{k+1}z^\ell.
    \]
    Multiplying this polynomial by $x^{n+1}f_{n,k,\ell}$, summing over $n,k,\ell\ge1$, and adding the root contribution, we obtain the following equation after evaluating the geometric sums:
    \begin{align}\label{eq:all-gf}
    \mathcal F(x;y,z)
    &=xyz+\frac{2xz}{1-y}\bigl(y\mathcal F(x;1,z)-\mathcal F(x;y,z)\bigr)
        +xz\mathcal F(x;y,z) \notag\\
    &\quad+\frac{2xy}{1-z}\bigl(z\mathcal F(x;y,1)-\mathcal F(x;y,z)\bigr)
        +xy\mathcal F(x;y,z).
    \end{align}

    Define $\widehat{\mathcal F}(x;a,b)=\mathcal F(x;1+a,1+b)$.
    Substituting $y=1+a$ and $z=1+b$ into~\eqref{eq:all-gf} and rearranging terms, we obtain
    \begin{align}\label{eq:all-shifted}
        &(1-xQ_{\mathcal S'}(a,b))\widehat{\mathcal F}(x;a,b) \notag\\
        &\quad =x(1+a)(1+b) -\frac{2x(1+a)(1+b)}{a}\widehat{\mathcal F}(x;0,b) -\frac{2x(1+a)(1+b)}{b}\widehat{\mathcal F}(x;a,0),
    \end{align}
    where the characteristic polynomial of $\mathcal S'$ is
    \[
        Q_{\mathcal S'}(a,b)=a+b+\frac{2}{a}+\frac{2}{b}+\frac{2a}{b}+\frac{2b}{a}+2.
    \]

    Write $W(x;a,b)=W_{\mathcal S'}(x;a,b)$.
    Appending a step multiplies the generating function by $xQ_{\mathcal S'}(a,b)$, except that steps leaving the quarter plane must be excluded.
    The forbidden steps from the vertical boundary are the two copies each of $(-1,0)$ and $(-1,1)$, and those from the horizontal boundary are the two copies each of $(0,-1)$ and $(1,-1)$.
    Since no step decreases both coordinates, these two corrections do not overlap.
    Including the empty walk, we obtain
    \begin{equation}\label{eq:all-walk}
        (1-xQ_{\mathcal S'}(a,b))W(x;a,b)
        =1-\frac{2x(1+b)}{a}W(x;0,b)
        -\frac{2x(1+a)}{b}W(x;a,0).
    \end{equation}

    Define $G(x;a,b):=x(1+a)(1+b)W(x;a,b)$.
    Rewriting~\eqref{eq:all-walk} in terms of $G$ yields~\eqref{eq:all-shifted}, so $G=\widehat{\mathcal F}$.
    Setting $a=b=0$, we obtain
    \[
        \widehat{\mathcal F}(x;0,0)=\mathcal F(x;1,1)=xW(x;0,0).
    \]
    This proves the statement for $\Aneq$.
\end{proof}

We now complete the proof of Theorem~\ref{thm:all-lb}.
Both $\mathcal S'$ and $\mathcal S$ satisfy the assumptions of Lemma~\ref{lem:quarter-walk}.
Their characteristic polynomials have the same unique positive critical point $(\sqrt{2},\sqrt{2})$, at which their values are $6+4\sqrt{2}$ and $7+4\sqrt{2}$, respectively.
Applying Lemma~\ref{lem:quarter-walk} and Proposition~\ref{prop:heap-walk} and taking logarithms of the resulting counts gives the claimed ITLBs.
The corresponding bounds for $\{\RMin,\RMax\}$ follow from Proposition~\ref{prop:query-eq}.

\section{Upper bounds}\label{sec:ub}

We first introduce some notations used throughout this section.

\subparagraph{Word-access queries.}
Given a bit sequence $S$ and a position $i$, the \emph{word-access} query on $S$, denoted by $\wordacc(i)$, returns the length-$w$ substring of $S$ starting at position $i$, where $w=\Theta(\log n)$ is the word size.

Since we work in the word-RAM model with word size $w$, supporting this operation in constant time lets us access a represented bit sequence as if it were stored explicitly, with no asymptotic overhead in access time.

\subparagraph{Rank and select queries.} Given a bit array $B[1,\dots,n]$ and a pattern $p\in\{0,1\}^+$, the query $\rank_p(i,B)$ returns the number of occurrences of $p$ in $B[1,\dots,i]$, and $\select_p(j,B)$ returns the first position of the $j$-th occurrence of $p$ in $B$. We will use the following lemma.

\begin{lemma}[\cite{rrr2007, munro2001}]\label{lem:rrr}
    Let $B[1,\dots,n]$ be a bit array with $m$ ones. Then $B$ can be stored in $\log\binom{n}{m}+o(n)\le n+o(n)$ bits while supporting $\wordacc$, $\rank_p$ and $\select_p$ in constant time where $p\in\{0, 1\}$.
\end{lemma}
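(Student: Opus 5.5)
The statement is Lemma~\ref{lem:rrr}, a standard result from the literature. The plan is to assemble it from the two cited constructions rather than reprove them. The space bound $\log\binom{n}{m}+o(n)$ together with constant-time $\rank_1$, $\rank_0$, $\select_1$, $\select_0$ is exactly the fully indexable dictionary of Raman, Raman and Rao~\cite{rrr2007}. The inequality $\log\binom{n}{m}\le n$ is trivial, since $\binom{n}{m}\le 2^n$. So the only parts that need an argument are the following.
\begin{enumerate}
\item $\rank_0$ and $\select_0$ must be supported as well as $\rank_1$ and $\select_1$.
\item $\wordacc$ must run in constant time on the compressed representation.
\end{enumerate}

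For the first part, $\rank_0(i,B)=i-\rank_1(i,B)$. For $\select_0$, I would apply the same block/superblock construction to the complement $\bar B$. This costs no extra space for the compressed part, because $\log\binom{n}{m}=\log\binom{n}{n-m}$ and the class/offset encoding of each block determines both $B$ and $\bar B$. Only a second $o(n)$-bit select directory is needed, built in the style of Munro~\cite{munro2001} and Clark--Munro.

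For the second part, recall the RRR layout. $B$ is cut into blocks of $b=\tfrac12\log n$ bits. Each block is stored as a class $c$ (its popcount, $O(\log\log n)$ bits) and an offset of $\lceil\log\binom{b}{c}\rceil$ bits. Superblock pointers locate the offsets in $O(1)$ time. A lookup table of $2^{b}\cdot\mathrm{poly}(b)=O(\sqrt n\,\mathrm{polylog}\, n)=o(n)$ bits decodes a (class, offset) pair into the original $b$ bits.

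A word of length $w=\Theta(\log n)$ starting at position $i$ spans $O(w/b)=O(1)$ consecutive blocks. We can therefore decode each of them by table lookup and concatenate with shifts and masks in $O(1)$ time. Strictly, $\wordacc$ with $w=\Theta(\log n)$ touches a constant number of blocks only if $b=\Theta(\log n)$. The choice $b=\tfrac12\log n$ satisfies this, so the block size must be fixed with this in mind.

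The main (mild) obstacle is the space accounting. One has to check that the sum of the per-block offset lengths is $\log\binom{n}{m}+O(n/\log n)$, via $\sum_j\log\binom{b}{c_j}\le\log\binom{n}{m}$ plus one bit of rounding per block. One also has to check that the class fields, superblock pointers, both select directories and the decoding tables together cost $O(n\log\log n/\log n)=o(n)$. Both checks are the standard ones in~\cite{rrr2007}, so I would cite them rather than redo them. I would only remark that adding $\wordacc$ and the $0$-variants changes the redundancy by at most a constant factor.
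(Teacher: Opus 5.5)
Your proposal is correct and matches the paper, which gives no proof beyond citing \cite{rrr2007,munro2001}. Your additional justification is the standard argument that the citation relies on: $\wordacc$ follows by table-decoding the $O(1)$ class/offset blocks of length $\tfrac12\log n$ that cover a word, and $\rank_0(i,B)=i-\rank_1(i,B)$. The only (harmless) excess is the separate select directory for $\bar B$: the RRR fully indexable dictionary already supports $\select_0$, since it indexes both the set and its complement.
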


\subparagraph{Balanced parentheses.}
For an ordinal tree $T$, the \emph{balanced-parentheses sequence} of $T$, denoted by $\BP(T)$, is obtained by a preorder traversal of $T$: append an opening parenthesis when a node is first visited, and append a closing parenthesis when the traversal leaves the node.
Thus, if $T$ has $n$ nodes, $\BP(T)$ has length $2n$ and can be stored explicitly using $2n$ bits.
With $o(n)$ additional bits, $\BP(T)$ supports various tree navigation queries on $T$ in constant time~\cite{navarro2014}.

In this section, we mainly use balanced-parentheses sequences to represent 2d-min and 2d-max heaps.
Consider $\Min(A)$ for an array $A$.
We use the following observation of \cite{jokim2025}: the number of closing parentheses immediately preceding the opening parenthesis of node $i$ in $\BP(\Min(A))$ is exactly the number of stack pops performed just before node $i$ is appended in the construction algorithm.

\subparagraph{Color sequences.}
Given an array $A$, let $\Xmin(A)$ be the color bit sequence of $\cMin(A)$, storing the colors of the nodes in preorder, where $0$ denotes blue and $1$ denotes red. Define $\Xmax(A)$ analogously for $\cMax(A)$.

\subsection{Main ideas and key lemmas}

We first outline our approach and state the key lemmas used in our constructions.
Their proofs are deferred to Appendix~\ref{app:proof-ub}.

Our approach builds on the combined representation of $(\Min(A),\Max(A))$ for arrays $A\in\Aneq_n$ proposed by Jo and Kim~\cite{jokim2025}.
They exploit the dependency between the two heaps described in Proposition~\ref{prop:minmax} to reduce the space needed to represent their BP sequences.
Specifically, their representation consists of two bit arrays $U$ and $S$, defined in Section~\ref{subsec:ub-all}, of lengths $n-1$ and at most $2n-2$ bits, respectively.

To represent the colored heap pair $(\cMin(A),\cMax(A))$, they additionally store color information, recording only the colors that are not determined by the coloring rule and Proposition~\ref{prop:cminmax}.
Their encoding of $S$, together with the color information, uses $(1+\log 3)n+o(n)$ bits, yielding a total space usage of $(2+\log 3)n+o(n)$ bits.
With $o(n)$ additional bits, their representation supports $\wordacc$ on both $\BP(\Min(A))$ and $\BP(\Max(A))$ in constant time.
Based on these word-access operations, they support the desired queries in $O(\log^{(\ell)}n)$ time for any positive constant integer $\ell$.

We improve both the space bound and the query time.
To improve the space bound, we first show that $S$ is a contiguous substring of $\BP(T)$ for some ordinal tree $T$; see Proposition~\ref{prop:excess}.
We further show that the stored color bits can be assigned to the non-leftmost children of $T$, yielding a colored 2d-min heap realizable by an array in $\Agen$.
By storing $U$ and applying Fischer's succinct representation~\cite{fischer2011} to this colored heap, we can achieve the desired space bound $n+n\log(3+2\sqrt{2})+o(n)=n\log(6+4\sqrt{2})+o(n)$.

However, Fischer's representation uses tree covering~\cite{farzan2008} and does not directly provide constant-time $\wordacc$ on $\BP(T)$.
We need this operation to support $\wordacc$ on $S$ and, in turn, on both $\BP(\Min(A))$ and $\BP(\Max(A))$.
We therefore develop an alternative succinct representation of colored 2d-min heaps that stores the tree topology and the color information separately.
The following lemma summarizes this representation.

\begin{restatable}{lemma}{restUbCmin}\label{lem:ub-cmin}
    Given an array $A\in\Agen_n$, there is a data structure using $n\log(3+2\sqrt{2})+o(n)\approx 2.54n+o(n)$ bits that supports $\wordacc$ on both $\BP(\Min(A))$ and $\Xmin(A)$ in constant time.
\end{restatable}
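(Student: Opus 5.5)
We need an encoding of $(\BP(\Min(A)),\Xmin(A))$ in $n\log(3+2\sqrt2)+o(n)$ bits with constant-time $\wordacc$ on both sequences. The plan is to compress fixed-size blocks by a single global enumeration and to use the correspondence between BP prefixes and lattice paths to keep the count of the concatenated blocks near the little Schröder bound.

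\textbf{Step 1: the combinatorial object and its count.} Interleave the two sequences. By Proposition~\ref{prop:cminmax} and the coloring rule, colors are forced in two cases. A leftmost child is blue. A non-leftmost child whose left sibling is a leaf is red in the general setting (for $\Agen$ with equal consecutive values it is blue, and any forced color suffices). So only non-leftmost children whose left sibling is internal carry a free bit. Scan $\BP(\Min(A))$ and write the opening parenthesis of each such node as one of two symbols, $(_0$ or $(_1$. The result is a string $P$ over $\{(,(_0,(_1,)\}$ from which both $\BP(\Min(A))$ and $\Xmin(A)$ are recovered. The condition ``left sibling internal'' is local: a colored opening parenthesis is allowed exactly when the preceding closing run has length $\ge 2$ and is preceded by a closing parenthesis.

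The number of valid strings is the little Schröder count, with growth rate $3+2\sqrt2$; this is Fischer's lower bound. More importantly, for any excess window I will bound the number of valid substrings of length $L$ with prescribed start/end excess and prescribed boundary context by $(3+2\sqrt2)^{L/2}\cdot\mathrm{poly}(L)$. I would do this with a transfer-matrix argument on the local automaton: track the last symbol and the run length capped at $2$, and compute the dominant eigenvalue, which should be $\sqrt{3+2\sqrt2}$ per parenthesis symbol.

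\textbf{Step 2: blocking.} Cut $\BP(\Min(A))$ into blocks of $L=\epsilon\log n$ parentheses, and let each block carry the color bits of the opening parentheses inside it. For each block, store the following:
\begin{itemize}
\item its starting excess relative to the block, or just enough boundary information, namely the last $O(1)$ symbols of the previous block;
\item the number of colored positions in the block;
\item an index into the list of valid (block, boundary-state) strings with those parameters.
\end{itemize}
The index takes $\log$ of the count, which is $(L/2)\log(3+2\sqrt2)+O(\log L)$ bits. Summed over $2n/L$ blocks, this gives $n\log(3+2\sqrt2)+O(n\log L/L)=n\log(3+2\sqrt2)+o(n)$ bits.

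The indices have variable lengths, so I store them concatenated, with block start pointers sampled every $\log^2 n$ blocks plus relative offsets in $O(\log\log n)$ bits. This costs $o(n)$ overall.

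\textbf{Step 3: constant-time word access.} A table of size $2^{L}\cdot\mathrm{poly}(L)=n^{\epsilon}\mathrm{polylog}\,n$ decodes an index, together with its boundary state, into the block's BP bits and its color bits. The color bits are reconstructed by placing forced colors and the free bits stored in the block.

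$\wordacc(i)$ on $\BP(\Min(A))$ touches $O(1)$ blocks once $w/L=O(1)$; choose $L$ proportional to $w$, which is compatible with $\epsilon$ small if we decode several blocks. For $\Xmin(A)$, position $i$ is the $i$-th node in preorder. Locating it requires $\select$ over opening parentheses at block granularity, so I also store a sampled count of opening parentheses per block, again in $o(n)$ bits, and use it to map $i$ to a block.

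\textbf{Main obstacle.} The hard part is Step 1's per-block counting bound. The global count is Schröder-like, but one must ensure that prescribing boundary contexts and splitting into blocks does not lose more than $O(\log L)$ bits per block. If the transfer-matrix eigenvalue argument is messy, I would instead bound the per-block count by the number of Schröder-type paths with arbitrary start and end heights. Those paths number at most $\mathrm{poly}(L)$ times the corresponding excursion counts, by a standard ballot-type bound.
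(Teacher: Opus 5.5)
\textbf{Step 1 misidentifies the free colors for $\Agen$.} Suppose node $i$ is a non-leftmost child whose left sibling is a leaf. That sibling is node $i-1$, and node $i$ is blue iff $A[i-1]=A[i]$. In a general array this varies from position to position, so the bit is free, not forced. For example, $A=(1,1)$ and $A=(2,1)$ have the same $\Min(A)$ (a root with two leaf children). Node $2$ is blue in one and red in the other, and $\NSV(1)$ differs, yet your string $P$ is the same for both.

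Your count is also inconsistent with your alphabet. Strings in which only children with an internal left sibling carry a color are exactly the objects of rule~\eqref{rule:cmin-no-cons}. Their growth rate is $2+2\sqrt2$, not the little Schr\"oder rate $3+2\sqrt2$. The repair is to color every opening parenthesis that follows a closing one. The count then becomes $\sum_k N(n,k)2^{k-1}$, and your ``main obstacle'' disappears. Writing $b$ for your block length, the transfer matrix on the states ``last symbol open/closed'' is $\begin{pmatrix}1&1\\2&1\end{pmatrix}$, with dominant eigenvalue $1+\sqrt2=\sqrt{3+2\sqrt2}$. So even unconstrained blocks number $O((1+\sqrt2)^b)$, and no excess or ballot argument is needed.

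\textbf{Step 3 does not give constant-time $\wordacc$ on $\Xmin(A)$.} Your blocks are cut by BP position, but $\Xmin(A)$ is indexed by preorder, and consecutive nodes can be separated by arbitrarily long closing runs. Take a path $v_1,\dots,v_D$ with $D\ge wb$, and hang a leaf $u_j$ under $v_{D-jb}$ for $j=1,\dots,w-1$. Then $v_D,u_1,\dots,u_{w-1}$ are consecutive in preorder, and their opening parentheses lie in $w$ distinct blocks. Each $u_j$ has an internal left sibling and hence a free color. One word of $\Xmin(A)$ thus needs $\Theta(w)=\Theta(\log n)$ block decodings. Sampled opening-parenthesis counts only locate the first block.

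The missing idea is to split the data into streams, each read contiguously under its own indexing, as the paper does:
\begin{itemize}
\item a leaf array $L[1,\dots,n]$ in preorder, with $k$ ones;
\item an array $R[1,\dots,n]$ in closing-parenthesis order marking closes followed by opens, with $k-1$ ones;
\item the $k-1$ free colors $Y$, stored plainly in preorder.
\end{itemize}
By Lemma~\ref{lem:rrr}, $L$ and $R$ take $\log\binom{n}{k}+\log\binom{n}{k-1}+o(n)=\log N(n,k)+o(n)$ bits. A BP microblock is rebuilt from one word of $L$ and one of $R$. A block of $\Xmin(A)$ starting at node $p$ is rebuilt from one word of $L$ and one word of $Y$ at offset $\rank_1(p-2,L)$. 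The bound $\log N(n,k)+k\le n(2H(\alpha)+\alpha)+o(n)\le n\log(3+2\sqrt2)+o(n)$ follows by maximizing at $\alpha=k/n=2-\sqrt2$.
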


To improve the query time, we adapt the tree-covering representation of Hamada et al.~\cite{hamada2024}, which uses balanced parentheses.
Their representation allows the BP sequence of each micro tree to be obtained in constant time from the global BP sequence.
Similarly, given constant-time $\wordacc$ on $\BP(\Min(A))$ and $\Xmin(A)$, we obtain the BP and color sequences of each micro tree of $\cMin(A)$ in constant time.
We use each pair of local sequences as an index into a precomputed table to support local operations in constant time.
The following lemma formalizes this query support.

\begin{restatable}{lemma}{restCbp}\label{lem:cbp}
    Given an array $A\in\Agen_n$, suppose there is a data structure that supports $\wordacc$ on both $\BP(\Min(A))$ and $\Xmin(A)$ in constant time.
    Then, with $o(n)$ additional bits, $\PSV$, $\NSV$, and $\RMin$ on $A$ can be supported in constant time.
\end{restatable}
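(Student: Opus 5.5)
The plan is to adapt the balanced-parentheses tree-covering scheme of Hamada et al.~\cite{hamada2024}, using the two $\wordacc$ oracles in place of explicitly stored bit strings. Concretely, we decompose $\Min(A)$ into mini trees of size $\Theta(\log^2 n)$ and micro trees of size at most $\frac{1}{4}\log n$ (with one shared root per component). In the Hamada et al.\ scheme each micro tree corresponds to $O(1)$ contiguous segments of the global BP sequence. Hence its local BP sequence can be assembled in constant time from $O(1)$ calls to $\wordacc$ on $\BP(\Min(A))$, plus $O(1)$ words of stored boundary information. The nodes of a micro tree appear in preorder within those same segments, so the matching segments of $\Xmin(A)$ give the local color sequence. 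We locate them via $\rank$ on opening parentheses, which is supported in $o(n)$ bits by standard BP auxiliary structures built on top of $\wordacc$. The pair (local BP, local colors) has at most $\frac{3}{4}\log n$ bits, so it indexes a universal table of $O(n^{3/4}\mathrm{polylog}\, n)=o(n)$ bits. That table answers every local query: parent, previous sibling, next sibling of a given color, the $k$-th child with a color constraint, and rank of same-colored siblings.

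Next we reduce the queries to navigation on $\cMin(A)$, following Fischer~\cite{fischer2011} and Jo and Satti~\cite{josatti2016}. $\PSV(i)$ is the parent of node $i$. For $\NSV(i)$, nodes with equal values form maximal runs of consecutive siblings linked by blue nodes. $\NSV(i)$ is then the node right after the subtree of the last node of $i$'s blue run, i.e.\ the next red sibling $x$ of $i$ if one exists (answer: first preorder node of $x$), and otherwise the node following the parent's subtree. For $\RMin(i,j,k)$, let $r=\RMin(i,j,n)$, computed from BP via the Fischer--Heun reduction. The minima in $[i,j]$ are the siblings of $r$ in its blue run that lie at or after position $i$. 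We therefore need the start of that run, clipped at $i$, and then the $k$-th sibling from there, capped at $r$.

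Each of these primitives becomes a constant number of operations: BP navigation, which is $o(n)$ extra bits over $\wordacc$ on BP, and ``next/previous red sibling'' or ``$k$-th sibling''. A $\rank$/$\select$ on colors restricted to a sibling list does not follow from standard BP structures. For it, we use the three-level decomposition: inside a micro tree, the table answers; across micro trees within a mini tree, we store per-micro-tree summaries in $O(\log\log n)$-bit fields; across mini trees, we store $O(\log n)$-bit pointers for $O(n/\log^2 n)$ mini trees. This is the usual $o(n)$ accounting.

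The main obstacle I expect is the sibling-level color search that crosses micro or mini tree boundaries. A node's children may be spread over many micro trees sharing that node as root, and a red-sibling search or $k$-th-blue-run computation may have to skip whole micro trees. I would handle this by storing, for each micro tree whose root is shared, a constant number of $O(\log\log n)$-bit summaries, namely whether its portion of the child list contains a red child, and the relative index of the first and last such child. For each mini tree we store analogous $O(\log n)$-bit summaries, plus a sparse $\rank$/$\select$ structure (Lemma~\ref{lem:rrr}) over the sequence of ``portion contains red'' bits along each high-degree node's child list. Verifying that these summaries fit in $o(n)$ bits, and that run boundaries are recovered in $O(1)$ probes, is the step requiring the most care.
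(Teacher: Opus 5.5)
Your proposal is correct and takes essentially the same route as the paper. Both reduce $\PSV$, $\NSV$, and $\RMin$ to standard BP navigation plus previous/next red-sibling queries. Both obtain each micro tree's local BP and color strings from $O(1)$ segments via $\wordacc$, following Hamada et al., and answer local queries with a universal table of $n^{3/4}$ entries. Both then handle red-sibling searches that cross micro-tree boundaries with small per-component summaries, in the spirit of Fischer's $\NRS$ support.

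The differences are minor. The paper cites Jo and Kim's Lemma~5 for the reduction, which you instead rederive correctly from the blue-run structure of $\cMin(A)$. The paper also uses Hamada et al.'s two-level cover, whose top-tier tree plays the role of your mini-tree layer when adapting Fischer's construction.
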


Finally, we show that our representation supports constant-time $\wordacc$ on $\BP(\Min(A))$, $\BP(\Max(A))$, $\Xmin(A)$, and $\Xmax(A)$.
Applying Lemma~\ref{lem:cbp} to both colored heaps then yields constant-time support for the desired queries.
This proves Theorem~\ref{thm:all-ub} for arrays in $\Aneq_n$.
We then extend our construction to general arrays by applying the reduction of Jo and Kim~\cite{jokim2025}.

As a further consequence, combining Lemmas~\ref{lem:ub-cmin} and~\ref{lem:cbp} yields the following corollary.

\begin{corollary}\label{cor:ub-cmin}
    Given an array $A\in\Agen_n$, there is a succinct data structure using $n\log(3+2\sqrt{2})+o(n) \approx 2.54n+o(n)$ bits that supports $\PSV$, $\NSV$, and $\RMin$ on $A$ in constant time.
\end{corollary}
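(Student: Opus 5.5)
The corollary follows by composing the two preceding lemmas and checking that the space bound is succinct.

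First, apply Lemma~\ref{lem:ub-cmin} to $A\in\Agen_n$. This gives a structure of $n\log(3+2\sqrt{2})+o(n)$ bits supporting constant-time $\wordacc$ on both $\BP(\Min(A))$ and $\Xmin(A)$.

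Second, feed this structure into Lemma~\ref{lem:cbp}, which adds only $o(n)$ bits and yields constant-time $\PSV$, $\NSV$, and $\RMin$. The total space is $n\log(3+2\sqrt{2})+o(n)$, since a sum of two $o(n)$ terms is $o(n)$.

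Third, establish succinctness by matching this against the ITLB. Two arrays agree on all $\PSV$ and $\NSV$ answers if and only if they have the same colored 2d-min heap~\cite{fischer2011}. The ITLB for $\{\PSV,\NSV\}$ on $\Agen_n$ is therefore $\log$ of the number of realizable colored 2d-min heaps, and by Proposition~\ref{prop:query-eq} the same bound holds with $\RMin$ included. Fischer~\cite{fischer2011} showed that this count is a little Schr\"oder number, which grows as $\Theta\bigl((3+2\sqrt{2})^n/n^{3/2}\bigr)$. The lower bound is thus $n\log(3+2\sqrt{2})-\Theta(\log n)$, matching the upper bound up to lower-order terms.

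If one wants the lower bound inside the paper's own framework rather than by citation, I would take a generating tree with labels $(k)$ equal to the stack size and colors. A child obtained after exactly one pop has a leaf as left sibling. Such a child is forced red unless its value equals its sibling's, and equality corresponds to the blue option of pop-one-and-equal. I would derive the resulting succession rule for the colored min-heap, then obtain the same $3+2\sqrt{2}$ growth rate by the kernel method applied to a one-variable equation. This part of the argument is optional. The only genuine care point is verifying that the cited count applies to general arrays $\Agen_n$ rather than to permutations; Fischer's lower bound is stated exactly for general arrays, so I expect no real obstacle.
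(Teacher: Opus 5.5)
Your proposal is correct and matches the paper's argument. The corollary comes from plugging the $n\log(3+2\sqrt{2})+o(n)$-bit structure of Lemma~\ref{lem:ub-cmin} into Lemma~\ref{lem:cbp}, and succinctness comes from Fischer's little Schr\"oder lower bound for general arrays (transferred to $\RMin$ via Proposition~\ref{prop:query-eq}), exactly as the paper's summary table records; your optional generating-tree sketch is also sound, since the resulting rule $(k)\rightsquigarrow(1)^2(2)^2\cdots(k)^2(k+1)$ gives the little Schr\"oder generating function with growth constant $3+2\sqrt{2}$ via the kernel method.
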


\subsection{Encodings for supporting all queries}\label{subsec:ub-all}

In this section, we give a proof of Theorem~\ref{thm:all-ub}, restated below.

\restAllUb*

We first describe the representation of Jo and Kim~\cite{jokim2025} and then show how to improve it.
For an array $A\in\Aneq_n$, their representation consists of the following components:
\begin{itemize}
    \item $U[1,\dots,n-1]$: $U[i]=0$ iff node $i$ is an internal node in $\cMin(A)$.
    By Proposition~\ref{prop:minmax}, equivalently, $U[i]=1$ iff node $i$ is an internal node in $\cMax(A)$.
    \item $S=s_2\cdots s_n$: For $i\ge 2$, let $H\in\{\Min(A), \Max(A)\}$ be the heap in which node $i$ is a non-leftmost child of its parent.
    Let $p_i$ be the number of closing parentheses immediately preceding the opening parenthesis of node $i$ in $\BP(H)$.
    Equivalently, $p_i$ is the number of stack pops performed just before node $i$ is appended in the construction of $H$.
    We define $s_i=1^{p_i-1}0$.
    By Proposition~\ref{prop:minmax}, no pop occurs in the other construction, so $S$ encodes all close-runs for both $\Min(A)$ and $\Max(A)$.
    \item $c_{\min}$ (resp. $c_{\max}$): the sequence of colors of nodes in $\cMin(A)$ (resp. $\cMax(A)$) that are non-leftmost children and whose left sibling is not a leaf, listed in increasing order of their preorder numbers.
    By Propositions~\ref{prop:minmax} and~\ref{prop:cminmax}, all other colors are forced.
\end{itemize}

\begin{figure}[t]
\centering
\forestset{
    heap tree/.style={
        for tree={
            circle,
            draw,
            line width=1pt,
            minimum size=13mm,
            inner sep=0pt,
            font=\Huge,
            edge={line width=1pt},
            l sep=6mm,
            s sep=2mm,
            anchor=center,
            child anchor=north,
            parent anchor=south
        }
    },
    blue node/.style={fill=blue!18},
    red node/.style={fill=red!18}
}

\resizebox{0.45\textwidth}{!}{%
\begin{tabular}{@{}c@{\hspace{18mm}}c@{}}
\begin{forest}
baseline=(current bounding box.north),
heap tree
[0, blue node
    [1, blue node]
    [2, red node
        [3, blue node]
    ]
    [4, red node
        [5, blue node
            [6, blue node]
        ]
        [7, red node
            [8, blue node]
        ]
    ]
]
\end{forest}
&
\begin{forest}
baseline=(current bounding box.north),
heap tree
[0, blue node
    [1, blue node
        [2, blue node]
    ]
    [3, red node
        [4, blue node]
        [5, red node]
    ]
    [6, red node
        [7, blue node]
        [8, red node]
    ]
]
\end{forest}
\\
\raisebox{-7mm}{{\Huge $\cMin(A)$}} &
\raisebox{-7mm}{{\Huge $\cMax(A)$}}
\end{tabular}%
}

\vspace{2mm}
{\small
\setlength{\arraycolsep}{4pt}
$
\begin{array}{|c!{\color{black!25}\vrule}c!{\color{black!25}\vrule}c!{\color{black!25}\vrule}c!{\color{black!25}\vrule}c!{\color{black!25}\vrule}c!{\color{black!25}\vrule}c!{\color{black!25}\vrule}c!{\color{black!25}\vrule}c!{\color{black!25}\vrule}c!{\color{black!25}\vrule}c|}
\hline
i & 0 & 1 & 2 & 3 & 4 & 5 & 6 & 7 & 8 & \\
\hline
A[i] &  & 4 & 2 & 5 & 1 & 3 & 6 & 2 & 4 &  \\
\BP(\cMin(A)) & ( & ( & )( & ( & ))( & ( & ( & ))( & ( & )))) \\
\BP(\cMax(A)) & ( & ( & ( & ))( & ( & )( & ))( & ( & )( & ))) \\
U &  & 1 & 0 & 1 & 0 & 0 & 1 & 0 &  &  \\
S=s_2\cdots s_n &  &  & 0 & 10 & 10 & 0 & 10 & 10 & 0 &  \\
c_{\min} &  &  &  &  & 1 &  &  & 1 &  &  \\
c_{\max} &  &  &  & 1 &  &  & 1 &  &  &  \\
\hline
\end{array}
$
}
\vspace{-1mm}
\caption{An example of the sequences used in \Cref{subsec:ub-all} for an array $A\in\Aneq_8$.}
\label{fig:cminmax-no-cons}
\end{figure}

See Figure~\ref{fig:cminmax-no-cons} for an example of the above sequences.
We now describe our representation of $S$.
In what follows, interpret $0$ as an open parenthesis and $1$ as a close parenthesis in $S$.
For a bit array $B$ and a position $j$, define $\excess_B(j)$ as the number of zeros minus the number of ones in $B[1,\dots,j]$.
We also define $\excess(B)=\excess_B(|B|)$, with $\excess(B)=0$ if $B$ is empty.
We prove the following proposition.

\begin{proposition}\label{prop:excess}
    $S$ has nonnegative excess at every position.
\end{proposition}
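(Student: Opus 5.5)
The plan is to read the excess of $S$ directly off the stack sizes of the two construction algorithms, using the labels $(k,\ell)$ from Section~\ref{subsec:lb-minmax}. Since $S$ is defined only for $A\in\Aneq_n$, every step $i\ge 2$ falls under the first case of Proposition~\ref{prop:minmax}.

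\textbf{Setup.} For $1\le i\le n$, let $k_i$ and $\ell_i$ be the sizes, excluding index $0$, of the stacks for $\Min(A)$ and $\Max(A)$ after step $i$; so $k_1=\ell_1=1$. Fix a step $i\ge 2$. By Proposition~\ref{prop:minmax}, exactly one stack has pops, namely the stack of the heap $H$ in which node $i$ is a non-leftmost child. That stack loses $p_i\ge 1$ entries and gains one, and the other stack gains one. Hence
\[
  k_i+\ell_i = k_{i-1}+\ell_{i-1}+2-p_i .
\]
The block $s_i=1^{p_i-1}0$ has excess $-(p_i-1)+1=2-p_i$. By induction on $i$, the excess of the prefix $s_2\cdots s_i$ is therefore exactly $k_i+\ell_i-2$. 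This is nonnegative because both stacks always contain at least the index of the most recent node, so $k_i,\ell_i\ge 1$.

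\textbf{Positions inside a block.} Inside $s_i$ the prefix excess first decreases through the $p_i-1$ ones and then increases by one at the final zero. So it suffices to check the point just before that zero, where the excess is
\[
  k_{i-1}+\ell_{i-1}-2-(p_i-1).
\]
This is the only step needing care. Assume without loss of generality that the pops occur in the $\Max(A)$ stack; the other case is symmetric. Then $p_i\le \ell_{i-1}$, since one cannot pop more real indices than the stack holds: index $0$ carries the sentinel $+\infty$ and is never popped. Substituting gives the lower bound
\[
  k_{i-1}+\ell_{i-1}-1-p_i \;\ge\; k_{i-1}-1 \;\ge\; 0,
\]
using $k_{i-1}\ge 1$.

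Combining the block boundaries with the interior points, every prefix of $S$ has nonnegative excess, which proves Proposition~\ref{prop:excess}.
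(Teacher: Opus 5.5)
Your proof is correct and follows the paper's argument: the same invariant that the prefix excess equals the total non-bottom stack size minus $2$, and the same within-block bound. For that bound, the popping stack holds at least $p_i$ non-bottom indices and the other stack holds at least one, so the excess stays nonnegative through the $p_i-1$ closing symbols.
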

\begin{proof}
    We construct $S$ by appending $s_i$ for $i=2,\dots,n$, in parallel with the two stack constructions of $\Min(A)$ and $\Max(A)$. Let $S'$ denote the current prefix constructed so far, and let $h$ be the sum of the two stack sizes, excluding the bottom index $0$ from each stack.

    We first maintain the invariant $\excess(S')=h-2$ after each string $s_i$ is appended. Initially, before $s_2$ is appended, $S'$ is empty, and each stack contains only the indices $0$ and $1$; hence $h=2$ and $\excess(S')=0=h-2$. Now consider step $i\ge2$. By Proposition~\ref{prop:minmax}, pops occur in exactly one of the two stacks. If $p_i$ indices are popped from that stack, then $h$ changes by $2-p_i$, since one stack loses $p_i$ indices and both stacks push index $i$. On the other hand, appending $s_i=1^{p_i-1}0$ to $S'$ changes $\excess(S')$ by $-(p_i-1)+1=2-p_i$. Thus the invariant is preserved.

    It remains to check that the excess does not become negative while each string $s_i$ is being appended. Before appending $s_i$, the stack in which the pops occur contains the $p_i$ indices to be popped, and the other stack contains at least one non-bottom index. Thus $h\ge p_i+1$, so $\excess(S')=h-2\ge p_i-1$ by the invariant. During the first $p_i-1$ symbols of $s_i$, the excess decreases by exactly $p_i-1$, and therefore never becomes negative. The final symbol is an open parenthesis, so it only increases the excess. Hence $S$ has nonnegative excess at every position.
\end{proof}

By the proposition, prepending an opening parenthesis to $S$ and appending $\excess(S)+1$ closing parentheses yields $\BP(T)$ for some ordinal tree $T$ with $n$ nodes.
Let $k=|c_{\min}|+|c_{\max}|$.

We now show that $T$ has $k$ non-leftmost child nodes.
In the tree $T$, the $i$-th node in preorder is a non-leftmost child exactly when $p_i\ge2$, since $s_i=1^{p_i-1}0$ and hence the opening parenthesis corresponding to this node is preceded by a closing parenthesis exactly in this case.
In terms of $\cMin(A)$ and $\cMax(A)$, the condition $p_i\ge2$ means that, in the heap where node $i$ is inserted as a non-leftmost child, its left sibling is internal.
By the definition of $c_{\min}$ and $c_{\max}$, such indices $i$ are in one-to-one correspondence with the color bits stored in these two sequences.
Hence $T$ has $k$ non-leftmost child nodes.

Let $Y=c_{\min}\cdot c_{\max}$.
We assign the bits of $Y$, in order, to the non-leftmost children of $T$ listed in preorder, and color the root and all leftmost children blue.
The resulting colored tree is realizable as a colored 2d-min heap of an array in $\Agen$.
We also store $|c_{\min}|$ using $O(\log n)$ bits to locate the boundary between $c_{\min}$ and $c_{\max}$.
Thus, by using Lemma~\ref{lem:ub-cmin}, together with $U$ and all auxiliary structures, the total space usage is $n+n\log(3+2\sqrt{2})+o(n)=n\log(6+4\sqrt{2})+o(n)$ bits.

We now extend the construction to arrays in $\Agen_n$.
Given an array $A\in\Agen_n$, define a bit array $C[1,\dots,n]$ by setting $C[1]=0$ and, for $i\ge2$, setting $C[i]=1$ iff $A[i-1]=A[i]$.
Let $A'\in\Aneq_m$ be the subsequence of $A$ induced by the positions $i$ with $C[i]=0$, where $m$ is the number of zeros in $C$.
Our representation for $A$ consists of the above representation for $A'$, together with $C$.
See Figure~\ref{fig:cminmax} for an example of this representation.

By Lemma~\ref{lem:rrr}, $C$ can be stored using $\log\binom{n}{m}+o(n)$ bits.
For any constant $c$ and integers $0\le m\le n$, we have $cm+\log\binom{n}{m}\le n\log(1+2^c)$~\cite{tsur2019}.
Applying this inequality with $c=\log(6+4\sqrt{2})$, the total space usage is at most $m\log(6+4\sqrt{2})+\log\binom{n}{m}+o(n)\le n\log(7+4\sqrt{2})+o(n)$ bits.

The query support follows from the following lemma (proof in Appendix~\ref{app:proof-ub}).

\begin{restatable}{lemma}{restAllWordacc}\label{lem:all-wordacc}
    For arrays $A$ in either $\Aneq_n$ or $\Agen_n$, the corresponding representation constructed in this subsection can be augmented with $o(n)$ bits to support constant-time $\wordacc$ on $\BP(\Min(A))$, $\BP(\Max(A))$, $\Xmin(A)$, and $\Xmax(A)$.
\end{restatable}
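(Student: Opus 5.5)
First I would handle $A\in\Aneq_n$; afterwards I would lift the result to $\Agen_n$ through $C$. From Lemma~\ref{lem:ub-cmin} we already have constant-time $\wordacc$ on $\BP(T)$ and on the color sequence of the colored tree built from $T$ and $Y$. Since $S$ is $\BP(T)$ with its first symbol and its final run of closing parentheses removed, $\wordacc$ on $S$ follows directly. Likewise $c_{\min}$ and $c_{\max}$ are the colors of the non-leftmost children of $T$, split at the stored boundary $|c_{\min}|$, and $U$ is stored explicitly by Lemma~\ref{lem:rrr}.

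The central step is to produce a word of $\BP(\Min(A))$ starting at a given position $j$ from these sequences. Let $i$ be the node whose opening parenthesis lies at or just before $j$. Its opening parenthesis is preceded in $\BP(\Min(A))$ by $p_i$ closing parentheses when node $i$ is a non-leftmost child in $\Min(A)$, and by none otherwise. By Proposition~\ref{prop:minmax}, node $i$ is a non-leftmost child in $\Min(A)$ exactly when node $i-1$ is a leaf in $\Min(A)$, which means $U[i-1]=1$. So the close-run before node $i$ equals $s_i$ (that is, $1^{p_i-1}0$) with one extra $1$ prepended and its trailing $0$ read as the opening parenthesis, when $U[i-1]=1$; it is a bare opening parenthesis when $U[i-1]=0$. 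This means $\BP(\Min(A))$ is a symbol-wise transduction of the interleaving of $U$ and $S$, and the same holds for $\Max(A)$ with the roles of $U[i-1]=0$ and $U[i-1]=1$ exchanged. To locate node $i$ and the matching offset in $S$, I would use a two-level directory. Every $\Theta(\log^2 n)$ positions of $\BP(\Min(A))$ we sample the current node index and the offset into $S$, and at a finer level of $\Theta(\log n)$ positions we store relative offsets. This costs $o(n)$ bits. Inside a block, the next $w/c$ bits of $U$ and the corresponding bits of $S$ are read with $\wordacc$ and passed to a precomputed table of size $2^{w/2}$ that outputs the merged parenthesis word and the amount consumed. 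A constant number of such lookups assembles $w$ output bits.

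The colors are handled the same way. $\Xmin(A)$ has blue at the root and at leftmost children, red at a non-leftmost child whose left sibling is a leaf ($p_i=1$, i.e.\ $s_i=0$), and otherwise takes the next bit of $c_{\min}$. So each bit of $\Xmin(A)$ is determined by $U[i-1]$, by whether $s_i=0$, and by a pointer into $c_{\min}$. This pointer can be sampled with the same directory, or obtained as a $\rank$ over positions with $U[i-1]=1$ and $p_i\ge2$. Again a table lookup on $\Theta(\log n)$-bit chunks produces a word.

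For $A\in\Agen_n$, the positions with $C[i]=1$ form an equal-value run. By Proposition~\ref{prop:minmax} each such node is a right sibling leaf, so its parenthesis contribution is ``)('' in both heaps, and by Proposition~\ref{prop:cminmax} its color is blue in both. Using $\rank_0$ and $\select_0$ on $C$, we map positions of $A$ to positions of $A'$, and a further table interleaves the words of $A'$ with the inserted ``)('' pairs and blue bits, with block samples translating output positions. The main obstacle, I expect, is the alignment bookkeeping: output positions in $\BP(\Min(A))$ progress at variable rates relative to $U$, $S$, $c_{\min}$ and $C$, so the sampling must guarantee that each word needs only $O(1)$ table calls while the whole directory stays within $o(n)$ bits.
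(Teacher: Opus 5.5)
Your derivation of $\wordacc$ on $U$, $S$, $c_{\min}$ and $c_{\max}$ from Lemma~\ref{lem:ub-cmin} is correct. Your $\Agen$ lifting is also correct: every bit of $C$ and of $\BP(\Min(A'))$ yields at least one output symbol, so a word of output comes from bounded contiguous windows of the inputs, and your sampling works. This is essentially what the paper imports as Lemma~\ref{lem:gen-bp}.

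The gap is the central step for $\Aneq$. Your transduction from $(U,S)$ to $\BP(\Min(A))$ does not consume $S$ at a bounded rate. A node $i$ with $U[i-1]=0$ contributes a single opening parenthesis to $\BP(\Min(A))$. But its block $s_i$ records pops of the \emph{max} stack, can be arbitrarily long, and sits between the blocks you actually need. For example, after a decreasing run $d_1>\dots>d_M$, alternate values $x_t$ and $y_t$:
\begin{itemize}
\item the $x_t$ are decreasing and below all the $d$'s;
\item the $y_t$ are increasing, and each $y_t$ exceeds $L$ more of the $d$'s than $y_{t-1}$.
\end{itemize}
For $t\ge2$, node $x_t$ has $U[i-1]=1$ and $s_i=10$, while node $y_t$ has $U[i-1]=0$ and $s_i=1^{L+1}0$. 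Locally, $\BP(\Min(A))$ reads $))((\,))((\cdots$ while $S$ reads $10\,1^{L+1}0\,10\,1^{L+1}0\cdots$. With $L\ge w$, which is possible for $\Theta(n/L)$ rounds, one word of $\BP(\Min(A))$ needs pieces from $\Theta(\log n)$ distinct words of $S$. So no directory sampled on output positions, combined with $O(1)$ table lookups on windows of $S$, can produce it. In addition, the offsets into $S$ relative to a superblock are not polylogarithmic, so they do not fit in $O(\log\log n)$ bits either. What you call alignment bookkeeping is precisely the unresolved content of this step.

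The paper does not rebuild this step by hand. It invokes the reconstruction of Jo and Kim (Lemma~\ref{lem:cminmax}), which turns constant-time $\wordacc$ on $U$ and $S$ (exactly what you derived) into constant-time $\wordacc$ on both $\BP(\Min(A))$ and $\BP(\Max(A))$ with $o(n)$ extra bits.

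Your color step has the same defect: the indicators $[s_i=0]$ for a block of consecutive nodes cannot be read from a bounded window of $S$, for the same reason. The paper reads them instead from the leaf array $L$ of $T$ stored by Lemma~\ref{lem:narayana}. That array is indexed by node: for $i\ge3$, $L[i-2]=1$ iff $p_i\ge2$. A block of $\Xmin(A)$ is then one table lookup on equal-length windows of $U$, $L$ and $c_{\min}$, with sampled counts giving the starting offset into $c_{\min}$.
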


Applying Lemma~\ref{lem:cbp} to both colored heaps then yields constant-time support for the desired queries, completing the proof of Theorem~\ref{thm:all-ub}.

\section{Conclusions}
We gave a unified approach to the encoding complexity of every subfamily of $\{\PSV,\PLV,\NSV,\NLV\}$ queries on permutations, arrays without consecutive equal elements, and general arrays. For all the possible subfamilies, we gave both lower bounds on encoding space and matching upper bounds while supporting constant time queries.

Although we settle the succinctness of all these query subfamilies, there are some interesting open problems to extend these results. First, our redundancy term ($O(n\log\log n /\log n)$ bits) is not well-optimized. As in the encoding of range minimum queries~\cite{navarro2014}, reducing the redundancies of the data structure would be interesting. 
Also, there exist redundancy-query time trade-offs for range minimum queries~\cite{liu2020lower}. Extending the results for the queries considered in this paper would also be an interesting direction for future work.

\subparagraph{AI Usage Disclosure}
The authors used OpenAI's ChatGPT (GPT-5 through GPT-5.5) as a research aid through extensive, iterative discussions over several months.
GPT-5.6 Sol was additionally used to refine the language, clarity, and presentation of the manuscript.

The AI's main contribution concerned the lower-bound analysis.
It first suggested applying the generating-tree framework of Bousquet-Mélou~\cite{bousquet-melou2003} to the incremental construction in Algorithm~\ref{alg:minheap}, leading to the generating trees and associated functional equations used in this paper.
After extensive discussions about how to analyze these equations, the AI suggested relating them to quarter-plane walks based on the approaches of Bouvel et al.~\cite{bouvel2018} and Bostan et al.~\cite{bostan2021}.
This connection led to the asymptotic estimates needed to establish the lower bounds.
All claims and proofs were independently verified by the authors.

\bibliography{ref}

\clearpage

\appendix

\begingroup
\raggedbottom
\setlength{\intextsep}{1pt}

\section{Omitted Figures and Tables}\label{app:tables}

\begin{table}[H]
\centering
\resizebox{\textwidth}{!}{
\begin{tblr}{
    colspec={|c|c|r|c|c|r|c|c|},
    cells={valign=m,halign=c},
    column{3,6}={halign=r},
}
\hline
\SetCell[c=2]{c} Data structure &
&
\SetCell[c=3]{c} Lower bound &
&
&
\SetCell[c=3]{c} Upper bound &
&
\\
\cline{1-8}
Queries &
Array type &
\SetCell{c} Space (in bits) &
Sequence &
Reference &
\SetCell{c} Space (in bits) &
Query time &
Reference \\
\hline
{$\PSV, \RMin(\cdot,\cdot,n)$} &
$\Perm, \Aneq, \Agen$ &
$2n-\Theta(\log n)$ &
{Catalan\\\href{https://oeis.org/A000108}{A000108}} &
\cite{fischerheun2011} &
$2n+o(n)$ &
$O(1)$ &
\cite{fischerheun2011}
\\
\hline
\SetCell[r=3]{c} {$\PSV,\NSV$\\$\RMin$} &
$\Perm$ &
$2n-\Theta(\log n)$ &
{Catalan\\\href{https://oeis.org/A000108}{A000108}} &
\cite{fischerheun2011} &
$2n+o(n)$ &
$O(1)$ &
\cite{fischerheun2011, fischer2011}
\\
\cline{2-8}
&
$\Aneq$ &
$
\begin{aligned}
n\log(2+2\sqrt{2})-\Theta(\log n)\\
\approx 2.27n-\Theta(\log n)
\end{aligned}
$ &
\href{https://oeis.org/A071356}{A071356} &
Theorem~\ref{thm:cmin-no-cons-lb} &
$
\begin{aligned}
n\log(2+2\sqrt{2})+o(n)\\
\approx 2.27n+o(n)
\end{aligned}
$ &
$O(1)$ &
Theorem~\ref{thm:ub-cmin-no-cons}
\\
\cline{2-8}
&
$\Agen$ &
$
\begin{aligned}
n\log(3+2\sqrt{2})-\Theta(\log n)\\
\approx 2.54n-\Theta(\log n)
\end{aligned}
$ &
{Little Schr\"oder\\\href{https://oeis.org/A001003}{A001003}} &
\cite{fischer2011} &
$
\begin{aligned}
n\log(3+2\sqrt{2})+o(n)\\
\approx 2.54n+o(n)
\end{aligned}
$ &
$O(1)$ &
{\cite{fischer2011, josatti2016},\\
Corollary~\ref{cor:ub-cmin}}
\\
\hline
\SetCell[r=2]{c} {$\PSV,\PLV$\\$\RMin(\cdot,\cdot,n),$\\$\RMax(\cdot, \cdot, n)$} &
$\Perm, \Aneq$ &
$3n-\Theta(\log n)$ &
{Baxter\\\href{https://oeis.org/A001181}{A001181}} &
\cite{dulucq1996stack, gawrychowski2015} &
$
\begin{aligned}
3n+o(n)
\end{aligned}
$ &
$O(1)$ &
\cite{gawrychowski2015,josatti2016}
\\
\cline{2-8}
&
$\Agen$ &
$
\begin{aligned}
n\log 9-\Theta(\log n)\\
\approx 3.17n-\Theta(\log n)
\end{aligned}
$ &
\href{https://oeis.org/A216947}{A216947} &
Theorem~\ref{thm:lb-minmax} &
$
\begin{aligned}
n\log 9+o(n)\\
\approx 3.17n+o(n)
\end{aligned}
$ &
$O(1)$ &
Theorem~\ref{thm:ub-minmax}
\\
\hline
\SetCell[r=1]{c} {$\PSV,\NLV$\\$\RMin(\cdot,\cdot,n),$\\$\RMax(\cdot, \cdot, 1)$} &
$\Perm, \Aneq, \Agen$ &
$3n-\Theta(\log n)$ &
{Baxter\\\href{https://oeis.org/A001181}{A001181}} &
Theorem~\ref{thm:minrmax-lb} &
$
\begin{aligned}
3n+o(n)
\end{aligned}
$ &
$O(1)$ &
Theorem~\ref{thm:ub-minrmax}
\\
\hline
\SetCell[r=3]{c} {$\PSV,\PLV,\NSV,$\\$\RMin,\RMax(\cdot, \cdot, n)$} &
$\Perm$ &
$3n-\Theta(\log n)$ &
{Baxter\\\href{https://oeis.org/A001181}{A001181}} &
\cite{dulucq1996stack, gawrychowski2015} &
$
\begin{aligned}
3n+o(n)
\end{aligned}
$ &
$O(1)$ &
\cite{fischer2011, gawrychowski2015, josatti2016}
\\
\cline{2-8}
&
$\Aneq$ &
$
\begin{aligned}
n\log(4+4\sqrt{2})-\Theta(\log n)\\
\approx 3.27n-\Theta(\log n)
\end{aligned}
$ &
{Not indexed\\by the OEIS} &
Theorem~\ref{thm:three-lb} &
$
\begin{aligned}
n\log(4+4\sqrt{2})+o(n)\\
\approx 3.27n+o(n)
\end{aligned}
$ &
$O(1)$ &
Theorem~\ref{thm:ub-three}
\\
\cline{2-8}
&
$\Agen$ &
$
\begin{aligned}
n\log(5+4\sqrt{2})-\Theta(\log n)\\
\approx 3.41n-\Theta(\log n)
\end{aligned}
$ &
{Not indexed\\by the OEIS}
&
Theorem~\ref{thm:three-lb} &
$
\begin{aligned}
n\log(5+4\sqrt{2})+o(n)\\
\approx 3.41n+o(n)
\end{aligned}
$ &
$O(1)$ &
Theorem~\ref{thm:ub-three}
\\
\hline
\SetCell[r=3]{c} {$\PSV,\PLV,$\\$\NSV,\NLV,$\\$\RMin,\RMax$} &
$\Perm$ &
$3n-\Theta(\log n)$ &
{Baxter\\\href{https://oeis.org/A001181}{A001181}} &
\cite{dulucq1996stack, gawrychowski2015} &
$
\begin{aligned}
3n+o(n)
\end{aligned}
$ &
$O(1)$ &
\cite{fischer2011, gawrychowski2015, josatti2016}
\\
\cline{2-8}
&
$\Aneq$ &
{$
\begin{aligned}
n\log(6+4\sqrt{2})-\Theta(\log n)\\
\approx 3.54n-\Theta(\log n)
\end{aligned}
$} &
{Not indexed\\by the OEIS} &
Theorem~\ref{thm:all-lb} &
$
\begin{aligned}
n\log(6+4\sqrt{2})+o(n)\\
\approx 3.54n+o(n)
\end{aligned}
$ &
$O(1)$ &
Theorem~\ref{thm:all-ub}
\\
\cline{2-8}
&
$\Agen$ &
{$
\begin{aligned}
n\log(7+4\sqrt{2})-\Theta(\log n)\\
\approx 3.66n-\Theta(\log n)
\end{aligned}
$} &
{Not indexed\\by the OEIS} &
Theorem~\ref{thm:all-lb} &
$
\begin{aligned}
n\log(7+4\sqrt{2})+o(n)\\
\approx 3.66n+o(n)
\end{aligned}
$ &
$O(1)$ &
Theorem~\ref{thm:all-ub}
\\
\hline
\end{tblr}
}
\caption{
    Summary of lower and upper bounds.
    In each entry, the previous/next smaller/larger value queries and the range queries define two query families with the same ITLB.
    The upper bounds are for supporting all listed queries simultaneously.
}
\label{tab::result}
\end{table}

\begin{figure}[H]
\centering
\input{figures/gen_tree_pair_helpers.tex}

\def\GenTreeRootToLevelOne{3.5}
\def\GenTreeLevelOneToLevelTwo{4.5}
\def\GenTreeEdgeToRootGap{3pt}
\def\GenTreeRootLabelOffset{2.1}
\def\GenTreeLevelOneLabelOffset{3.2}
\def\GenTreeLevelTwoLabelOffset{4.3}
\newcommand{\DrawGenTreeEdge}[2]{
    \draw[GPedge] (#1.south) -- ($(#2.north)+(0,\GenTreeEdgeToRootGap)$);
}

\resizebox{0.75\linewidth}{!}{
\begin{tikzpicture}
    \GPSingleObject{root}{(0,0)}
        {[0[1]]}{1}{\GenTreeRootLabelOffset}

    \GPSingleObject{a}{(-3.4,-\GenTreeRootToLevelOne)}
        {[0[1][2]]}{1}{\GenTreeLevelOneLabelOffset}
    \GPSingleObject{b}{(3.4,-\GenTreeRootToLevelOne)}
        {[0[1[2]]]}{2}{\GenTreeLevelOneLabelOffset}

    \GPSingleObject{aa}{(-6.8,{-\GenTreeRootToLevelOne-\GenTreeLevelOneToLevelTwo})}
        {[0[1][2][3]]}{1}{\GenTreeLevelTwoLabelOffset}
    \GPSingleObject{ab}{(-3.4,{-\GenTreeRootToLevelOne-\GenTreeLevelOneToLevelTwo})}
        {[0[1][2[3]]]}{2}{\GenTreeLevelTwoLabelOffset}
    \GPSingleObject{ba}{(0,{-\GenTreeRootToLevelOne-\GenTreeLevelOneToLevelTwo})}
        {[0[1[2]][3]]}{1}{\GenTreeLevelTwoLabelOffset}
    \GPSingleObject{bb}{(3.4,{-\GenTreeRootToLevelOne-\GenTreeLevelOneToLevelTwo})}
        {[0[1[2][3]]]}{2}{\GenTreeLevelTwoLabelOffset}
    \GPSingleObject{bc}{(6.8,{-\GenTreeRootToLevelOne-\GenTreeLevelOneToLevelTwo})}
        {[0[1[2[3]]]]}{3}{\GenTreeLevelTwoLabelOffset}

    \DrawGenTreeEdge{rootLabel}{a}
    \DrawGenTreeEdge{rootLabel}{b}
    \DrawGenTreeEdge{aLabel}{aa}
    \DrawGenTreeEdge{aLabel}{ab}
    \DrawGenTreeEdge{bLabel}{ba}
    \DrawGenTreeEdge{bLabel}{bb}
    \DrawGenTreeEdge{bLabel}{bc}
\end{tikzpicture}
}
\caption{The first three levels of the generating tree for 2d-min heaps. Each heap is shown together with its label.}
\label{fig:gen-tree}
\end{figure}
\begin{figure}[H]
\centering
\input{figures/gen_tree_pair_helpers.tex}
\def\GPMinName{\Min}
\def\GPMaxName{\Max}
\resizebox{0.95\linewidth}{!}{
\begin{tikzpicture}
    \GPObjectSide{root}{(0,0)}
        {[0[1]]}{[0[1]]}{1,1}

    \GPObjectSide{a}{(-5.25,-4.8)}
        {[0[1][2]]}{[0[1[2]]]}{1,2}
    \GPObjectSide{b}{(5.25,-4.8)}
        {[0[1[2]]]}{[0[1][2]]}{2,1}

    \GPObject{aa}{(-8.75,-12.3)}
        {[0[1][2][3]]}{[0[1[2[3]]]]}{1,3}
    \GPObject{ab}{(-5.25,-12.3)}
        {[0[1][2[3]]]}{[0[1[2]][3]]}{2,1}
    \GPObject{ac}{(-1.75,-12.3)}
        {[0[1][2[3]]]}{[0[1[2][3]]]}{2,2}

    \GPObject{ba}{(1.75,-12.3)}
        {[0[1[2]][3]]}{[0[1][2[3]]]}{1,2}
    \GPObject{bb}{(5.25,-12.3)}
        {[0[1[2][3]]]}{[0[1][2[3]]]}{2,2}
    \GPObject{bc}{(8.75,-12.3)}
        {[0[1[2[3]]]]}{[0[1][2][3]]}{3,1}

    \draw[GPedge] (rootLabel.south) -- (a.north);
    \draw[GPedge] (rootLabel.south) -- (b.north);
    \draw[GPedge] (aLabel.south) -- (aa.north);
    \draw[GPedge] (aLabel.south) -- (ab.north);
    \draw[GPedge] (aLabel.south) -- (ac.north);
    \draw[GPedge] (bLabel.south) -- (ba.north);
    \draw[GPedge] (bLabel.south) -- (bb.north);
    \draw[GPedge] (bLabel.south) -- (bc.north);
\end{tikzpicture}
}
\caption{The first three levels of the generating tree for uncolored 2d-min and 2d-max heap pairs on $\Perm$ and $\Aneq$. The two heaps are shown side by side in the first two levels and vertically in the third level, with the label below.}
\label{fig:gen-tree-pair}
\end{figure}
\begin{figure}[H]
\centering
\forestset{
    heap tree/.style={
        for tree={
            circle,
            draw,
            line width=1pt,
            minimum size=13mm,
            inner sep=0pt,
            font=\Huge,
            edge={line width=1pt},
            l sep=6mm,
            s sep=2mm,
            anchor=center,
            child anchor=north,
            parent anchor=south
        }
    },
    blue node/.style={fill=blue!18},
    red node/.style={fill=red!18}
}

\resizebox{0.6\textwidth}{!}{%
\begin{tabular}{@{}c@{\hspace{18mm}}c@{}}
\begin{forest}
baseline=(current bounding box.north),
heap tree
[0, blue node
    [1, blue node]
    [2, red node
        [3, blue node]
        [4, blue node]
    ]
    [5, red node
        [6, blue node]
        [7, blue node
            [8, blue node]
        ]
        [9, red node
            [10, blue node]
        ]
    ]
]
\end{forest}
&
\begin{forest}
baseline=(current bounding box.north),
heap tree
[0, blue node
    [1, blue node
        [2, blue node]
    ]
    [3, red node]
    [4, blue node
        [5, blue node]
        [6, red node]
        [7, blue node]
    ]
    [8, red node
        [9, blue node]
        [10, red node]
    ]
]
\end{forest}
\\
\raisebox{-7mm}{{\Huge $\cMin(A)$}} &
\raisebox{-7mm}{{\Huge $\cMax(A)$}}
\end{tabular}%
}

\vspace{2mm}
{\small
\setlength{\arraycolsep}{4pt}
$
\begin{array}{|c!{\color{black!25}\vrule}c!{\color{black!25}\vrule}c!{\color{black!25}\vrule}c!{\color{black!25}\vrule}c!{\color{black!25}\vrule}c!{\color{black!25}\vrule}c!{\color{black!25}\vrule}c!{\color{black!25}\vrule}c!{\color{black!25}\vrule}c!{\color{black!25}\vrule}c!{\color{black!25}\vrule}c!{\color{black!25}\vrule}c|}
\hline
i & 0 & 1 & 2 & 3 & 4 & 5 & 6 & 7 & 8 & 9 & 10 & \\
\hline
A[i] &  & 4 & 2 & 5 & 5 & 1 & 3 & 3 & 6 & 2 & 4 &  \\
A' &  & 4 & 2 & 5 &  & 1 & 3 &  & 6 & 2 & 4 &  \\
\BP(\cMin(A')) & ( & ( & )( & ( &  & ))( & ( &  & ( & ))( & ( & )))) \\
\BP(\cMax(A')) & ( & ( & ( & ))( &  & ( & )( &  & ))( & ( & )( & ))) \\
U &  & 1 & 0 & 1 &  & 0 & 0 &  & 1 & 0 &  &  \\
S=s_2\cdots s_n &  &  & 0 & 10 &  & 10 & 0 &  & 10 & 10 & 0 &  \\
c_{\min} &  &  &  &  &  & 1 &  &  &  & 1 &  &  \\
c_{\max} &  &  &  & 1 &  &  &  &  & 1 &  &  &  \\
C &  & 0 & 0 & 0 & 1 & 0 & 0 & 1 & 0 & 0 & 0 &  \\
\hline
\end{array}
$
}
\vspace{-1mm}
\caption{An example of the sequences used in \Cref{subsec:ub-all} for an array $A\in\Agen_{10}$.}
\label{fig:cminmax}
\end{figure}

\clearpage
\endgroup

\section{Omitted Proofs in Section~\ref{sec:lb}}\label{app:proof-lb}

\subsection{Proof of Lemma~\ref{lem:quarter-walk}}
\restQuarterWalk*

The asymptotic behavior of quarter-plane walks is known from the work of Denisov et al.~\cite{denisov2015} and Bostan et al.~\cite{bostan2014}. Bouvel et al.~\cite{bouvel2018} used these results to derive the asymptotic behavior of strong-Baxter permutations. For our purposes, we use the formulation of these asymptotic results for step multisets given by Bostan et al.~\cite{bostan2021}. To state the lemma, we first introduce two necessary notations.

Given a step multiset \(\mathcal S\), let \(\Lambda\) be the lattice spanned by \(\mathcal S\). We say that \(\mathcal S\) is \emph{strongly aperiodic} if, for every lattice point \(u\in\Lambda\), the translated step multiset $u+\mathcal S=\{u+s:s\in\mathcal S\}$ also generates \(\Lambda\).

Note that every step multiset \(\mathcal S\) in Table~\ref{tab:gen-tree} is strongly aperiodic, since it contains $(0,0)$, $(1,0)$, and $(0,1)$.
Indeed, since \(\mathcal S\) contains \((1,0)\) and \((0,1)\), it generates \(\Lambda=\mathbb Z^2\). For every \(u\in\Lambda\), the translated step set \(u+\mathcal S\) contains \(u\), \(u+(1,0)\), and \(u+(0,1)\). Hence it generates their differences
\[
    u-u=(0,0),
    \quad
    (u+(1,0))-u=(1,0)
    \quad\text{and}\quad
    (u+(0,1))-u=(0,1),
\]
and therefore generates \(\mathbb Z^2=\Lambda\). Thus, \(\mathcal S\) is strongly aperiodic.

Given a step multiset \(\mathcal S\), we say that a point \((i,j)\in \mathbb Z_{\ge 0}^2\) is \emph{reachable from infinity} for \(\mathcal S\) if there exists a quarter-plane walk with steps in \(\mathcal S\) that starts at a point \((a,b)\) with $a>i$ and $b>j$, and ends at \((i,j)\).

Note that, for every step multiset \(\mathcal S\) in Table~\ref{tab:gen-tree}, every \((i,j)\in \mathbb Z_{\ge 0}^2\) is reachable from infinity, since \(\mathcal S\) contains $(-1,0)$ and $(0,-1)$.

\begin{lemma}[Theorem~3.11 in~\cite{bostan2021}]
Let $\mathcal S$ be a finite step multiset in $\mathbb Z^2$, with characteristic polynomial $Q_{\mathcal S}(x,y)$. Suppose that $\mathcal S$ is not contained in a half-plane, contains a step in $\mathbb Z_{\ge 0}^2$, and is strongly aperiodic. Then $Q_{\mathcal S}$ has a unique critical point $(a,b)\in\mathbb R_{>0}^2$.

Moreover, let $w_{n,i,j}$ denote the number of quarter-plane walks with steps in $\mathcal S$ of length $n$ that end at $(i,j)$. If $(i,j)$ is reachable from infinity for \(\mathcal S\), then, setting $\mu=Q_{\mathcal S}(a,b)$, we have
\[
    w_{n,i,j}=\Theta\!\left(\frac{\mu^n}{n^c}\right)
\]
for some constant $c>0$.
\end{lemma}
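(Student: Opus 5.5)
The statement is a deep result, so I would reduce it to the Denisov--Wachtel local limit theorem for random walks in cones rather than prove it from first principles. The proof would go in three stages.

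\textbf{Existence and uniqueness of the critical point.} Consider $\phi(s,t)=\log Q_{\mathcal S}(e^s,e^t)$ on $\mathbb R^2$.
\begin{enumerate}
\item $\phi$ is convex, being the logarithm of a sum of exponentials of linear forms.
\item $\phi$ is strictly convex, because $\mathcal S$ is not contained in a line (strong aperiodicity gives that $\mathcal S$ spans a full-rank lattice).
\item $\phi$ is coercive. Since $\mathcal S$ is not contained in any half-plane, for every direction $\theta$ some step $(u,v)$ has $\langle (u,v),\theta\rangle>0$. Hence $\phi(r\theta)\to\infty$ as $r\to\infty$, and by compactness of the unit circle this holds uniformly in $\theta$.
\end{enumerate}
So $\phi$ has a unique minimizer $(s_0,t_0)$. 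Setting $(a,b)=(e^{s_0},e^{t_0})$ gives the unique positive critical point of $Q_{\mathcal S}$. Any other positive critical point would be another critical point of the strictly convex $\phi$, which is impossible.

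\textbf{Exponential change of measure (Cramér tilt).} Put $\mu=Q_{\mathcal S}(a,b)$, and give each step $(u,v)$ of the multiset probability $a^ub^v/\mu$. The criticality equations say exactly that this random walk $(X_k)$ has zero drift. Weighting each path by the product of its step probabilities, a walk of length $n$ from $(0,0)$ to $(i,j)$ has total weight $\mu^{-n}a^ib^j$, independent of the path. This gives
\[
w_{n,i,j}=\mu^n a^{-i}b^{-j}\,\Pr\bigl(X_1,\dots,X_n\in\mathbb Z_{\ge0}^2,\;X_n=(i,j)\bigr).
\]
It therefore suffices to show that this probability is $\Theta(n^{-c})$.

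\textbf{Local limit in a cone.}
\begin{enumerate}
\item Apply a linear map that turns the covariance matrix into the identity. The quadrant becomes a wedge of some opening angle $\vartheta\in(0,\pi)$, determined by the correlation coefficient $\rho$ of the tilted walk; in fact $\vartheta=\arccos(-\rho)$.
\item For zero-drift walks with bounded steps in a cone, Denisov and Wachtel give
\[
\Pr(\tau_x>n,\;X_n=y)\sim \kappa\, V(x)\,V'(y)\,n^{-p-1}, \qquad p=\pi/\vartheta,
\]
where $\tau_x$ is the exit time for the walk started at $x$, $V$ is the positive harmonic function for the killed walk, and $V'$ is the one for the reversed walk. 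The local form of this theorem needs the walk to be strongly aperiodic, which is exactly our hypothesis.
\item This yields the claim with $c=p+1>0$, provided $V((0,0))>0$ and $V'((i,j))>0$.
\end{enumerate}

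\textbf{Main obstacle: positivity of the constants.} This is the step I expect to be hardest, because $(0,0)$ and $(i,j)$ sit on the boundary of the cone, where the harmonic functions can vanish.
\begin{enumerate}
\item For the start point, use the hypothesis that $\mathcal S$ contains a step in $\mathbb Z_{\ge0}^2$. Combined with the steps witnessing that $\mathcal S$ is not in a half-plane, some bounded-length admissible path goes from $(0,0)$ to a point deep in the interior, where $V>0$. Hence $V((0,0))>0$.
\item For the end point, the reversed-walk statement is exactly that $(i,j)$ is reachable from infinity, which gives $V'((i,j))>0$.
\end{enumerate}

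To turn asymptotic equivalence into the two-sided $\Theta$ bound, I would sandwich the probability. For the lower bound, concatenate fixed-length connecting paths at both ends with interior-to-interior walks. For the upper bound, use the general estimate $\Pr(\tau_x>n,\;X_n=y)=O\bigl(n^{-p-1}\bigr)$ for fixed $x,y$.
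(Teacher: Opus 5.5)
\textbf{Overall.} The paper gives no argument of its own for this statement. It cites Bostan et al., whose theorem rests on the Denisov--Wachtel local limit theorem through the same reduction you outline: log-convexity and coercivity for the critical point, an exponential tilt to a driftless walk, normalization to a wedge of angle $\arccos(-\rho)$, and positivity of the two harmonic functions. Your route is the right one, and it is complete for the multisets the paper actually uses.

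\textbf{Where it breaks.} Your positivity step fails for the statement as literally written, and in those cases the statement itself is false.

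First, you claim that a step in $\mathbb Z_{\ge0}^2$, together with the non-half-plane condition, gives an admissible path from $(0,0)$ into the deep interior. This fails when that step is $(0,0)$. Take $\mathcal S=\{(0,0),(-1,0),(0,-1),(2,-1),(-1,2)\}$. Every hypothesis holds: no half-plane contains it, $\Lambda=\mathbb Z^2$, it is strongly aperiodic because it contains $(0,0)$, and $(0,0)$ is reached from $(1,1)$. The critical point is $(1,1)$ with $\mu=5$. Yet every nonzero step has a negative coordinate, so the walk from the origin is frozen: $V((0,0))=0$ and $w_{n,0,0}=1$. You need a \emph{nonzero} step in $\mathbb Z_{\ge0}^2$. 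With one, your argument goes through: repeat that step, and if it lies on an axis, interleave a step that increases the other coordinate.

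Second, under the lattice-relative definition of strong aperiodicity used here, $\Lambda$ may be a proper sublattice, and reachability from infinity does not imply $(i,j)\in\Lambda$. For $\{(0,0),(\pm1,\pm1)\}$, the point $(1,0)$ is reached from $(2,1)$, yet $w_{n,1,0}=0$. The local limit theorem only concerns endpoints on the walk's lattice. So you must add the hypothesis $(i,j)\in\Lambda$ (automatic when $\Lambda=\mathbb Z^2$) before invoking $V'((i,j))>0$.

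\textbf{Consequences and minor points.} Neither defect matters for the paper, whose multisets contain $(1,0)$ and $(0,1)$ and whose endpoint is $(0,0)$. Two smaller points:
\begin{itemize}
\item Strict convexity of $\phi$ needs $\mathcal S$ not to lie in an \emph{affine} line. This follows from the half-plane hypothesis, not from strong aperiodicity; under this definition, strong aperiodicity does not even force full rank.
\item Your closing sandwich is unnecessary. Once $V((0,0))\,V'((i,j))>0$, the local limit theorem gives the $\Theta$ bound directly.
\end{itemize}
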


For any step multiset $\mathcal S$ satisfying the assumptions of Lemma~\ref{lem:quarter-walk}, the presence of $(1,0)$, $(-1,0)$, $(0,1)$, and $(0,-1)$ ensures that $\mathcal S$ is not contained in a half-plane.
The presence of $(0,0)$, $(1,0)$, and $(0,1)$ guarantees strong aperiodicity, and $(1,0)$ is a step in $\mathbb Z_{\ge0}^2$.
Moreover, $(0,0)$ is reachable from infinity using the steps $(-1,0)$ and $(0,-1)$.
Thus, applying the preceding lemma with $(i,j)=(0,0)$ and observing that $e_n=w_{n,0,0}$ proves Lemma~\ref{lem:quarter-walk}.

\section{Omitted Proofs in Section~\ref{sec:ub}}\label{app:proof-ub}
\subsection{Proof of Lemma~\ref{lem:ub-cmin}}\label{subsec:proof-ub-cmin}

\restUbCmin*

Let $N(n,k)$ denote the number of ordinal trees with $n+1$ nodes and $k$ leaves. These numbers are the Narayana numbers~\cite{riordan1975} and are given by
\[
    N(n,k)=\frac{1}{n}\binom{n}{k}\binom{n}{k-1}.
\]

\begin{lemma}\label{lem:narayana}
    Let $T$ be an ordinal tree with $n+1$ nodes and $k$ leaves. Then $T$ can be represented using $\log N(n,k)+o(n)$ bits while supporting $\wordacc$ on $\BP(T)$ in constant time.
\end{lemma}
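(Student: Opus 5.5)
Our plan is to encode $\BP(T)$ through a bit sequence with a known number of ones, and then to compress that sequence blockwise in a way that keeps constant-time word access. Concretely, we work with the leaf-indicator sequence together with the "close-run" structure. In $\BP(T)$ a leaf corresponds exactly to an occurrence of the pattern $()$. Dropping the initial open parenthesis of the root and the final close parentheses, we write $\BP(T)$ as a sequence of $n$ opening parentheses, where each opening parenthesis of node $i\ge 1$ is preceded by a (possibly empty) run of closing parentheses. Recording the run lengths $p_1,\dots,p_n$ in unary gives two bit strings:
\begin{itemize}
\item $L$, of length $n$, marking which nodes are leaves ($k$ ones);
\item $R$, of length $n$, marking which nodes have $p_i\ge1$, i.e.\ are preceded by a nonempty close-run.
\end{itemize}
The number of nonempty close-runs equals the number of leaves, since each leaf is immediately followed by a close. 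So $R$ also has $k$ ones (up to the last run, which we handle separately). The run lengths themselves form a composition of $n$ into $k$ positive parts, and this is again $\binom{n}{k}$-type information. Storing $L$ and the composition with Lemma~\ref{lem:rrr} already gives $2\log\binom{n}{k}+o(n)$ bits. This exceeds $\log N(n,k)=2\log\binom{n}{k}-O(\log n)$ only by $O(\log n)$, which is well within the $o(n)$ slack. The target bound is therefore met by two compressed bit vectors, provided the pair is indeed a valid injective encoding and word access is efficient.

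The key steps, in order, are as follows.
\begin{enumerate}
\item Fix the decomposition: $\BP(T)=(\,1^{0}\,0\,1^{q_1}0\,1^{q_2}0\cdots)$. Here we interpret $0$ as an open and $1$ as a close, as in Section~\ref{subsec:ub-all}.
\item Encode the positions of opens that are followed by a close (the leaves) as a bit vector $L$ over the $n+1$ opens.
\item Encode the nonzero run lengths in unary as a bit vector $D$ of length $n+1$ with $k$ ones, by marking the cumulative close counts. Both vectors are stored with Lemma~\ref{lem:rrr} in $\log\binom{n+1}{k}+o(n)$ bits each.
\item Verify that $\log\binom{n+1}{k}^2-\log N(n,k)=O(\log n)$.
\end{enumerate}

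We expect word access to be the main obstacle. To return $w$ consecutive bits of $\BP(T)$ starting at position $j$, we must know how many opens and closes precede $j$. The global position of the $t$-th open is $t$ plus the number of closes before it. That close count is the sum of the run lengths of the leaves among the first $t$ opens, which we can compute with $\rank$ on $L$ followed by $\select$ on $D$. We then add $o(n)$-bit directories that sample every $\Theta(\log n)$-th position of $\BP(T)$ and record, for each sample, the corresponding open index, so that $j$ can be mapped to a local region. Inside a word-sized window, only $O(w)$ bits of $L$ and $D$ are involved. These bits can be fetched with $\wordacc$ from Lemma~\ref{lem:rrr} and interleaved into the BP bits. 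To do this in constant time, we use a lookup table on $(\tfrac12\log n)$-bit chunks of $L$ and $D$, which costs $o(n)$ bits. This table emits the merged parenthesis string, and two table lookups cover one output word.

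If the sparse case $k=o(n)$ makes windows of $D$ span too much, we fall back to storing run lengths with a sampled cumulative-sum structure. Such a structure also stays within $\log\binom{n}{k}+o(n)$ bits.
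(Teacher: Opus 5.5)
Your encoding is the paper's. Your $L$ is its leaf array over the opening parentheses, and your run-end vector $D$ (marking cumulative close counts) is its array $R$, which marks the closing parentheses that are followed by an opening one. Both are stored with Lemma~\ref{lem:rrr}, and your $O(\log n)$ accounting against $\log N(n,k)$ is correct. The preliminary vector you first call $R$, marking opens preceded by a nonempty close-run, is determined by $L$: node $i$ is preceded by a close iff node $i-1$ is a leaf. So it is $D$, as in your step list, that makes the encoding injective.

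Word access also works as in the paper, by blocked decoding through a precomputed table. However, two parameters as you state them do not give $o(n)$ redundancy.
\begin{itemize}
\item A table indexed jointly by a $(\log n)/2$-bit chunk of $L$ and a $(\log n)/2$-bit chunk of $D$ has $2^{\log n}=n$ entries of $\Theta(\log n)$ bits, i.e.\ $\Theta(n\log n)$ bits. You need chunks of $b=(\log n)/4$ bits, as in the paper. This gives a table of $2^{2b+1}\log n=O(\sqrt{n}\log n)$ bits, where the extra index bit is the first parenthesis of the block.
\item Recording an absolute open index at every $\Theta(\log n)$-th position costs $\Theta(n)$ bits. Use the paper's two-level scheme: absolute open/close counts for each superblock of length $\log^2 n$, and, for each microblock of length $b$, counts relative to its superblock plus its first parenthesis. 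This costs $O(n\log\log n/\log n)$ bits.
\end{itemize}

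Your sparse-case fallback is unnecessary. A microblock of $b$ parentheses consumes at most $b$ bits of $L$ and at most $b$ bits of $D$, starting at the positions given by its counts, whatever $k$ is. Lemma~\ref{lem:rrr} provides constant-time $\wordacc$ at any density.
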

\begin{proof}
    Remove the first and last parentheses of $\BP(T)$, corresponding to the
    root. Let the remaining balanced-parentheses sequence be $P$, of length
    $2n$. Then, we can write
    \[
        P =
        (^{a_1})^{d_1}
        (^{a_2})^{d_2}
        \cdots
        (^{a_k})^{d_k},
    \]
    where $a_1+\cdots+a_k=d_1+\cdots+d_k=n.$

    We first store a bit array $L[1,\dots, n]$ where $L[i]=1$ iff the $i$-th opening parenthesis of $P$ is followed by a closing parenthesis. Equivalently, $L[i]=1$ iff the $(i+1)$-th node in preorder is a leaf. Hence, $L$ has $k$ ones.

    We also store a bit array $R[1,\dots,n]$ where $R[i]=1$ iff the $i$-th closing parenthesis of $P$ is followed by an opening parenthesis. In this case, the node corresponding to this opening parenthesis is a non-leftmost child. Since an ordinal tree with $k$ leaves has $k-1$ non-leftmost child nodes, $R$ has $k-1$ ones.

    The two bit arrays are stored using
    \[
        \log\binom{n}{k}
        +
        \log\binom{n}{k-1}
        +o(n)
        =
        \log N(n,k)+o(n)
    \]
    bits by Lemma~\ref{lem:rrr}.

    It remains to support $\wordacc$ on $\BP(T)$. Partition $P$ into superblocks of length $\log^2 n$ and microblocks of length $b=(\log n)/4$. For each superblock, store the numbers of opening and closing parentheses preceding it. For each microblock, store the numbers of opening and closing parentheses preceding it within its superblock, together with the first parenthesis of the microblock. The total space is
    \[
        O\!\left(\frac{n}{\log^2 n}\log n\right)
        +
        O\!\left(\frac{n}{\log n}\log\log n\right) = o(n).
    \]
    
    To answer $\wordacc$ on $\BP(T)$, it suffices to reconstruct the constant number of microblocks intersecting the requested word. To decode a microblock, suppose that $o$ opening and $c$ closing parentheses precede it. We decode the next $b$ bits of $L$ starting from position $o+1$ and the next $b$ bits of $R$ starting from position $c+1$. This is supported in constant time since representations of $L$ and $R$ allow $\wordacc$ on them by Lemma~\ref{lem:rrr}. We use a precomputed table indexed by these two length-$b$ strings and the first parenthesis of the microblock; the table returns the whole microblock of $P$. This table is well defined because the next parenthesis is determined during the scan: after an opening parenthesis, the following parenthesis is closing exactly when the corresponding bit of $L$ is $1$, and after a closing parenthesis, the following parenthesis is opening exactly when the corresponding bit of $R$ is $1$. Since the table has size $2^{2b+1}\log n=o(n)$, any $O(\log n)$ consecutive bits of $P$, and hence of $\BP(T)$, can be accessed in constant time.
\end{proof}

Suppose $\cMin(A)$ has $k$ leaves. Then it has $k-1$ non-leftmost child nodes, and it is enough to store the colors of those nodes. Let $Y[1,\dots,k-1]$ be the array of their color bits, listed in preorder. We store $Y$ explicitly and encode the underlying tree $\Min(A)$ using Lemma~\ref{lem:narayana}. Thus, the total space is
\[
    \log N(n, k) + k + o(n)
\] bits. Let $\alpha=k/n$. Since
\[
    \log \binom{n}{k} \le nH(\alpha),
\]
where $H = \alpha \log {(1/\alpha)} + (1-\alpha) \log {(1/(1-\alpha))}$ is the binary entropy function, we have
\[
\begin{aligned}
    \log N(n,k)+k
    &= \log \binom{n}{k}
    + \log \binom{n}{k-1}
    - \log n
    + k \\
    &\le 2nH(\alpha)+\alpha n+o(n) \\
    &= n(2H(\alpha)+\alpha)+o(n).
\end{aligned}
\] The function $2H(\alpha)+\alpha$ is maximized at $\alpha=2-\sqrt{2}$, and its maximum value is $\log(3+2\sqrt{2})$. Therefore the space is at most $n\log(3+2\sqrt{2})+o(n)$ bits. 

By Lemma~\ref{lem:narayana}, $\wordacc$ on $\BP(\Min(A))$ can be supported in constant time. Thus, by Lemma~\ref{lem:cbp}, the required query support follows once $\wordacc$ on $\Xmin(A)$ is supported in constant time.

Since the nodes of $\Min(A)$ are labeled $0,\dots,n$ in preorder, with node $0$ as the root, the bit array $L[1,\dots,n]$ stored in Lemma~\ref{lem:narayana} satisfies $L[j]=1$ iff node $j$ is a leaf. Hence, for $i\ge 2$, node $i$ is a non-leftmost child iff $L[i-1]=1$.

The first two bits of $\Xmin(A)$ are always $0$, since they correspond to nodes $0$ and $1$, which are always blue. We therefore partition the suffix of $\Xmin(A)$ after these two bits into blocks of length $b=(\log n)/4$. Consider a block whose first entry stores the color of node $p$. We decode the next $b$ bits of $L$ starting from position $p-1$. Let $q=\rank_1(p-2,L)$, which is the number of non-leftmost child nodes before node $p$. We then read the next $b$ bits of $Y$ starting from position $q+1$. A precomputed table indexed by these two length-$b$ strings returns the corresponding block of $\Xmin(A)$: positions whose corresponding bit of $L$ is $0$ are filled with $0$, and the remaining positions are filled, in order, with the bits read from $Y$. The table has size $2^{2b}b=o(n)$ bits, so each block of $\Xmin(A)$ can be decoded in constant time. Consequently, $\wordacc$ on $\Xmin(A)$ is supported in constant time.

\subsection{Proof of Lemma~\ref{lem:cbp}}\label{subsec:cbp}

\restCbp*

First, define the following color-related operations on $\cMin(A)$:
\begin{itemize}
    \item $\clr(i)$: return the color of node $i$.
    \item $\PRS(i)$: return the rightmost red sibling to the left of node $i$.
    \item $\NRS(i)$: return the leftmost red sibling to the right of node $i$.
\end{itemize}

By \cite[Lemma~5]{jokim2025}, $\NSV$ and $\RMin$ on $A$ can be answered in constant time provided that the standard tree navigation operations (previous sibling, next sibling, subtree size, degree, level ancestor, child rank, child select, and parent)\footnote{See Table~1 of~\cite{navarro2014} for their definitions.} and the three color-related operations above are supported in constant time.

Moreover, $\PSV$ corresponds to the parent operation in $\Min(A)$. The standard tree navigation operations are supported on $\BP(\Min(A))$ in constant time using $o(n)$ additional bits~\cite{navarro2014}, and $\clr$ is supported by $\Xmin$ in constant time. Thus, it remains only to support $\PRS$ and $\NRS$.

To support these queries, we use the \emph{modified tree cover} introduced by Hamada et al.~\cite{hamada2024}. Let $T$ be an ordinal tree with $n$ nodes, and let $B \ge 1$ be a block-size parameter. The \emph{modified tree cover} is defined on a tree $T^{+}$ obtained from $T$ by inserting dummy nodes. Each dummy node $d$ is inserted as a child of a node $v$ of $T$, and a consecutive subsequence of the children of $v$ in $T$ is moved under $d$, preserving their order. The nodes of $T^{+}$ are partitioned into disjoint \emph{micro trees}, where each micro tree is a connected subgraph of $T^{+}$.

Contracting each micro tree $\mu$ into a single node $\tau(\mu)$ gives the \emph{top-tier tree} $T_{\mathrm{top}}$. For two micro trees $\mu$ and $\nu$, we say that $\nu$ is a child micro tree of $\mu$ if $\tau(\nu)$ is a child of $\tau(\mu)$ in $T_{\mathrm{top}}$.

The \emph{modified tree cover} satisfies the following properties:
\begin{enumerate}
    \item The number of micro trees is $O(n/B)$.
    \item Each micro tree is of one of the following two types:
        \begin{enumerate}
        \item A \emph{singleton micro tree}, which consists of a single non-dummy node and may have an arbitrary number of child micro trees.
        \item A \emph{non-singleton micro tree}, which consists of fewer than $2B$ nodes and has at most one child micro tree. Its root $d$ is a dummy node, and all remaining nodes are non-dummy nodes.
        \end{enumerate}
\end{enumerate}

They showed that, for each micro tree $\mu$, $\BP(\mu)$ can be obtained from at most two substrings of $\BP(T)$ by concatenating them and, if $\mu$ is non-singleton, enclosing the resulting string with the parentheses corresponding to the dummy root.

The same observation applies to the color sequence. Let $X(\mu)$ denote the subsequence of $\Xmin(A)$ consisting of the colors of the non-dummy nodes of $\mu$ in preorder. If the non-dummy nodes of $\mu$ correspond to one substring of $\BP(\Min(A))$, then their preorder numbers form a consecutive interval, and hence $X(\mu)$ is obtained from the corresponding substring of $\Xmin(A)$. If they correspond to two substrings of $\BP(\Min(A))$, then their preorder numbers form at most two consecutive intervals, and $X(\mu)$ is obtained by concatenating the two corresponding substrings of $\Xmin(A)$.

We set $B=(\log n)/8$. Using the approach of Hamada et al.~\cite{hamada2024}, $\BP(\mu)$ can be obtained in constant time from $\wordacc$ on $\BP(T)$ for any micro tree $\mu$. Similarly, $X(\mu)$ can be obtained in constant time from $\wordacc$ on $\Xmin(A)$. Thus, the pair consisting of $\BP(\mu)$ and $X(\mu)$ can be used as an index to a universal precomputed table with at most $2^{6B}$ entries.

With this table, the remaining support for $\PRS$ and $\NRS$ is obtained by adapting the tree-cover-based support for $\NRS$ in Fischer's representation of colored 2d-min heaps~\cite{fischer2011}. The main difference is that the top-tier tree $T_{\mathrm{top}}$ of the modified tree cover replaces the mini-tree layer used in Fischer's construction.

\subsection{Proof of Lemma~\ref{lem:all-wordacc}}\label{subsec:proof-all-wordacc}

\restAllWordacc*

We use the following formulations of the reconstruction results of Jo and Kim~\cite{jokim2025}.
Their Theorem~2 shows how to support $\wordacc$ on both $\BP(\Min(A))$ and $\BP(\Max(A))$ using $U$, an alternative representation of $S$, and $o(n)$ bits of auxiliary structures.
Specifically, their formulation uses two sequences, denoted $D$ and $E$ in their paper, to represent $S$ in compressed form.
The same reconstruction framework applies when constant-time $\wordacc$ on $S$ is available directly, yielding the first lemma below.
The second lemma formulates their Theorem~3 in terms of the bit array $C$ and the induced subsequence $A'$.

\begin{lemma}[{\cite[Theorem~2]{jokim2025}}]\label{lem:cminmax}
    Given an array $A\in \Aneq$, suppose there is a data structure that supports $\wordacc$ on both $U$ and $S$ in constant time. Then, with $o(n)$ bits of auxiliary structures, $\wordacc$ on both $\BP(\Min(A))$ and $\BP(\Max(A))$ can be supported in constant time.
\end{lemma}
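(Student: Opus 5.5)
The plan is to show that $\BP(\Min(A))$ is a local, position-by-position rewriting of $S$ controlled by $U$, and then to decode it block by block with lookup tables. $\BP(\Max(A))$ is symmetric, with the roles of $0$ and $1$ in $U$ exchanged.

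\textbf{Step 1: the rewriting rule.} By Proposition~\ref{prop:minmax}, for each $i\ge2$ exactly one of two cases occurs.
\begin{itemize}
\item If $U[i-1]=0$, then node $i-1$ is internal in $\Min(A)$. No pop occurs in the min stack, so the segment of $\BP(\Min(A))$ contributed just before and at node $i$ is a single opening parenthesis.
\item If $U[i-1]=1$, then $\Min(A)$ is the heap in which node $i$ is a non-leftmost child. The segment is $)^{p_i}($, which is exactly $s_i=1^{p_i-1}0$ with every $1$ read as `$)$' and the final $0$ read as `$)($'.
\end{itemize}
Consequently $\BP(\Min(A))$ equals $(($, followed by $\phi_{U[1]}(s_2)\cdots\phi_{U[n-1]}(s_n)$, followed by a closing run whose length is determined by the final stack size. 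Here $\phi_0$ maps any piece to `$($' and $\phi_1$ is the substitution just described. On top of the given $\wordacc$ on $U$ and $S$, I would add the standard $o(n)$-bit rank/select directories (these only need word access to be built and queried). Then the start of $s_i$ in $S$ is $\select_0(i-2,S)+1$, and the number of symbols of $\BP(\Min(A))$ preceding node $i$ is determined by $\rank$ computations on $U$ together with $\rank/\select$ on $S$.

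\textbf{Step 2: mapping a requested word to its origin.} Given a position $x$ of $\BP(\Min(A))$, I need the node $i$ and the offset inside $\phi(s_i)$ where $x$ falls. I would store samples every $\log^2 n$ symbols of $\BP(\Min(A))$ with absolute node indices, and every $b=\Theta(\log n)$ symbols with node indices relative to the enclosing superblock. This is legitimate because every node contributes at least one symbol, so relative values fit in $O(\log\log n)$ bits. The total cost is $o(n)$ bits.

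\textbf{Step 3: decoding a block, the main obstacle.} A block of $\BP(\Min(A))$ covers a run of at most $b$ nodes. Its content depends on the corresponding $b$ bits of $U$ and on the Min-owned pieces $s_i$, namely those with $U[i-1]=1$. The difficulty is that Min-owned pieces are interleaved in $S$ with Max-owned pieces. A Max-owned piece contributes only one `$($' to $\BP(\Min(A))$ but may be arbitrarily long in $S$. Similarly, a single Min-owned piece may be long, but then the block lies entirely inside a run of `$)$' and can be emitted directly.

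My approach is a light/heavy classification of the chunks of $b$ consecutive nodes.
\begin{itemize}
\item A chunk is \emph{light} if the total length of its pieces in $S$ is at most $cb$ for a small constant $c$. A light chunk is decoded in $O(1)$ time by a table indexed by its $b$ bits of $U$ and its $cb$ bits of $S$; the table outputs $\phi$ applied to the chunk.
\item Since $|S|\le 2n$, there are at most $2n/(cb)$ heavy chunks. In a heavy chunk I would skip each long Max-owned piece via $\select$ on $U$ and $S$, leaving short gaps that the same table decodes.
\end{itemize}
What remains to check is that a single word meets only $O(1)$ such skips in total. If that fails, I would store for heavy chunks extra sampled pointers into $S$, one per long piece. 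Their number is bounded by $|S|/b$, which gives $O(n/b\cdot\log\log n)$ bits when the pointers are stored relative to the superblock, keeping the overhead within $o(n)$.

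I expect exactly this amortization over long Max-owned pieces to be the delicate part of the argument. The rest of the construction is routine two-level sampling plus lookup tables.
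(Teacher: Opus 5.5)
\textbf{Step~3 has a genuine gap, and it cannot be closed under the lemma's hypothesis.}

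First, the hope that a word meets only $O(1)$ skips is false, and neither fallback repairs it. Start with $K$ steps that each pop one index from the min-stack, so $S$ begins with $0^K$ and the max-stack grows to size $K+1$. Then repeat $P$ times: a Max-owned node popping $m+1-x_j$ indices, followed by a Min-owned node popping $1+x_j$ indices, with $x_j\in\{0,1\}$ arbitrary. By rule~\eqref{rule:minmax-no-cons}, every such pop sequence is realized by a permutation as long as the max-stack keeps at least $m+1$ entries, so $P=\Theta(n/m)$. Pair $j$ occupies exactly $m+2$ bits of $S$, namely $1^{m-x_j}0\,1^{x_j}0$, but only $3+x_j$ symbols of $\BP(\Min(A))$. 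Hence every block of $\BP(\Min(A))$ contains $\Theta(b)$ Min-owned pieces, and consecutive ones are separated in $S$ by Max-owned pieces of length about $m$. This breaks each branch of your plan:
\begin{itemize}
\item For $m\ge b$ there are $\Theta(b)$ skips per word, and keeping a pointer per long piece does not reduce that count.
\item For constant $m$ no piece is long, yet the needed bits of one block are spread over about $(m+2)/3.5$ words of $S$.
\item Storing heavy blocks explicitly with span threshold $cb$ costs $\Theta(n/c)$ bits on this family when $m\approx 3.5c$. That is $o(n)$ only if $c=\omega(1)$, and then light blocks are no longer $O(1)$ words.
\end{itemize}
The same restricted-sum issue already undermines the end of Step~1. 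The number of symbols before node $i$ is $\sum p_j$ over Min-owned $j<i$ only, which is not a rank on $S$.

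Second, this is an obstruction to the statement itself, not a weakness of your amortization, because the only access to $S$ is word access. In the family above, $U$ and the offsets of all pairs in $S$ do not depend on $x=(x_1,\dots,x_P)$. Suppose each word query on $\BP(\Min(A))$ makes at most $C$ probes into $S$ and the auxiliary structure has $a$ bits. Simulate the at most $4P/w+2$ queries covering the alternating part. This recovers $x$ from the $a$ auxiliary bits plus, for each probe, the at most $w/(m+2)+2$ not-yet-known bits $x_j$ that the probe touches. Hence $P\le a+4PC/(m+2)+O(PC/w+Cw+\log n)$. Taking $m=8C$ gives $a\ge P/2-o(n)=\Omega(n/C)$, which is linear for every constant $C$.

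What your Steps~1--2 really need is word access to the two subsequences of $S$ formed by the pieces $s_i$ with $U[i-1]=1$ and with $U[i-1]=0$. Given those, one table lookup on $b$ bits of $U$ and $b$ bits of the min-subsequence decodes each block. That access must come from the representation itself; it does not follow from word access on $S$. The paper does not supply this either: it only cites Jo and Kim's Theorem~2, which is stated for their own $D,E$ encoding of $S$. It then asserts that the result transfers to plain word access on $S$, and the counting argument above rules out that transfer.
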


\begin{lemma}[{\cite[Theorem~3]{jokim2025}}]\label{lem:gen-bp}
    Given an array $A\in\Agen_n$, consider the bit array $C$ and the induced subsequence $A'\in\Aneq_m$. Suppose there is a data structure that supports $\wordacc$ on both $\BP(\Min(A'))$ and $\BP(\Max(A'))$ in constant time. Then, together with $C$ and $o(n)$ additional bits, $\wordacc$ on both $\BP(\Min(A))$ and $\BP(\Max(A))$ can be supported in constant time.
\end{lemma}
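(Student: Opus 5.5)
The statement to prove is Lemma~\ref{lem:gen-bp}. The plan is to reconstruct $\BP(\Min(A))$ from $\BP(\Min(A'))$ and $C$ by a local rewriting rule, and then to support $\wordacc$ on the result through a two-level block decomposition with a lookup table. The case of $\Max(A)$ is symmetric.

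\textbf{The rewriting rule.} Let position $i$ of $A$ have $C[i]=1$. By the second part of Proposition~\ref{prop:minmax}, exactly one pop occurs in each stack, and node $i$ is appended as the right sibling of node $i-1$, which becomes a leaf. Consider a maximal run of positions with $C=1$ following a position $j$ with $C[j]=0$. All nodes in this run are siblings of node $j$, and node $j$'s copies are interchangeable for later comparisons, so the stack evolution on $A$ is that on $A'$ except that each equal copy is pushed and popped once. Consequently $\BP(\Min(A))$ is obtained from $\BP(\Min(A'))$ as follows. Let $j'$ be the index of $j$ in $A'$, and locate the opening parenthesis of node $j'$. Immediately before it, insert the string $(\,)$ once for each copy except the last, so that the last copy inherits $j'$'s subtree. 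I would verify this against Figure~\ref{fig:cminmax}: for $A=4\,2\,5\,5\,\dots$, nodes $3,4$ become ``$()($'' in place of ``$($''.

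\textbf{Positions in the output.} Write $r_t$ for the number of ones in $C$ attached to the $t$-th zero, so the opening parenthesis of the $t$-th non-root node of $\Min(A')$ is expanded into $(\,)^{r_t}($. The position in $\BP(\Min(A))$ of the original parenthesis at position $x$ of $\BP(\Min(A'))$ is therefore $x$ plus twice the number of ones in $C$ before the $(t{+}1)$-th zero, where $t$ is the number of opening parentheses up to $x$. This is computable in constant time with $\rank/\select$ on $C$ (Lemma~\ref{lem:rrr}) and on $\BP(\Min(A'))$ (available with $o(n)$ bits).

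\textbf{Supporting $\wordacc$.} Partition $\BP(\Min(A))$ into superblocks of length $\log^2 n$ and microblocks of length $(\log n)/4$. For each superblock store the corresponding offset into $\BP(\Min(A'))$ and into $C$, using $O(\log n)$ bits each. For each microblock store the relative offsets, using $O(\log\log n)$ bits each, together with a flag recording whether it starts in the middle of an inserted ``$()$'' pair; the total overhead is $o(n)$. To decode a microblock, read one word of $\BP(\Min(A'))$ and one word of $C$ at the stored offsets, and feed both strings, plus the flag, to a precomputed table of size $2^{O(b)}\log n=o(n)$. The table simulates the expansion: on each ``$($'' it consults the next bits of $C$, and for each $1$ it emits ``$()$''. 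Since every $1$ in $C$ produces exactly two output symbols and each input symbol at least one, a microblock consumes at most $b$ bits of each input, so a single word of each suffices. A requested word spans $O(1)$ microblocks, so $\wordacc$ runs in constant time.

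\textbf{Main obstacle.} The step I expect to be hardest is proving the rewriting rule correctly. In particular I must show that an equal run never affects the pops of later elements beyond the inserted leaves. This requires that equal copies in the stack are popped together (the condition $\ge$ in Algorithm~\ref{alg:minheap}) and that only the top copy remains on the stack. The microblock boundary bookkeeping, meanwhile, is routine.
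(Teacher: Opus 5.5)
Your proposal is correct, but there is no in-paper proof to compare it with. The paper does not prove this lemma; it imports it as a reformulation of Theorem~3 of Jo and Kim. What you give is a valid self-contained reconstruction.

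Your structural fact is right. By Proposition~\ref{prop:minmax}, consider a node $j'$ of $A'$ whose run in $A$ has $r$ equal followers. In both $\Min(A)$ and $\Max(A)$ it is replaced by $r$ leaf siblings, followed by one node that inherits the subtree of $j'$. So the opening parenthesis of $j'$ becomes $(\,)^r($, as your check against Figure~\ref{fig:cminmax} confirms.

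Your decoder state also suffices: an offset into $\BP(\Min(A'))$, an offset into $C$, and one flag. The $C$-offset is determined by the number of non-root opening parentheses already emitted, and the bit $C[i+1]$ decides whether node $i$ is an inserted leaf. This mirrors the paper's own decoding of $\Xmin(A)$ from $\Xmin(A')$ and $C$ in the proof of Lemma~\ref{lem:all-wordacc}, applied to parentheses instead of colors. The citation buys brevity; your version makes explicit why the $\BP$ sequences of $A$ are a local, table-decodable expansion of those of $A'$.

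Two small remarks:
\begin{itemize}
\item The step you flag as the main obstacle is immediate. The stack values are strictly monotone, so an equal element pops exactly its predecessor, and never more than one copy of a value is on the stack.
\item Your offset formula using the $(t{+}1)$-th zero is right only if $t$ excludes the root's opening parenthesis; otherwise it should be the $t$-th zero. Your microblock scheme stores offsets explicitly, so it never actually needs this formula.
\end{itemize}
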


\subparagraph{Query support for $\Aneq_n$.}
By Lemmas~\ref{lem:cbp} and~\ref{lem:cminmax}, it remains only to support $\wordacc$ on both $\Xmin(A)$ and $\Xmax(A)$.

In the present representation, $U$ is exactly the leaf bit array for $\Min(A)$, while the complement of $U$ is the leaf bit array for $\Max(A)$ by Proposition~\ref{prop:minmax}. Thus, for $i\ge2$, node $i$ is a non-leftmost child in $\Min(A)$ iff $U[i-1]=1$, and it is a non-leftmost child in $\Max(A)$ iff $U[i-1]=0$.

Recall that we regard $S$ as part of the balanced-parentheses sequence $\BP(T)$ of an ordinal tree $T$, and store $T$ using Lemma~\ref{lem:narayana}. The representation of Lemma~\ref{lem:narayana} uses the bit array $L$. Since $L[j]=1$ iff the $(j+1)$-st vertex of $T$ in preorder is a leaf, the discussion above implies that, for $i\ge3$, if node $i$ is a non-leftmost child in $\Min(A)$, then its left sibling is internal iff $L[i-2]=1$.

Hence, for $\cMin(A)$, the color of node $i$ is determined as follows. If $i=0$ or $i=1$, then it is always blue. If $i=2$, then it is blue when $U[1]=0$ and red when $U[1]=1$. For $i\ge3$, the color is stored in $c_{\min}$ precisely when $U[i-1]=1$ and $L[i-2]=1$. If $U[i-1]=1$ and $L[i-2]=0$, then the color is forced to be red; if $U[i-1]=0$, then node $i$ is a leftmost child in $\Min(A)$, so its color is blue. The colors in $\cMax(A)$ are determined symmetrically.

We handle the colors of nodes $0$, $1$, and $2$ separately and partition the remaining suffix of $\Xmin(A)$ into blocks of length $b=(\log n)/4$.
We group these blocks into superblocks of length $O(\log^2 n)$.
For each superblock, we store the number of entries of $c_{\min}$ corresponding to nodes preceding it, and for each block, we store the corresponding count within its superblock.
These counts use $o(n)$ bits in total and determine the starting position in $c_{\min}$ for each block in constant time.

Consider a block whose first entry stores the color of node $p$.
We read the next $b$ bits of $U$ starting from position $p-1$, the next $b$ bits of $L$ starting from position $p-2$, and the next $b$ bits of $c_{\min}$ starting from the position determined by the stored counts.
A precomputed table indexed by these three strings reconstructs the block according to the color rules above.
The precomputed table has size $2^{3b}b=o(n)$ bits.
Since any requested word intersects only a constant number of blocks, both color bit sequences support $\wordacc$ in constant time using $o(n)$ additional bits.

\subparagraph{Query support for $\Agen_n$.}

For query support, by Lemma~\ref{lem:gen-bp}, it suffices to support $\wordacc$ on both $\Xmin(A)$ and $\Xmax(A)$ in constant time. Consider the color bit sequences $\Xmin(A')$ and $\Xmax(A')$ for $A'$. For each $i\in[1,n]$, let $j=\rank_0(i,C)$. If $C[i]=0$, then node $i$ of $A$ corresponds to node $j$ of $A'$, and its colors in $\cMin(A)$ and $\cMax(A)$ are given by the corresponding entries of $\Xmin(A')$ and $\Xmax(A')$, respectively. If $C[i]=1$, then by Proposition~\ref{prop:cminmax}, node $i$ is forced to be blue in both colored heaps.

The first bit of $\Xmin(A)$ is always $0$, since it corresponds to the root. We therefore partition the suffix of $\Xmin(A)$ after this bit into blocks of length $b=(\log n)/4$. Consider a block whose first entry stores the color of node $p$. Let $j=\rank_0(p-1,C)+1$. We decode the next $b$ bits of $C$ starting from position $p$, and the next $b$ bits of $\Xmin(A')$ starting from the entry for node $j$. A precomputed table indexed by these two length-$b$ strings returns the corresponding block of $\Xmin(A)$: positions whose corresponding bit of $C$ is $1$ are filled with $0$, and the remaining positions are filled, in order, with the bits read from $\Xmin(A')$. The table has size $2^{2b}b=o(n)$ bits, so each block of $\Xmin(A)$ can be decoded in constant time. The same construction supports $\wordacc$ on $\Xmax(A)$. Consequently, by Lemma~\ref{lem:cbp}, all desired queries are supported in constant time.

\section{Lower bounds for query subfamilies}\label{app:lb}

\begin{table}[ht]
\centering
\resizebox{\textwidth}{!}{
\begin{tblr}{
    colspec={|c|l|l|c|c|c|c|},
    cells={valign=m,halign=c},
    column{2,3}={halign=l},
}
\hline
\SetCell[c=2]{c} Generating tree &
&
\SetCell[r=2]{c} Step multiset $\mathcal S$ &
\SetCell[r=2]{c} {Positive critical\\point $(a,b)$} &
\SetCell[r=2]{c} {Growth\\constant} &
\SetCell[c=2]{c} Data structure &
\\
\cline{1-2}\cline{6-7}
Root & \SetCell{c} Rule & & & & Queries & Array type \\
\hline
\SetCell[r=4]{c} $(1)$ &
\SetCell[r=2]{l} $(k)\rightsquigarrow(1)(2)\cdots(k+1)$ &
\SetCell[r=2]{c} -- &
\SetCell[r=2]{c} -- &
\SetCell[r=2]{c} 4 &
$\PSV$ & $\Perm, \Aneq, \Agen$ \\
\cline{6-7}
& & & & & $\PSV, \NSV$ & $\Perm$ \\
\cline{2-7}
&
$(k)\rightsquigarrow(1)^2(2)^2\cdots(k-1)^2(k)(k+1)$ &
\SetCell{c} -- &
-- &
$2+2\sqrt{2}$ &
$\PSV,\NSV$ & $\Aneq$ \\
\cline{2-7}
&
$(k)\rightsquigarrow(1)^2(2)^2\cdots(k)^2(k+1)$ &
\SetCell{c} -- &
-- &
$3+2\sqrt{2}$ &
$\PSV,\NSV$ & $\Agen$ \\
\hline
\SetCell[r=9]{c} $(1,1)$ &
\SetCell[r=4]{l} {$
    (k,\ell)\rightsquigarrow
    \begin{cases}
        (1,\ell+1)(2,\ell+1)\cdots(k,\ell+1),\\
        (k+1,1)(k+1,2)\cdots(k+1,\ell)
    \end{cases}
$} &
\SetCell[r=4]{l} {$
    \begin{aligned}
    \{&(1,0),(0,1),(-1,0),(0,-1),\\
       &(1,-1),(-1,1),2(0,0)\}
    \end{aligned}
$} &
\SetCell[r=4]{c} $(1,1)$ &
\SetCell[r=4]{c} $8$ &
$\PSV,\PLV$ & $\Perm, \Aneq$ \\
\cline{6-7}
& & & & & $\PSV,\NLV$ & $\Perm, \Aneq, \Agen$ \\
\cline{6-7}
& & & & & $\PSV,\NSV,\PLV$ & $\Perm$ \\
\cline{6-7}
& & & & & {$\PSV,\NSV,$\\$\PLV,\NLV$} & $\Perm$ \\
\cline{2-7}
&
{$
    (k,\ell)\rightsquigarrow
    \begin{cases}
        (1,\ell+1)(2,\ell+1)\cdots(k,\ell+1),\\
        (k+1,1)(k+1,2)\cdots(k+1,\ell),\\
        (k,\ell)
    \end{cases}
$} &
{$
    \begin{aligned}
    \{&(1,0),(0,1),(-1,0),(0,-1),\\
       &(1,-1),(-1,1),3(0,0)\}
    \end{aligned}
$} &
$(1,1)$ &
$9$ &
$\PSV,\PLV$ & $\Agen$ \\
\cline{2-7}
&
{$
    (k,\ell)\rightsquigarrow
    \begin{cases}
        (1,\ell+1)^2(2,\ell+1)^2\cdots(k-1,\ell+1)^2(k,\ell+1),\\
        (k+1,1)(k+1,2)\cdots(k+1,\ell)
    \end{cases}
$} &
{$
    \begin{aligned}
    \{&(1,0),(0,1),2(-1,0),(0,-1),\\
       &(1,-1),2(-1,1),2(0,0)\}
    \end{aligned}
$} &
$(\sqrt{2},1)$ &
$4+4\sqrt{2}$ &
$\PSV,\NSV,\PLV$ & $\Aneq$ \\
\cline{2-7}
&
{$
    (k,\ell)\rightsquigarrow
    \begin{cases}
        (1,\ell+1)^2(2,\ell+1)^2\cdots(k-1,\ell+1)^2(k,\ell+1),\\
        (k+1,1)(k+1,2)\cdots(k+1,\ell),\\
        (k,\ell)
    \end{cases}
$} &
{$
    \begin{aligned}
    \{&(1,0),(0,1),2(-1,0),(0,-1),\\
       &(1,-1),2(-1,1),3(0,0)\}
    \end{aligned}
$} &
$(\sqrt{2},1)$ &
$5+4\sqrt{2}$ &
$\PSV,\NSV,\PLV$ & $\Agen$ \\
\cline{2-7}
&
{$
    (k,\ell)\rightsquigarrow
    \begin{cases}
        (1,\ell+1)^2(2,\ell+1)^2\cdots(k-1,\ell+1)^2(k,\ell+1),\\
        (k+1,1)^2(k+1,2)^2\cdots(k+1,\ell-1)^2(k+1,\ell)
    \end{cases}
$} &
{$
    \begin{aligned}
    \{&(1,0),(0,1),2(-1,0),2(0,-1),\\
       &2(1,-1),2(-1,1),2(0,0)\}
    \end{aligned}
$} &
$(\sqrt{2},\sqrt{2})$ &
$6+4\sqrt{2}$ &
{$\PSV,\NSV,$\\$\PLV,\NLV$} & $\Aneq$ \\
\cline{2-7}
&
{$
    (k,\ell)\rightsquigarrow
    \begin{cases}
        (1,\ell+1)^2(2,\ell+1)^2\cdots(k-1,\ell+1)^2(k,\ell+1),\\
        (k+1,1)^2(k+1,2)^2\cdots(k+1,\ell-1)^2(k+1,\ell),\\
        (k,\ell)
    \end{cases}
$} &
{$
    \begin{aligned}
    \{&(1,0),(0,1),2(-1,0),2(0,-1),\\
       &2(1,-1),2(-1,1),3(0,0)\}
    \end{aligned}
$} &
$(\sqrt{2},\sqrt{2})$ &
$7+4\sqrt{2}$ &
{$\PSV,\NSV,$\\$\PLV,\NLV$} & $\Agen$ \\
\hline
\end{tblr}
}
\caption{Generating trees and their corresponding quarter-plane step multisets, where applicable.}
\label{tab:gen-tree}
\end{table}

In this section, we derive ITLBs for each nonempty proper subfamily of $\{\PSV,\PLV,\NSV,\NLV\}$ on the three array classes $\Perm$, $\Aneq$, and $\Agen$, using the framework developed in Section~\ref{sec:lb}.
For subfamilies containing three queries, symmetry allows us to restrict our analysis to $\{\PSV,\PLV,\NSV\}$.
Table~\ref{tab:gen-tree} summarizes the succession rules, the corresponding step multisets and growth constants where applicable, and the query families and array classes to which they apply.

\subsection{The $\{\PSV,\PLV\}$ case.}

We follow the framework of Section~\ref{sec:lb}, counting realizable pairs of uncolored 2d-min and 2d-max heaps.
Rule~\eqref{rule:minmax-no-cons} is known to generate the Baxter numbers, yielding an ITLB of $3n-\Theta(\log n)$ bits for $\Perm$ and $\Aneq$, matching the lower bound obtained by Gawrychowski and Nicholson~\cite{gawrychowski2015}.
Thus, it remains to analyze the case of $\Agen$.
Applying the same framework as in Section~\ref{subsec:gen-to-lb} to rule~\eqref{rule:minmax}, we find that the number of realizable heap pairs of size $n$ equals the number of quarter-plane excursions of length $n-1$ with step multiset
\[
    \mathcal S=\{(1,0),(0,1),(-1,0),(0,-1),(1,-1),(-1,1),3(0,0)\}.
\]
Its characteristic polynomial is
\[
    Q_{\mathcal S}(a,b)=a+b+\frac{1}{a}+\frac{1}{b}+\frac{a}{b}+\frac{b}{a}+3.
\]
The unique positive critical point is $(1,1)$, at which $Q_{\mathcal S}(1,1)=9$.
Applying Lemma~\ref{lem:quarter-walk} and taking logarithms of the resulting counts yields the following lower bound.

\begin{theorem}\label{thm:lb-minmax}
    For arrays in $\Agen_n$, the ITLB for the simultaneous support of $\PSV$ and $\PLV$ is $n\log 9-\Theta(\log n) \approx 3.17n-\Theta(\log n)$ bits.
\end{theorem}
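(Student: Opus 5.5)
The plan mirrors the proof of Theorem~\ref{thm:all-lb}, applied to rule~\eqref{rule:minmax}. Two arrays answer all $\PSV$ and $\PLV$ queries identically iff they induce the same pair $(\Min(A),\Max(A))$. By Section~\ref{subsec:lb-minmax}, these pairs for $A\in\Agen_n$ are in bijection with the nodes at depth $n-1$ of the generating tree with succession rule~\eqref{rule:minmax}. So the ITLB is $\log$ of the number of such nodes, and it remains to show this number is $\Theta(9^n/n^c)$.

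\textbf{Functional equation.} Let $f_{n,k,\ell}$ be the number of depth-$(n-1)$ nodes labeled $(k,\ell)$, and set $\mathcal F(x;y,z)=\sum f_{n,k,\ell}x^ny^kz^\ell$. The children of $(k,\ell)$ contribute
\[
z^{\ell+1}\sum_{i=1}^{k}y^i+y^{k+1}\sum_{j=1}^{\ell}z^j+y^kz^\ell .
\]
Summing the geometric series gives
\[
\mathcal F=xyz+\frac{xz}{1-y}\bigl(y\mathcal F(x;1,z)-y\mathcal F\bigr)+\frac{xy}{1-z}\bigl(z\mathcal F(x;y,1)-z\mathcal F\bigr)+x\mathcal F .
\]
Substitute $y=1+a$, $z=1+b$ and write $\widehat{\mathcal F}(x;a,b)=\mathcal F(x;1+a,1+b)$. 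I expect, after rearranging, the kernel form
\[
(1-xQ_{\mathcal S}(a,b))\widehat{\mathcal F}=x(1+a)(1+b)-\tfrac{x(1+a)(1+b)}{a}\widehat{\mathcal F}(x;0,b)-\tfrac{x(1+a)(1+b)}{b}\widehat{\mathcal F}(x;a,0),
\]
with $Q_{\mathcal S}=a+b+1/a+1/b+a/b+b/a+3$. This is the step where a sign or multiplicity slip is most likely, so I would check the constant term explicitly. The diagonal children $(1,\ell+1)\cdots(k,\ell+1)$ and $(k+1,1)\cdots(k+1,\ell)$ have no doubled labels, so the singular corrections carry coefficient $1$ rather than $2$. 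The three copies of $(0,0)$ come from three sources: the $xz$ term, the $xy$ term (each after the shift), and the explicit $x\mathcal F$ from the equal-element child.

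\textbf{Comparison with walks.} For the step multiset $\mathcal S$, write the walk equation. The forbidden steps at $a=0$ are $(-1,0)$ and $(-1,1)$, giving correction $x(1+b)W(x;0,b)/a$. The forbidden steps at $b=0$ are $(0,-1)$ and $(1,-1)$, giving $x(1+a)W(x;a,0)/b$. No step decreases both coordinates, so these corrections do not overlap. Setting $G=x(1+a)(1+b)W$ turns the walk equation into the equation above. Both equations determine their series uniquely as formal power series in $x$ (the kernel equation is a recursion in $n$), so $G=\widehat{\mathcal F}$. Setting $a=b=0$ gives $\mathcal F(x;1,1)=xW(x;0,0)$. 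Hence the count equals the number of excursions of length $n-1$.

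\textbf{Asymptotics.} The multiset $\mathcal S$ contains $(0,0)$ and $(\pm1,0)$, $(0,\pm1)$, so Lemma~\ref{lem:quarter-walk} applies. The partial derivatives of $Q_{\mathcal S}$ are $1-1/a^2+1/b-b/a^2$ and $1-1/b^2+1/a-a/b^2$, and both vanish at $(1,1)$. By the lemma's uniqueness, $(1,1)$ is the unique positive critical point, and $\mu=Q_{\mathcal S}(1,1)=9$. Thus the number of realizable pairs is $\Theta(9^{n}/n^{c})$, and taking logarithms yields $n\log 9-\Theta(\log n)$ bits.
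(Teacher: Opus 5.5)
Your proposal is correct and follows the paper's own route. The paper likewise applies the quarter-plane-walk framework of Section~\ref{subsec:gen-to-lb} to rule~\eqref{rule:minmax}, obtaining the step multiset with three copies of $(0,0)$, the unique positive critical point $(1,1)$, and $Q_{\mathcal S}(1,1)=9$. You additionally write out the functional equation, the walk equation, and the identification $G=\widehat{\mathcal F}$ (with a uniqueness justification), which the paper leaves implicit.
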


\subsection{The $\{\PSV,\NSV\}$ case.}
For $\Perm$, the colors of a colored 2d-min heap are completely determined by its topology, so counting colored heaps is equivalent to counting uncolored heaps, yielding an ITLB of $2n-\Theta(\log n)$ bits.
For $\Agen$, Fischer~\cite{fischer2011} already established an ITLB of $n\log(3+2\sqrt{2})-\Theta(\log n) \approx 2.54n-\Theta(\log n)$ bits.
Thus, it remains to analyze the case of $\Aneq$.

We count distinct colored 2d-min heaps induced by arrays $A\in \Aneq_n$.
We use the same labeling from Section~\ref{subsec:lb-min}.
As in Section~\ref{subsec:lb-min}, the underlying 2d-min heap is determined by the number of indices popped from the stack at each insertion.
Thus, as in rule~\eqref{rule:min}, each node labeled $k$ has one child with each label in $\{1,2,\dots,k+1\}$ before accounting for colors.
It remains to account for the color of the newly inserted node.

Consider the step of the construction algorithm that processes $A[m+1]$.
If no pop occurs, then the new node is appended as the leftmost child of node $m$.
Hence its color is forced to be blue.
This case gives a child node labeled $k+1$.

If exactly one pop occurs, node $m$ becomes a left leaf sibling of node $m+1$.
Since $A[m]\neq A[m+1]$, the color of the new node is forced to be red, as noted in Proposition~\ref{prop:cminmax}.
This case gives a child node labeled $k$.

If more than one pop occurs, then the new node has a left non-leaf sibling.
In this case, by choosing $A[m+1]$ to be equal to or different from the value of that sibling, either blue or red can be realized for the new node.
Thus, each node labeled $k$ has two children with each label in $\{1,2,\dots,k-1\}$, one for each possible color, and only one child with each of the labels $k$ and $k+1$.

Thus, the generating tree for colored 2d-min heaps on arrays with no consecutive equal elements is described by
\begin{equation}\label{rule:cmin-no-cons}
    \text{Root: }(1) \qquad
    \text{Rule: }(k) \rightsquigarrow (1)^2(2)^2\cdots(k-1)^2(k)(k+1).
\end{equation}
We next estimate the number of objects generated by rule~\eqref{rule:cmin-no-cons}, which yields the desired lower bound.

\begin{theorem}\label{thm:cmin-no-cons-lb}
    Let $a_n$ be the number of objects generated by rule~\eqref{rule:cmin-no-cons}. Then,
    \[
        a_n = \Theta\left(\frac{(2+2\sqrt{2})^n}{n^{1.5}}\right).
    \]
    Hence, the ITLB for the simultaneous support of  $\PSV$ and $\NSV$ on arrays with no consecutive equal elements is $n \log (2+2\sqrt{2}) - \Theta(\log n) \approx 2.27n - \Theta(\log n)$ bits.
\end{theorem}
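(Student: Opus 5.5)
The plan is to turn rule~\eqref{rule:cmin-no-cons} into a functional equation and solve it with the kernel method. The rule is one-dimensional, so the kernel is quadratic and the solution should be algebraic. Set $F(x;y)=\sum_{n\ge1}\sum_{k\ge1} f_{n,k}x^ny^k$, where $f_{n,k}$ counts nodes at depth $n-1$ with label $k$. A node labeled $k$ contributes $2\sum_{i=1}^{k-1}y^i+y^k+y^{k+1}$. Summing the geometric series as in the proof of Proposition~\ref{prop:heap-walk}, we get
\[
F(x;y)=xy+\frac{2xy}{1-y}\bigl(F(x;1)-F(x;y)\bigr)\cdot\frac{1}{1}+x(1+y)F(x;y)-\frac{2x}{1-y}\bigl(F(x;y)-yF(x;y)\bigr)\cdot 0 .
\]
More carefully, $2\sum_{i=1}^{k-1}y^i=\frac{2(y-y^k)}{1-y}$, which gives
\[
\Bigl(1-x(1+y)+\frac{2x}{1-y}\Bigr)F(x;y)=xy+\frac{2xy}{1-y}F(x;1).
\]
Multiplying by $1-y$, the kernel is $K(y)=(1-y)(1-x-xy)+2x$. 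This is quadratic in $y$, and it has a unique root $Y(x)$ that is a formal power series with $Y(0)=1$. Substituting $y=Y$ cancels the left-hand side, so $F(x;1)=-(1-Y)/2$, that is, $F(x;1)=(Y-1)/2$.

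Next I solve $K(Y)=0$ explicitly. Writing $u=Y-1$, the equation becomes $-u(1-2x-xu)+2x=0$, i.e.
\[
xu^2-(1-2x)u+2x=0,\qquad u=\frac{1-2x-\sqrt{(1-2x)^2-8x^2}}{2x}.
\]
The discriminant is $1-4x-4x^2$. Its smallest positive root is $\rho=\frac{-1+\sqrt2}{2}=\frac{1}{2+2\sqrt2}$. I will check the first terms against the tree to confirm the root choice: $u=2x+\dots$ gives $a_1=1$ from $F(x;1)=u/2$, and the next coefficient should give $a_2=2$. The generating function $\sum a_nx^n=u/2$ is then algebraic with a square-root singularity at $\rho$. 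It has no other singularity on $|x|=\rho$: the other root of $1-4x-4x^2$ is negative with larger modulus, and the factor $1/x$ is harmless because $u$ is analytic at $0$.

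Standard transfer (Flajolet--Odlyzko singularity analysis), applied to a term $-c\sqrt{1-x/\rho}$ with $c>0$, gives $a_n\sim C\rho^{-n}n^{-3/2}$. This yields $a_n=\Theta((2+2\sqrt2)^n/n^{1.5})$. The ITLB then follows because distinct $(\PSV,\NSV)$ answers correspond bijectively to colored 2d-min heaps, and these are enumerated by rule~\eqref{rule:cmin-no-cons} via the case analysis preceding the theorem. Taking $\log a_n$ gives $n\log(2+2\sqrt2)-\Theta(\log n)$.

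The main obstacle is getting the functional equation exactly right: the boundary terms for the labels $k$ and $k+1$, and the geometric sum. The other delicate point is selecting the correct kernel root, namely the power-series one, and verifying that no other dominant singularity exists. I will validate both by matching $a_1,a_2,a_3$ with a direct expansion of the rule ($1,2,6$ should be checked by hand).
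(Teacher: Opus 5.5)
Your proposal is correct and follows the paper's proof essentially step for step. Your corrected functional equation (after you discard the garbled first display) multiplies out to the paper's kernel equation $(1-y+x+xy^2)\mathcal F(x;y)=xy(1-y)+2xy\mathcal F(x;1)$, you select the same power-series kernel root, and you obtain the same closed form $\mathcal F(x;1)=\frac{1-2x-\sqrt{1-4x-4x^2}}{4x}$. The only difference is that you carry out the square-root singularity analysis at $\rho=(\sqrt2-1)/2$ explicitly, whereas the paper simply cites OEIS A071356 and standard singularity analysis.
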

\begin{proof}
    Let $f_{n,k}$ be the number of nodes of size $n$ with label $k$, and set
    \[
        \mathcal F(x;y)=\sum_{n\ge1}\sum_{k\ge1} f_{n,k}x^ny^k .
    \]
    The root contributes $xy$. For a node with label $k$, the
    rule~\eqref{rule:cmin-no-cons} contributes
    \[
        2(y+\cdots+y^{k-1})+y^k+y^{k+1}.
    \]
    Therefore,
    \begin{align*}
        \mathcal F(x;y)
        &=xy+x\sum_{n\ge1}\sum_{k\ge1}f_{n,k}x^n
        \left(2(y+\cdots+y^{k-1})+y^k+y^{k+1}\right)\\
        &=xy+x\sum_{n\ge1}\sum_{k\ge1}f_{n,k}x^n
        \left(2\frac{y-y^k}{1-y}+y^k+y^{k+1}\right)\\
        &=xy+\frac{2x}{1-y}
        \left(
            y\sum_{n\ge1}\sum_{k\ge1}f_{n,k}x^n
            -
            \sum_{n\ge1}\sum_{k\ge1}f_{n,k}x^ny^k
        \right)\\
        &\qquad
        +x\sum_{n\ge1}\sum_{k\ge1}f_{n,k}x^n(y^k+y^{k+1})\\
        &=xy+\frac{2x}{1-y}\bigl(y\mathcal F(x;1)-\mathcal F(x;y)\bigr)
            +x(1+y)\mathcal F(x;y).
    \end{align*}
    Multiplying by $1-y$ and rearranging gives
    \begin{equation}\label{eq:cmin-kernel}
    \bigl(1-y+x+xy^2\bigr)\mathcal F(x;y)=xy(1-y)+2xy\mathcal F(x;1).
    \end{equation}
    Here, our goal is to determine the generating function
    \[
        \mathcal F(x;1)=\sum_{n\ge1} a_nx^n .
    \]
    We therefore choose $y=Y(x)$ so that the left-hand side of \eqref{eq:cmin-kernel} becomes zero and leaves an equation for $\mathcal F(x;1)$. Hence $Y$ must satisfy
    \[
        1-Y+x+xY^2=0.
    \]
    Solving this equation for $Y$ gives the two roots
    \[
        \frac{1\pm\sqrt{1-4x-4x^2}}{2x}.
    \]
    Since $Y$ will be substituted into the generating function, we need $Y$ to be a power series in $x$. Thus we choose
    \[
        Y=\frac{1-\sqrt{1-4x-4x^2}}{2x}.
    \]
    Substituting $y=Y$ into \eqref{eq:cmin-kernel} gives
    \[
        0=xY(1-Y)+2xY\mathcal F(x;1),
    \]
    and hence
    \[
        \mathcal F(x;1)= \frac{1-2x-\sqrt{1-4x-4x^2}}{4x}.
    \]

    This generating function, and the corresponding sequence $(a_n)_{n\ge1}$, are listed in~\cite{oeis} as \href{https://oeis.org/A071356}{A071356}. The desired asymptotic follows by standard singularity analysis on the generating function~\cite[Chapter VI]{flajolet2009}; it is also noted in~\cite[\href{https://oeis.org/A071356}{A071356}]{oeis}.
\end{proof}

\subsection{The $\{\PSV,\NLV\}$ case.}

Let $A^R$ denote the reversal of $A$, defined by $A^R[j] = A[n-j+1]\text{ for } j\in[1,n].$ For an index $i$ in $A$, we write $\rev(i) := n - i + 1$ for the corresponding index in $A^R$. Under this correspondence, $\PSV$ and $\PLV$ on $A^R$ correspond to $\NSV$ and $\NLV$ on $A$, respectively. Similarly, $\RMin(\cdot,\cdot,n)$ and $\RMax(\cdot,\cdot,n)$ on $A^R$ correspond to $\RMin(\cdot,\cdot,1)$ and $\RMax(\cdot,\cdot,1)$ on $A$, respectively. We now relate the constructions of $\Min(A)$ and $\Max(A^R)$.

\begin{proposition} \label{prop:min-reversed-max}
For every $1\le i<n$, no pop occurs at step $i+1$ in the construction of $\Min(A)$ if and only if no pop occurs at step $\rev(i)$ in the construction of $\Max(A^R)$. Consequently, node $i+1$ is a leftmost child in $\Min(A)$ if and only if node $\rev(i)$ is a leftmost child in $\Max(A^R)$.
\end{proposition}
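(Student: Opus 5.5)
The plan is to translate both "no pop" conditions into comparisons between the same two adjacent entries of $A$, and check that they coincide.

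\emph{Step 1 (the $\Min(A)$ side).} At step $i+1$ of Algorithm~\ref{alg:minheap} the stack top is index $i$, since $i$ was just pushed. The while-loop pops nothing iff $A[i] < A[i+1]$. Indeed, the loop condition $A[S.\mathrm{top}()] \ge A[i+1]$ fails on its first test exactly when $A[i]<A[i+1]$. Note that $i\ge 1$, so the sentinel $A[0]$ plays no role.

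\emph{Step 2 (the $\Max(A^R)$ side).} At step $\rev(i)=n-i+1$, the stack top is $\rev(i)-1=n-i$. Since $i<n$, we have $\rev(i)\ge 2$, so this top is a genuine index and not the sentinel. In the max construction the comparison is $\le$. Hence no pop occurs iff $A^R[n-i] > A^R[n-i+1]$. By definition $A^R[n-i]=A[i+1]$ and $A^R[n-i+1]=A[i]$, so this condition reads $A[i+1] > A[i]$. This is the same condition as in Step~1, which proves the first claim. The argument also covers ties: if $A[i]=A[i+1]$, both constructions pop.

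\emph{Step 3 (consequence).} A new node is appended as the rightmost child of the stack top after popping. With no pop, that top is the previous index, which has no children yet, so the new node becomes its leftmost (first) child. Conversely, if at least one pop occurs, the previous index $j$ was popped. The new parent $p$ is then below $j$ on the stack, and $p$ already has a child on the path to $j$ (possibly $j$ itself). So the new node is not a leftmost child. This is the same reasoning as in Proposition~\ref{prop:minmax}. Thus, for both heaps, "leftmost child" is equivalent to "no pop at that step". The equivalence then transfers directly: node $i+1$ is a leftmost child in $\Min(A)$ iff node $\rev(i)$ is a leftmost child in $\Max(A^R)$.

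I expect no real obstacle. The only care needed is the index bookkeeping: confirming that $\rev(i)-1=n-i$ corresponds to position $i+1$ of $A$, and that both stack tops are non-sentinel indices given $1\le i<n$.
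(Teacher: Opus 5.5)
Your proof is correct and matches the paper's argument: both note that the stack tops are $i$ and $\rev(i+1)=n-i$, respectively, and reduce each no-pop condition to the comparison $A[i]<A[i+1]$. Your Step~3 also justifies why ``no pop'' is equivalent to ``leftmost child'' in either heap, which the paper leaves implicit in the word ``Consequently.''
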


\begin{proof}
Immediately before step $i+1$ in the construction of $\Min(A)$, the top of the stack is $i$. Therefore, no pop occurs at this step if and only if $A[i] < A[i+1]$. In the construction of $\Max(A^R)$, immediately before step $\rev(i)$, the top of the stack is $\rev(i+1)$. Thus, no pop occurs at this step if and only if $A^R[\rev(i+1)] > A^R[\rev(i)]$. By the definition of $A^R$, this condition is equivalent to  $A[i+1] > A[i]$. Hence, the two conditions are equivalent.
\end{proof}

Here, we first restrict the class of arrays to permutations. Thus, it suffices to count the realizable pairs $(\Min(P),\Max(P^R))$ induced by arrays $P\in \Perm_n$.  We use a pair of labels $(k,\ell)$. The first label $k$ is defined as in the $\PSV$ case for the construction of $\Min(P)$. The second label $\ell$ is the number of children of the root in $\Max(P^R)$.

Consider the step of the construction algorithm that processes a new element $x=P[m+1]$, viewed as appending $x$ as the $(m+1)$-st element of $P$. Since $P$ is a permutation, $x$ represents a new rank inserted among the existing values; in $P^R$, this corresponds to prepending the same new element.

Let $T_1,T_2,\ldots,T_\ell$ be the subtrees rooted at the children of the root of $\Max(P^R)$, from left to right, and let $w_1 < w_2 < \cdots < w_\ell$ be the values at their roots.
After $x$ is prepended to $P^R$, it is the first element processed by the max-heap construction and has no preceding larger element, so its node is attached as the leftmost child of the root.
It has $T_1,T_2,\ldots,T_t$ as its child subtrees for some $0\le t\le\ell$.
Moreover, every value of $t$ is realizable: for $t=0$, choose $x < w_1=P[m]$; for $1\le t\le\ell$, choose $x$ so that $w_t< x < w_{t+1}$, where $w_{\ell+1}=+\infty$. If $t=0$, the new root degree becomes $\ell+1$. If $t \ge 1$, then the new root degree becomes $1+(\ell-t)$. Thus, any value in $1,2,\ldots,\ell$ can be the second label.

By Proposition~\ref{prop:min-reversed-max}, the case $t\ge 1$ occurs if and only if no pop occurs in the construction of $\Min(P)$ while processing $x=P[m+1]$. Moreover, the choices of $x$ required by the two labels are compatible. The case $t=0$ is exactly the range $x\le P[m]$, where a pop occurs in $\Min(P)$ and the first label can be chosen from $1,\ldots,k$ as in the $\PSV$ case. The cases $t\ge1$ all lie in the range $x>P[m]$, where no pop occurs and the first label becomes $k+1$. Therefore we obtain the same succession rule as rule~\eqref{rule:minmax-no-cons}, which is the Baxter succession rule. Thus, we obtain the following lower bound.

For arrays in $\Aneq_n$ and $\Agen_n$, allowing repeated values introduces no additional children beyond those prescribed by the Baxter succession rule.
It can only make the sequence $w_1,w_2,\ldots,w_\ell$ nondecreasing rather than strictly increasing, and hence may remove some choices of $t$ for a particular array.
Nevertheless, the second label of each child is still one of the values $1,2,\ldots,\ell+1$ described above, with the first label determined as before.
Since permutations already realize all pairs counted by the Baxter succession rule, the number of realizable pairs $(\Min(A),\Max(A^R))$ is the Baxter number in all three array classes.
Thus, we obtain the following lower bound.

\begin{theorem}\label{thm:minrmax-lb}
    For permutations, arrays with no consecutive equal elements and general arrays, the ITLB for the simultaneous support of $\PSV$ and $\NLV$ is $3n-\Theta(\log n)$ bits.
\end{theorem}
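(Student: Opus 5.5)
The plan is to reduce the ITLB to counting realizable pairs $(\Min(A),\Max(A^R))$, and then count those pairs with the Baxter succession rule established just before the statement.

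First, the query answers determine the pair. Two arrays have identical answers to all $\PSV$ queries iff they have the same $\Min(A)$. Under the reversal correspondence $\rev$, $\NLV$ on $A$ is $\PLV$ on $A^R$. Hence two arrays agree on all $\NLV$ queries iff $A^R$ and $B^R$ have the same $\Max$. So the number of equivalence classes for $\{\PSV,\NLV\}$ on $\Perm_n$, $\Aneq_n$, or $\Agen_n$ equals the number of distinct realizable pairs $(\Min(A),\Max(A^R))$ in that class.

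Second, count the pairs, separately for the two directions.
\begin{itemize}
\item For permutations, the analysis preceding the statement gives a generating tree with rule~\eqref{rule:minmax-no-cons}. This tree uses the label $(k,\ell)$, where $k$ is the min-stack size and $\ell$ is the root degree of $\Max(P^R)$. It also uses Proposition~\ref{prop:min-reversed-max} to couple the pop/no-pop cases. I would check that distinct children correspond to distinct pairs, so the tree has no double counting. The new node $m+1$ is placed differently in $\Min$, or the parameter $t$ differs and changes the children of the new leftmost root child in $\Max(P^R)$. Every pair thus arises exactly once, and the count is the Baxter number $B_n$.
\item For $\Aneq$ and $\Agen$, the lower bound is immediate from inclusion $\Perm_n\subseteq\Aneq_n\subseteq\Agen_n$, since the count is at least $B_n$.
\item For the upper direction, I would argue as the text does. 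Every array $A$ (possibly with repeats) has some permutation $P$ with the same pair. Take $P$ by breaking ties among equal values so that $P$ is order-isomorphic to $A$ with equal elements ranked in decreasing order of position. Then ``$A[j]<A[i]$'' for $j<i$ matches $P[j]<P[i]$ exactly as needed for $\PSV$. Likewise, ``$A[j]>A[i]$'' for $j>i$ needs equal values to not count as larger to the right, which holds since later equal elements get smaller rank.
\end{itemize}
I would verify this one tie-breaking rule works simultaneously for both queries; this is the main obstacle, and it seems consistent since both queries treat equal elements as ``not smaller to the left / not larger to the right.'' This shows the count is at most $B_n$.

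Finally, recall that Baxter numbers satisfy $B_n=\Theta(8^n/n^4)$, as seen via Table~\ref{tab:gen-tree} (growth constant $8$) or the known formula. Taking logarithms gives $3n-\Theta(\log n)$ bits for all three classes.
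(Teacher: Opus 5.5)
Your proof is correct. For permutations it is the paper's argument: count pairs $(\Min(P),\Max(P^R))$ with labels $(k,\ell)$, use Proposition~\ref{prop:min-reversed-max} to couple $t=0$ with a pop in $\Min$ and $t\ge1$ with no pop, and recover the Baxter rule~\eqref{rule:minmax-no-cons}.

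You depart from the paper in the extension to $\Aneq$ and $\Agen$, specifically the upper direction. The paper stays inside the generating-tree framework. It argues that repeated values only make $w_1\le\cdots\le w_\ell$ nondecreasing, so they can remove choices of $t$ but never create children outside the Baxter rule. This is implicitly an induction showing that every pair realized by a general array is already a node of the permutation tree.

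You instead use a global tie-breaking reduction: ranking equal values in decreasing order of position gives a permutation $P$ with the same $\PSV$ and $\NLV$ answers as $A$. This works because both queries make the same demand on a tie $j<i$ with $A[j]=A[i]$, namely $P[j]>P[i]$. For $\PSV(i)$, $j$ must not count as smaller; for $\NLV(j)$, $i$ must not count as larger. Your "main obstacle" is therefore already resolved by the check you sketched.

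Your route is more elementary and avoids per-array reasoning about which values of $t$ remain available. It shows directly that every $\{\PSV,\NLV\}$-equivalence class in $\Agen_n$ contains a permutation. It also explains why this subfamily, unlike $\{\PSV,\PLV\}$, does not grow on general arrays: for $\PLV$ a tie would instead require $P[j]<P[i]$, which conflicts with $\PSV$. The paper's route keeps the uniform generating-tree presentation it uses for every other subfamily.
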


\subsection{The $\{\PSV,\PLV,\NSV\}$ case.}

\begin{theorem}\label{thm:three-lb}
    For arrays in $\Aneq_n$, the ITLB for the simultaneous support of $\PSV$,$\PLV$, and $\NSV$ is $n\log(4+4\sqrt{2})-\Theta(\log n)
    \approx 3.27n-\Theta(\log n)$ bits.
    For arrays in $\Agen_n$, it is $n\log(5+4\sqrt{2})-\Theta(\log n) \approx 3.41n-\Theta(\log n)$ bits.
\end{theorem}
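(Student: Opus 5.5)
The plan is to follow the framework of Section~\ref{sec:lb} exactly, with the colored min-heap and the uncolored max-heap. By the equivalences recalled in the main ideas, two arrays give identical answers to $\PSV$, $\NSV$ and $\PLV$ if and only if they induce the same pair $(\cMin(A),\Max(A))$. So it suffices to count realizable such pairs. We label generating-tree nodes by $(k,\ell)$, the two stack sizes excluding index $0$.

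\textbf{Succession rules.} Consider the step processing $A[m+1]$ and first assume $A[m]\neq A[m+1]$. By Proposition~\ref{prop:minmax}, pops occur in exactly one stack.
\begin{itemize}
\item If no pop occurs in the min-stack, the new node is a leftmost child in $\cMin(A)$, hence blue. The max-stack label ranges over $1,\dots,\ell$ with no color choice, giving $(k+1,1)\cdots(k+1,\ell)$.
\item If pops occur in the min-stack, exactly one pop gives a leaf left sibling, so the color is forced red by Proposition~\ref{prop:cminmax}; this gives the single child $(k,\ell+1)$. Two or more pops give an internal left sibling, and both colors are realizable, as argued for rule~\eqref{rule:all-no-cons}. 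This gives doubled children $(i,\ell+1)^2$ for $1\le i\le k-1$.
\end{itemize}
This yields the rule listed in Table~\ref{tab:gen-tree} for $\Aneq$. For $\Agen$, the case $A[m]=A[m+1]$ adds one child $(k,\ell)$ with a forced blue color, again by Propositions~\ref{prop:minmax} and~\ref{prop:cminmax}.

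\textbf{Walk correspondence.} Next, mimic the proof of Proposition~\ref{prop:heap-walk}.
\begin{itemize}
\item Write the functional equation for $\mathcal F(x;y,z)$. The first branch contributes $2z^{\ell+1}(y-y^k)/(1-y)+y^kz^{\ell+1}$, and the second contributes $y^{k+1}(z-z^{\ell+1})/(1-z)$.
\item Substitute $y=1+a$ and $z=1+b$. Verify that the kernel becomes $1-xQ_{\mathcal S'}(a,b)$ with $Q_{\mathcal S'}=a+b+2/a+1/b+a/b+2b/a+2$. This corresponds to the step multiset $\{(1,0),(0,1),2(-1,0),(0,-1),(1,-1),2(-1,1),2(0,0)\}$, plus one extra $(0,0)$ for $\Agen$.
\item Check that the boundary terms match the walk equation. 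The forbidden steps from the vertical axis are $2(-1,0)$ and $2(-1,1)$; from the horizontal axis they are $(0,-1)$ and $(1,-1)$. So $G=x(1+a)(1+b)W$ satisfies the same equation as $\widehat{\mathcal F}$.
\end{itemize}
Setting $a=b=0$ then gives the count as the number of excursions of length $n-1$.

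\textbf{Asymptotics.} Lemma~\ref{lem:quarter-walk} applies because all five required steps are present. Solve for the critical point:
\begin{itemize}
\item $\partial_a Q=1-2/a^2+1/b-2b/a^2=0$,
\item $\partial_b Q=1-1/b^2-a/b^2+2/a=0$.
\end{itemize}
Try $b=1$. The first equation gives $2-4/a^2=0$, so $a=\sqrt2$; the second gives $1-1-\sqrt2+2/\sqrt2=0$. Uniqueness comes from the lemma, so $(\sqrt2,1)$ is the critical point. Then $Q=\sqrt2+1+\sqrt2+1+\sqrt2+\sqrt2+2=4+4\sqrt2$, and adding the extra loop gives $5+4\sqrt2$ for $\Agen$. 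Taking logarithms gives both bounds.

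\textbf{Main obstacle.} The hardest part is the realizability claims in the branch rules. Each listed child must actually arise, including both colors when the left sibling is internal, and there must be no hidden interaction with the max-heap. For example, making $A[m+1]$ equal to an earlier min-sibling must still produce the intended max-stack behaviour. I would argue this as in Section~\ref{subsec:lb-cminmax}: when pops occur only in the min-stack, the max-stack outcome is fixed (no pop), so the color choice is independent of the max-heap. The remaining step is the algebraic verification of the shifted functional equation, which is routine.
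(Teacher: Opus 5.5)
Your proposal is correct and follows the paper's proof. It uses the same succession rule for $(\cMin(A),\Max(A))$, the same step multiset $\{(1,0),(0,1),2(-1,0),(0,-1),(1,-1),2(-1,1),2(0,0)\}$ (with one extra $(0,0)$ for $\Agen$), and the same critical point $(\sqrt{2},1)$, giving $4+4\sqrt{2}$ and $5+4\sqrt{2}$. The paper's proof only states the rule and the step multiset, relying on the analogy with Proposition~\ref{prop:heap-walk}; your explicit functional equation, kernel and boundary terms (forbidden steps $2(-1,0),2(-1,1)$ on the vertical axis and $(0,-1),(1,-1)$ on the horizontal axis) check out and fill in that omitted verification.
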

\begin{proof}
    For arrays in $\Aneq_n$, it is enough to count realizable pairs consisting of a colored 2d-min heap and an uncolored 2d-max heap. The same labeling as in Section~\ref{subsec:lb-cminmax} gives the following succession rule:
    \[
        \text{Root: }(1,1) \qquad
        (k,\ell)\rightsquigarrow
        \begin{cases}
            (1,\ell+1)^2(2,\ell+1)^2\cdots(k-1,\ell+1)^2(k,\ell+1),\\
            (k+1,1)(k+1,2)\cdots(k+1,\ell).
        \end{cases}
    \]
    This rule gives the step multiset
    \[
     \mathcal S=\{(1,0),(0,1),2(-1,0),(0,-1),
           (1,-1),2(-1,1),2(0,0)\},
    \]
    whose characteristic polynomial is
    \[
        Q_{\mathcal S}(x,y)
        =x+y+\frac{2}{x}+\frac{1}{y}
         +\frac{x}{y}+\frac{2y}{x}+2.
    \]
    A direct calculation shows that the positive critical point is $(\sqrt{2},1)$. At this point,
    \[
        Q_{\mathcal S}(\sqrt{2},1)=4+4\sqrt{2}.
    \]

    For arrays in $\Agen_n$, adding one copy of the step $(0,0)$ changes the characteristic polynomial from $Q_{\mathcal S}(x,y)$ to $Q_{\mathcal S}(x,y)+1$. The positive critical point remains $(\sqrt{2},1)$, and at this point the new value is $(4+4\sqrt{2})+1=5+4\sqrt{2}$.
\end{proof}

\section{Upper bounds for query subfamilies}\label{app:ub}

\subsection{The $\{\PSV,\PLV\}$ case.}

For $\Perm_n$ and $\Aneq_n$, storing $U$ and $S$ and applying Lemma~\ref{lem:cminmax} supports the desired queries in constant time using $3n+o(n)$ bits, matching the upper bounds of Gawrychowski and Nicholson~\cite{gawrychowski2015}.

For $A\in\Agen_n$, we use the same bit array $C$ and induced subsequence $A'\in\Aneq_m$ as in Section~\ref{subsec:ub-all}.
We store the above representation for $A'$, together with $C$ encoded using Lemma~\ref{lem:rrr}.
The total space is
\[
    3m+\log\binom{n}{m}+o(n)
    \le n\log 9+o(n)
    \approx 3.17n+o(n)
\]
bits.
By Lemma~\ref{lem:gen-bp}, this supports $\wordacc$ on both $\BP(\Min(A))$ and $\BP(\Max(A))$ in constant time.
The desired queries are then supported through the corresponding tree navigation operations, yielding the following theorem.

\begin{theorem}\label{thm:ub-minmax}
    Given an array $A\in\Agen_n$, there is a data structure using $n\log 9+o(n) \approx 3.17n+o(n)$ bits that supports $\PSV$, $\PLV$, $\RMin(\cdot,\cdot,n)$, and $\RMax(\cdot,\cdot,n)$ on $A$ in constant time.
\end{theorem}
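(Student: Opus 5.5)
The theorem follows by assembling results already stated, in two steps: first the claimed space bound, then constant-time query support.

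\emph{Space.} Define the bit array $C$ and the induced subsequence $A'\in\Aneq_m$ exactly as in Section~\ref{subsec:ub-all}. For $A'$ we store the bit array $U$ (length $m-1$) and the sequence $S$. By Proposition~\ref{prop:excess}, $S$ is a substring of $\BP(T)$ for an ordinal tree $T$ with $m$ nodes. We can therefore either store $S$ explicitly in at most $2m$ bits, or store $T$ via Lemma~\ref{lem:narayana}. Either choice gives $3m+o(m)$ bits for $A'$ and supports $\wordacc$ on $U$ and $S$ in constant time. We store $C$ with Lemma~\ref{lem:rrr} in $\log\binom{n}{m}+o(n)$ bits. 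The inequality $cm+\log\binom{n}{m}\le n\log(1+2^c)$ with $c=3$ then bounds the total by $n\log 9+o(n)$.

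\emph{Queries.} Lemma~\ref{lem:cminmax}, applied to $A'$, gives constant-time $\wordacc$ on $\BP(\Min(A'))$ and $\BP(\Max(A'))$ using $o(m)$ extra bits. Lemma~\ref{lem:gen-bp} lifts this, with $C$ and $o(n)$ more bits, to constant-time $\wordacc$ on $\BP(\Min(A))$ and $\BP(\Max(A))$. With this word access we build the standard $o(n)$-bit navigation structures for balanced parentheses~\cite{navarro2014} on both sequences. Each structure only needs to read $O(1)$ words of the sequence at a time, so constant-time word access suffices. These structures give the following:
\begin{itemize}
  \item $\PSV(i)$ is the parent of node $i$ in $\Min(A)$, and $\PLV(i)$ is the parent of node $i$ in $\Max(A)$.
  \item $\RMin(i,j,n)$ follows from the result of Fischer and Heun~\cite{fischerheun2011}: it is obtained in $\Min(A)$ by a constant number of navigation operations (an rmq on the excess array of the BP sequence, then a child or parent step).
  \item $\RMax(i,j,n)$ is obtained symmetrically in $\Max(A)$.
\end{itemize}

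The main point to verify is that the navigation machinery works on a sequence we can only access by words rather than one stored explicitly. Range min/max over excess and the associated rank and select directories are built from precomputed summaries plus table lookups on $O(\log n)$-bit chunks. We build these auxiliary directories at construction time, and at query time we fetch the chunks through $\wordacc$. This keeps both the $o(n)$ space and the $O(1)$ time, and completes the proof.
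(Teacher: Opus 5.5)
Your proposal is correct and follows the paper's proof. You store $U$ and $S$ for the induced $A'\in\Aneq_m$ together with $C$ via Lemma~\ref{lem:rrr}, bound $3m+\log\binom{n}{m}$ by $n\log 9$, and lift word access to $\BP(\Min(A))$ and $\BP(\Max(A))$ through Lemmas~\ref{lem:cminmax} and~\ref{lem:gen-bp} before using standard navigation; your extra remarks (the optional Narayana-based storage of $S$, and that the navigation structures need only word access) just spell out steps the paper leaves implicit.
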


\subsection{The $\{\PSV,\NSV\}$ case.}

For $\Perm_n$, the colors are determined by the heap topology, so an uncolored 2d-min heap representation suffices to support the desired queries in constant time using $2n+o(n)$ bits~\cite{fischerheun2011,fischer2011}.
For $\Agen_n$, Corollary~\ref{cor:ub-cmin} supports the desired queries in constant time using $n\log(3+2\sqrt{2})+o(n)$ bits, matching Fischer's upper bound~\cite{fischer2011}.
Thus, we describe only the case of $\Aneq_n$.

For arrays in $\Aneq_n$, Proposition~\ref{prop:cminmax} implies that the colors of non-leftmost child nodes whose left sibling is a leaf are forced. We exploit this property to improve the space bound.

\begin{theorem}\label{thm:ub-cmin-no-cons}
    Given an array $A\in\Aneq_n$, there is a data structure using $n\log(2+2\sqrt{2})+o(n) \approx 2.27n+o(n)$ bits that supports $\PSV$, $\NSV$, and $\RMin$ on $A$, as well as $\wordacc$ on both $\BP(\Min(A))$ and $\Xmin(A)$, in constant time.
\end{theorem}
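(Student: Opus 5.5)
We adapt the proof of Lemma~\ref{lem:ub-cmin}, this time storing only the colors that are not forced. Let $k$ be the number of leaves of $\Min(A)$ among nodes $1,\dots,n$. We store the topology of $\Min(A)$ via Lemma~\ref{lem:narayana}, using $\log N(n,k)+o(n)$ bits and the two bit arrays $L$ and $R$; this already gives constant-time $\wordacc$ on $\BP(\Min(A))$. For the colors, recall that the root and leftmost children are blue. A non-leftmost child whose left sibling is a leaf is red by Proposition~\ref{prop:cminmax}, since $A\in\Aneq_n$. So only non-leftmost children with an internal left sibling carry a free color bit. We collect these bits in preorder into an explicit array $Y$.

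Next we bound $|Y|$. A non-leftmost child has a leaf left sibling exactly when its opening parenthesis is preceded by exactly one closing parenthesis, i.e.\ a run $)($ of length one; with an internal left sibling the run is at least two. Every internal node $v\neq 0$ whose subtree closes while it still has a right sibling contributes a close-run of length at least $2$ before that sibling. So $|Y|$ is at most the number of non-leftmost children whose preceding close-run has length $\ge2$, and is therefore at most the number of internal nodes, about $n-k$. It is also at most $k-1$.

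Rather than bound $|Y|$ crudely, we count configurations exactly. Let $j=|Y|$. We store, in place of $R$, a refined description that simultaneously encodes which closing parentheses are followed by an opening one and whether the preceding run had length one. Concretely, we keep $L$ (a leaf followed by ``$)$'' is a run of length $\ge1$) together with a bit array $R'$ over the $n-k$ closing parentheses that do not immediately follow an opening parenthesis, marking those followed by ``$($''. $R'$ has exactly $j$ ones, and the remaining $k-1-j$ non-leftmost children are precisely the leaves immediately followed by ``$($''. These we mark with a bit array $R''$ of length $k$ with $k-1-j$ ones. The total space is
\[
\log\binom{n}{k}+\log\binom{n-k}{j}+\log\binom{k}{k-1-j}+j+o(n).
\]
Setting $\alpha=k/n$ and $\beta=j/n$, we maximize $H(\alpha)+(1-\alpha)H(\beta/(1-\alpha))+\alpha H((\alpha-\beta)/\alpha)+\beta$. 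I expect this optimization to give exactly $\log(2+2\sqrt2)$. The check is to sum the corresponding multinomial generating terms, $\sum\binom{n}{k}\binom{n-k}{j}\binom{k}{j+1}2^j$, and compare the growth rate with the singularity $1-4x-4x^2=0$ from Theorem~\ref{thm:cmin-no-cons-lb}. This calculation is the main obstacle. If the bookkeeping is off, the fallback is a single multiset-style encoding whose count matches A071356 up to polynomial factors, which still meets the stated bound.

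For query support, $\wordacc$ on $\BP(\Min(A))$ is obtained as in Lemma~\ref{lem:narayana}. We decode each microblock from the next $b$ bits of $L$, $R'$ and $R''$, read at offsets given by superblock and block counts, via a precomputed table. The determinism argument is the same: after ``$($'' we consult $L$; after ``$)$'' we consult $R''$ if the run has length one, and $R'$ otherwise. $\wordacc$ on $\Xmin(A)$ follows as in the proof of Lemma~\ref{lem:ub-cmin}. Blocks of $\Xmin$ are reconstructed by a table from $b$ bits each of $L$, $R''$ (which identify the forced-red nodes) and $Y$, with rank counts locating the offset into $Y$. Finally, Lemma~\ref{lem:cbp} yields constant-time $\PSV$, $\NSV$ and $\RMin$.
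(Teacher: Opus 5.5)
Your proposal is correct and is essentially the paper's proof. You store $L$; your $R'$ is the paper's $R_2$; your $R''$ marks the complement of the paper's $R_1$, which has length $k-1$ rather than $k$, an immaterial difference; and $Y$ is stored explicitly. This gives the same bound $\log\binom{n}{k}+\log\binom{k-1}{t}+\log\binom{n-k}{t}+t+o(n)$ and the same table-based decoding of $\BP(\Min(A))$ and $\Xmin(A)$. One small detail: the paper also stores, for each microblock, one bit saying whether the preceding parenthesis is opening, which you need when a microblock starts with a closing parenthesis.

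The optimization you flagged does work out, so no fallback is needed. The exponent is concave, and its stationary point is $\alpha=1/2$, $\beta=(2-\sqrt{2})/2$. The value there is $1+H(2-\sqrt{2})+\beta=1+\log(1+\sqrt{2})=\log(2+2\sqrt{2})$.
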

\begin{proof}
    Similar to the proof of Lemma~\ref{lem:ub-cmin} (Appendix~\ref{subsec:proof-ub-cmin}), we encode $\BP(\Min(A))$ and the colors separately. Let $k$ be the number of leaves of $\Min(A)$, and let $t$ be the number of non-leftmost child nodes whose left sibling is an internal node. We describe below an encoding of $\BP(\Min(A))$ using
    \begin{equation}\label{eq:ktbp}
        \log \binom{n}{k}
        + \log \binom{k-1}{t}
        + \log \binom{n-k}{t}
        + o(n)
    \end{equation}
    bits that supports $\wordacc$ on $\BP(\Min(A))$ in constant time.

    By Proposition~\ref{prop:cminmax}, the only colors that are not forced are those of the $t$ non-leftmost child nodes whose left sibling is internal. Let $Y[1,\dots,t]$ be the array of their color bits, listed in preorder. We store $Y$ explicitly, and hence the total space is
    \[
        \log \binom{n}{k}
        + \log \binom{k-1}{t}
        + \log \binom{n-k}{t}
        + t
        + o(n)
    \]
    bits.

    Let $\alpha=k/n$ and $\beta=t/n$. Then, the total space is at most
    \[
        n\left(
            H(\alpha)
            +\alpha H(\beta/\alpha)
            +(1-\alpha)H(\beta/(1-\alpha))
            +\beta
        \right)
        +o(n)
    \]
    bits. The expression in parentheses is maximized at $\alpha=1/2$ and $\beta=(2-\sqrt{2})/2$ and its maximum value is $\log(2+2\sqrt{2})$. Therefore, the total space is at most
    \[
         n\log(2+2\sqrt{2})+o(n)
    \] bits.

    By Lemma~\ref{lem:cbp}, the query support follows once we support $\wordacc$ on both $\BP(\Min(A))$ and $\Xmin(A)$ in constant time. We first describe a representation of $\BP(\Min(A))$ that supports $\wordacc$ in constant time; then we show how to support $\wordacc$ on $\Xmin(A)$.

    \subparagraph{Supporting $\wordacc$ on $\BP(\Min(A))$.}
    Following the proof of Lemma~\ref{lem:narayana}, remove the first and last parentheses of $\BP(\Min(A))$, corresponding to the root, and let $P$ be the remaining sequence. Use the same bit array $L[1,\dots,n]$, where $L[i]=1$ iff the $i$-th opening parenthesis of $P$ is followed by a closing parenthesis. Then $L$ has $k$ ones and can be stored in $\log\binom{n}{k}+o(n)$ bits.

    We encode the boundaries of the maximal runs of closing parentheses in $P$ using two bit arrays $R_1$ and $R_2$ as follows. Excluding the final closing run of $P$, the number of maximal closing runs of length at least two is exactly $t$.

    We construct a bit array $R_1[1,\dots,k-1]$ over the leaves, except for the last one, in postorder: $R_1[i]=1$ iff, for the corresponding node $v$, the closing parenthesis of $v$ is followed by a closing parenthesis. Since the closing parenthesis of a leaf is preceded by an opening parenthesis, $R_1$ marks the starting points of the closing runs of length at least two. Hence, $R_1$ has $t$ ones and can be stored in $\log\binom{k-1}{t}+o(n)$ bits.

    In addition, we construct a bit array $R_2[1,\dots,n-k]$ over the non-root internal nodes in postorder: $R_2[i]=1$ iff, for the corresponding node $v$, the closing parenthesis of $v$ is followed by an opening parenthesis. Since the closing parenthesis of an internal node is preceded by a closing parenthesis, $R_2$ marks the ending points of the maximal closing runs of length at least two, excluding the final run. Hence, $R_2$ has $t$ ones and can be stored in $\log\binom{n-k}{t}+o(n)$ bits.

    Thus the total space is \eqref{eq:ktbp}. It remains to support $\wordacc$ on $\BP(\Min(A))$. We use the same two-level blocking scheme for $P$ as in the proof of Lemma~\ref{lem:narayana}, and decode each microblock using a precomputed table. For each superblock, we store the following three quantities: (a) number of opening parentheses, (b) number of closing parentheses of leaves preceding it, and (c) number of closing parentheses of internal nodes preceding it.
    For each microblock, we store the same three quantities preceding it within its superblock, together with its first parenthesis and one additional bit indicating whether the preceding parenthesis is opening.

    To decode a microblock, suppose that $o$ opening parentheses, $c_1$ closing parentheses of leaves, and $c_2$ closing parentheses of internal nodes precede it. We decode the next $b$ bits of $L$ starting from position $o+1$, the next $b$ bits of $R_1$ starting from position $c_1+1$, and the next $b$ bits of $R_2$ starting from position $c_2+1$.

    Given these three length-$b$ strings, the first parenthesis of the microblock, and the additional bit, a precomputed table reconstructs the whole microblock of $P$.
    After an opening parenthesis, the next parenthesis is closing exactly when the corresponding bit of $L$ is $1$.
    After a closing parenthesis preceded by an opening parenthesis, the next parenthesis is closing exactly when the corresponding bit of $R_1$ is $1$.
    After a closing parenthesis preceded by another closing parenthesis, the next parenthesis is opening exactly when the corresponding bit of $R_2$ is $1$.
    If the microblock starts with a closing parenthesis, the additional bit determines which of the latter two rules applies.
    Since the table has size $2^{3b+2}\log n=o(n)$, any $O(\log n)$ consecutive bits of $P$, and hence of $\BP(\Min(A))$, can be accessed in constant time.

    \subparagraph{Supporting $\wordacc$ on $\Xmin(A)$.}
    Since the nodes of $\Min(A)$ are labeled $0,\dots,n$ in preorder, with node $0$ as the root, the leaf bit array $L[1,\dots,n]$ satisfies $L[j]=1$ iff node $j$ is a leaf. Hence, for $i\ge 2$, node $i$ is a non-leftmost child iff $L[i-1]=1$. Among these nodes, the ones whose left sibling is internal are determined by $R_1$.

    We use the same block decoding scheme as in the proof of Lemma~\ref{lem:all-wordacc} (Appendix~\ref{subsec:proof-all-wordacc}), with $L$, $R_1$, and $Y$ determining the colors here.
    The first two bits of $\Xmin(A)$ are always $0$, since they correspond to nodes $0$ and $1$, which are always blue.
    We partition the remaining suffix into blocks of length $b=(\log n)/4$.
    Consider a block whose first entry stores the color of node $p$. Let $q=\rank_1(p-2,L)$, which is the number of non-leftmost child nodes before node $p$, and let $r=\rank_1(q,R_1)$, which is the number of such nodes whose left sibling is internal. We decode the next $b$ bits of $L$ starting from position $p-1$, the next $b$ bits of $R_1$ starting from position $q+1$, and the next $b$ bits of $Y$ starting from position $r+1$. A precomputed table indexed by these three length-$b$ strings returns the corresponding block of $\Xmin(A)$: positions whose corresponding bit of $L$ is $0$ are filled with $0$, positions whose corresponding bit of $L$ is $1$ and whose corresponding bit of $R_1$ is $0$ are filled with $1$, and the remaining positions are filled, in order, with the bits read from $Y$. The table has size $2^{3b}b=o(n)$ bits, so each block of $\Xmin(A)$ can be decoded in constant time. Consequently, $\wordacc$ on $\Xmin(A)$ is supported in constant time.
    \end{proof}

\subsection{The $\{\PSV,\NLV\}$ case.}\label{subsec:ub-minrmax}

\begin{figure}[t]
\centering
\forestset{
    heap tree/.style={
        for tree={
            circle,
            draw,
            line width=1pt,
            minimum size=13mm,
            inner sep=0pt,
            font=\Huge,
            edge={line width=1pt},
            l sep=6mm,
            s sep=2mm,
            anchor=center,
            child anchor=north,
            parent anchor=south
        }
    }
}

\resizebox{0.52\textwidth}{!}{
\begin{tabular}{@{}c@{\hspace{18mm}}c@{}}
\begin{forest}
baseline=(current bounding box.north),
heap tree
[0
    [1]
    [2
        [3]
        [4]
    ]
    [5
        [6]
        [7
            [8]
        ]
        [9
            [10]
        ]
    ]
]
\end{forest}
&
\begin{forest}
baseline=(current bounding box.north),
heap tree
[11
    [10
        [9]
    ]
    [8
        [7]
        [6
            [5]
        ]
        [4]
        [3
            [2]
            [1]
        ]
    ]
]
\end{forest}
\\
\raisebox{-7mm}{{\Huge $\Min(A)$}} &
\raisebox{-7mm}{{\Huge $\Max(A^R)$}}
\end{tabular}
}

\vspace{2mm}
{\small
\setlength{\arraycolsep}{4pt}
$
\begin{array}{|c!{\color{black!25}\vrule}c!{\color{black!25}\vrule}c!{\color{black!25}\vrule}c!{\color{black!25}\vrule}c!{\color{black!25}\vrule}c!{\color{black!25}\vrule}c!{\color{black!25}\vrule}c!{\color{black!25}\vrule}c!{\color{black!25}\vrule}c!{\color{black!25}\vrule}c!{\color{black!25}\vrule}c!{\color{black!25}\vrule}c|}
\hline
i &  & 1 & 2 & 3 & 4 & 5 & 6 & 7 & 8 & 9 & 10 & \\
\hline
A[i] &  & 4 & 2 & 5 & 5 & 1 & 3 & 3 & 6 & 2 & 4 & \\
A^R[i] &  & 4 & 2 & 6 & 3 & 3 & 1 & 5 & 5 & 2 & 4 & \\
U &  & 0 & 1 & 0 & 1 & 1 & 0 & 1 & 0 & 1 & 0 & \\
\BP(\Min(A)) & (( & )(( & )( & ) & )(( & )(( & ) & )(( & ) & ) & ) & ) \\
R_{\min} &  & 1 & 1 & 0 & 1 & 1 & 0 & 1 & 0 & 0 & 0 & \\
\BP(\Max(A^R)) & ((( & ) & )(( & )(( & ) & )( & )(( & )( & ) & ) & ) & ) \\
R_{\max} &  & 0 & 1 & 1 & 0 & 1 & 1 & 1 & 0 & 0 & 0 & \\
\hline
\end{array}
$
}
\vspace{-1mm}
\caption{An example of the sequences used in Appendix~\ref{subsec:ub-minrmax} for an array $A\in\Agen_{10}$. For readability, each node in $\Max(A^R)$ is labeled with its corresponding index $i$ in $A$ rather than with the index $\rev(i)$ in $A^R$. For the rows $\BP(\Min(A))$ and $\BP(\Max(A^R))$, the entry in column $i$ contains the $i$-th closing parenthesis followed by the subsequent run of opening parentheses. Thus $R_{\min}[i]$ and $R_{\max}[i]$ are $1$ exactly when the corresponding entry contains an opening parenthesis after its first symbol.
}
\label{fig:minrmax}
\end{figure}

In this subsection, we give a representation for the pair $\Min(A)$ and $\Max(A^R)$. Recall that $\rev(i)=n-i+1$ is the index in $A^R$ corresponding to index $i$ in $A$. This pair supports $\PSV$ and $\RMin(\cdot,\cdot,n)$ from $\Min(A)$, and also supports $\NLV$ and $\RMax(\cdot,\cdot,1)$ from $\Max(A^R)$ through this correspondence. The remaining symmetric case uses the pair $\Min(A^R)$ and $\Max(A)$, and is handled in the same way.

\begin{theorem}\label{thm:ub-minrmax}
    Given an array $A\in\Agen_n$ (or $A\in\Aneq_n$), there is a data structure using $3n+o(n)$ bits that supports $\PSV$, $\NLV$, $\RMin(\cdot,\cdot,n)$, and $\RMax(\cdot,\cdot,1)$ on $A$ in constant time.
\end{theorem}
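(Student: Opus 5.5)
The plan is to encode the pair $(\Min(A),\Max(A^R))$ with $3n+o(n)$ bits, split as one shared bit array $U$ of length $n$ and two run-boundary arrays $R_{\min}$, $R_{\max}$, as suggested by Figure~\ref{fig:minrmax}. Proposition~\ref{prop:min-reversed-max} gives the sharing: node $i+1$ is a leftmost child in $\Min(A)$ iff node $\rev(i)$ is a leftmost child in $\Max(A^R)$. Equivalently, the leaf/internal status of node $i$ in $\Min(A)$ is complementary to (or determined by) the corresponding status in $\Max(A^R)$. So one bit per index $i$, stored in $U$, records for both trees whether the opening parenthesis of the relevant node is immediately followed by an opening parenthesis.

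For the closing runs I would proceed as follows. In the BP sequence of each tree, each closing parenthesis is either followed by an opening one or not. I store $R_{\min}[i]=1$ iff the $i$-th closing parenthesis of $\BP(\Min(A))$ (root parentheses removed) is followed by an opening parenthesis, and define $R_{\max}$ analogously on $\BP(\Max(A^R))$. Each of these is an $n$-bit array stored plainly, so together with $U$ the total is $3n$ bits. Decoding uses exactly the scan rule of Lemma~\ref{lem:narayana}:
\begin{itemize}
\item after an opening parenthesis, the next symbol is read off the leaf bit, which is given by $U$ for $\Min(A)$ and by the complement or reversed reading of $U$ for $\Max(A^R)$;
\item after a closing parenthesis, the next symbol is read off $R_{\min}$ or $R_{\max}$.
\end{itemize}
Consequently, I would reuse the two-level superblock/microblock scheme from that proof essentially verbatim. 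For each block I store the counts of opening and closing parentheses preceding it, together with its first symbol. A precomputed table indexed by $b$ bits of $U$ and $b$ bits of $R$ then reconstructs a microblock of $P$. This gives constant-time $\wordacc$ on both BP sequences using $o(n)$ extra bits. The standard navigation structures of~\cite{navarro2014} then yield $\PSV$ and $\RMin(\cdot,\cdot,n)$ on $\Min(A)$, and the corresponding $\PLV$/$\RMax(\cdot,\cdot,n)$ on $A^R$, which translate via $\rev$ into $\NLV$ and $\RMax(\cdot,\cdot,1)$ on $A$.

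I expect the main obstacle to be the indexing mismatch in sharing $U$. The preorder of $\Max(A^R)$ runs over $A$ in reverse, so the leaf bit for the $j$-th opening parenthesis in $\BP(\Max(A^R))$ is $U$ read backwards with an off-by-one shift, namely node $i+1$ versus $\rev(i)$. To handle this I would store $U$ once and additionally a reversed copy accessor. Reversing a word is a table lookup, so reading $b$ consecutive bits of $U$ backwards is still $O(1)$. The boundary cases also need checking: nodes $1$, $n$, the root, and the final closing run. This means verifying that the exact statement of Proposition~\ref{prop:min-reversed-max} (leftmost child status) translates into the leaf status used in the decoding rule. Here I would use that a node is internal iff its successor in preorder is its leftmost child.

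Finally, I would verify that the construction does not need $A\in\Aneq$. Proposition~\ref{prop:min-reversed-max} holds for all arrays, since ties give a pop in both constructions consistently ($A[i]<A[i+1]$ iff $A^R[\rev(i+1)]>A^R[\rev(i)]$ fails symmetrically). Hence the same $3n+o(n)$ bound applies to $\Agen_n$. The symmetric family $\{\PLV,\NSV\}$ is handled by the pair $(\Min(A^R),\Max(A))$ in the same way.
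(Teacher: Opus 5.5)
Your proposal is correct and is essentially the paper's proof. The paper stores exactly $U$ (the non-leftmost-child bits of $\Min(A)$), $R_{\min}$ and $R_{\max}$ for $3n$ bits, derives both leaf arrays from $U$ via Proposition~\ref{prop:min-reversed-max}, and reuses the decoding of Lemma~\ref{lem:narayana} together with standard BP navigation and the $\rev$ correspondence. One correction: for $\Max(A^R)$ the leaf bits are obtained by reading $U$ in reverse, not by complementing it. For $1\le j<n$, node $j$ is a leaf in $\Max(A^R)$ iff $U[\rev(j)]=1$, and node $n$ is always a leaf. This is exactly the reversed reading you flag as the main obstacle, a point the paper leaves implicit.
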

\begin{proof}
    The desired queries are obtained from standard navigation on $\Min(A)$ and $\Max(A^R)$. The queries $\PSV$ and $\NLV$ are parent queries in $\Min(A)$ and $\Max(A^R)$, respectively, under the correspondence $i\leftrightarrow\rev(i)$. Moreover, $\RMin(\cdot,\cdot,n)$ and $\RMax(\cdot,\cdot,1)$ are answered using a constant number of standard navigation operations on $\Min(A)$ and $\Max(A^R)$, respectively~\cite{fischerheun2011}; these operations are supported from balanced-parentheses sequences with $o(n)$ additional bits~\cite{navarro2014}. Thus, it remains to show that $\wordacc$ on both $\BP(\Min(A))$ and $\BP(\Max(A^R))$ can be supported in constant time.
    
    Store a bit array $U[1,\dots,n]$ marking the non-leftmost child nodes of $\Min(A)$. By Proposition~\ref{prop:min-reversed-max}, for every $j\ge2$, node $j$ is a non-leftmost child in $\Max(A^R)$ iff $U[\rev(j)+1]=1$. Thus, the non-leftmost child nodes for $\Max(A^R)$ are also determined by $U$. Also store the $R_{\min}$ and $R_{\max}$, which are $R$ arrays defined in the proof of Lemma~\ref{lem:narayana} for $\Min(A)$ and $\Max(A^R)$, respectively. These three arrays use $3n$ bits. See Figure~\ref{fig:minrmax} for an example.
    
    In the proof of Lemma~\ref{lem:narayana}, word access to the balanced-parentheses sequence is supported using the leaf array $L$ together with the array $R$. Since, for every $i\ge2$, node $i$ is a non-leftmost child iff node $i-1$ is a leaf, the array $U$ also determines the leaf arrays $L$ for both $\Min(A)$ and $\Max(A^R)$. Together with the explicitly stored arrays $R_{\min}$ and $R_{\max}$, the decoding method of Lemma~\ref{lem:narayana} therefore supports $\wordacc$ on both $\BP(\Min(A))$ and $\BP(\Max(A^R))$ in constant time with $o(n)$ additional bits.
\end{proof}

\subsection{The $\{\PSV,\PLV,\NSV\}$ case.}

In this subsection, we show how to combine a representation of one colored heap with an uncolored representation of the other heap. We present the construction for $\cMin(A)$ and $\Max(A)$, which gives the query family $\{\PSV,\PLV,\NSV, \RMin, \RMax(\cdot,\cdot,n)\}$; the remaining cases follow by exchanging min and max, and by applying the construction to the reversed array $A^R$. We first handle arrays in $\Aneq_n$, and then extend the construction to arrays in $\Agen_n$.

\begin{theorem}\label{thm:ub-three}
    Given an array $A\in\Aneq_n$, there is a data structure using $n\log(4+4\sqrt{2})+o(n)$ bits that supports $\PSV$, $\PLV$, $\NSV$, $\RMin$, and $\RMax(\cdot,\cdot,n)$ on $A$ in constant time. For arrays $A\in\Agen_n$, the same queries can be supported in constant time using $n\log(5+4\sqrt{2})+o(n)$ bits.
\end{theorem}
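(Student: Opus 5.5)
We would follow the $\Aneq_n$ construction of Section~\ref{subsec:ub-all}, but keep color information only for $\cMin(A)$. We store $U$ and $S$ exactly as before. Since $\Max(A)$ is uncolored, we store only $c_{\min}$, the colors of the non-leftmost children of $\cMin(A)$ whose left sibling is internal. By Proposition~\ref{prop:excess}, $S$ extends to $\BP(T)$ for an ordinal tree $T$ with $n$ nodes. The nodes of $T$ that are non-leftmost children correspond to indices $i$ with $p_i\ge 2$; call their number $t$. Each such index belongs either to $\Min(A)$ (when $U[i-1]=1$) or to $\Max(A)$, so $c_{\min}$ labels only a subset of these $t$ nodes.

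\textbf{Encoding $S$ and $c_{\min}$.} We encode $T$ by Lemma~\ref{lem:narayana} using its arrays $L$ and $R$, in $\log N(n,k)+o(n)$ bits, where $k$ is the number of leaves of $T$. We store $c_{\min}$ explicitly. The total space is then
\[
n+\log\binom{n}{k}+\log\binom{n}{k-1}+|c_{\min}|+o(n).
\]
The difficulty is that $|c_{\min}|$ depends on $A$. A cruder but cleaner route is to store, for each non-leftmost child of $T$, a ternary symbol: min-side blue, min-side red, or max-side. The min/max side is already determined by $U$, so this is redundant. Instead, we can drop part of $U$. Concretely, we encode $U$ restricted to the nodes with $p_i\ge2$ jointly with $c_{\min}$, as a sequence over $\{\text{max},\text{min-blue},\text{min-red}\}$ of length $t$. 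This costs $t\log 3$ bits. The remaining $U$ bits, for nodes with $p_i=1$, cost $n-t$ bits.

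\textbf{Optimizing the bound.} With $\alpha=k/n$, this gives a bound of the form
\[
n\bigl(2H(\alpha)+(1-\alpha)+\alpha\log 3\bigr)+o(n),
\]
which is not obviously optimal. So we would instead optimize directly. Write $u$ for the number of min-side colored positions and $v$ for the number of max-side positions among the $t$ positions (here $t=k-1$). We store $L$, $R$, a bit array of length $t$ with $u$ ones marking the min side, the $u$ color bits, and the $U$ bits for the other $n-t$ nodes. This costs about
\[
2nH(\alpha)+n(1-\alpha)+\alpha n\,H(\gamma)+\gamma\alpha n,
\]
where $\gamma=u/t$. Maximizing over $\gamma$ gives $\alpha n\log 3$, the same as before. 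So the real issue is to check that $\max_\alpha\bigl(2H(\alpha)+1-\alpha+\alpha\log3\bigr)$ equals $\log(4+4\sqrt2)$. Setting the derivative to zero gives $(1-\alpha)^2/\alpha^2=2/3$, which does not match this value. We therefore expect this accounting to be lossy, and this is the main obstacle.

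\textbf{A better split.} The better approach mirrors the step multiset: the min-side steps carry weight 2 while the max-side steps do not. We would therefore keep $U$ fully, costing $n$ bits. We would then encode the min-side and max-side close-runs separately, as two subtrees or two Narayana-type sequences obtained by splitting $S$ according to $U$. The colored min part is then encoded as in Theorem~\ref{thm:ub-cmin-no-cons}, with growth $2+2\sqrt2$, and the uncolored max part as in Lemma~\ref{lem:narayana}, with the growth of Catalan, in the merged excess form. We would then verify numerically that the entropy maximum equals $\log(4+4\sqrt2)$, using the critical point $(\sqrt2,1)$ as a guide for the optimal densities.

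\textbf{Word access and queries.} Word access on $\BP(\Min(A))$ and $\BP(\Max(A))$ follows from Lemma~\ref{lem:cminmax}, once $S$ supports word access via the table decoding of Lemma~\ref{lem:narayana}. Word access on $\Xmin(A)$ follows from the block-table argument in the proof of Lemma~\ref{lem:all-wordacc}, using $U$, $L$ and $c_{\min}$. Lemma~\ref{lem:cbp} then gives $\PSV$, $\NSV$ and $\RMin$, and navigation on $\BP(\Max(A))$ gives $\PLV$ and $\RMax(\cdot,\cdot,n)$.

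\textbf{General arrays.} For $\Agen_n$ we add the array $C$ and apply Lemma~\ref{lem:gen-bp}, treating color access exactly as in the $\Agen_n$ part of Lemma~\ref{lem:all-wordacc}. The inequality $cm+\log\binom nm\le n\log(1+2^c)$ with $c=\log(4+4\sqrt2)$ then yields $n\log(5+4\sqrt2)+o(n)$ bits.
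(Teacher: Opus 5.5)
\textbf{There is a genuine gap: the proposal never establishes the space bound, which is the heart of the theorem.}

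\emph{The first encoding fails.} You use the Narayana encoding of $T$ built from $S$, plus a ternary joint encoding of $c_{\min}$ with the $U$-bits at the $t=k-1$ positions where $p_i\ge 2$. This is genuinely lossy, as you suspected: the coefficient of $n$ is maximized at $1+2\log(1+\sqrt{3/2})=\log(5+2\sqrt6)\approx 3.307$, which exceeds $\log(4+4\sqrt2)\approx 3.272$.

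\emph{The ``better split'' is not carried through.} It leaves the decisive step to a numerical check that is never done. The costs you attach to its pieces also do not add up when taken literally:
\begin{itemize}
\item you keep $U$, costing $n$ bits;
\item you store the colored min heap ``as in Theorem~\ref{thm:ub-cmin-no-cons}'', about $n\log(2+2\sqrt2)$ bits, but that structure already contains the leaf array of $\Min(A)$, which is $U$;
\item you store the max heap ``as in Lemma~\ref{lem:narayana}, with the growth of Catalan'', up to $2n$ bits, which again includes its own leaf array.
\end{itemize}
This total is well above $n\log(4+4\sqrt2)$. Your word-access paragraph also relies on $S$ being stored as the tree $T$ and on Lemma~\ref{lem:cminmax}, i.e., on the encoding you abandoned.

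\textbf{The missing idea.} Once $\cMin(A)$ is stored by Theorem~\ref{thm:ub-cmin-no-cons}, you need neither $U$ nor $S$. The leaf array $L$ of $\Min(A)$ kept in that structure determines the leaf array of $\Max(A)$ by Proposition~\ref{prop:minmax}: node $j<n$ is a leaf of $\Max(A)$ iff $L[j]=0$, and node $n$ is always a leaf.
\begin{itemize}
\item Given this, the uncolored max heap is not a Catalan-sized object. It costs only its array $R$ from Lemma~\ref{lem:narayana}, at most $n$ bits.
\item The decoder of Lemma~\ref{lem:narayana}, fed with the complemented $L$ and this $R$, gives constant-time $\wordacc$ on $\BP(\Max(A))$.
\item $\wordacc$ on $\BP(\Min(A))$ and $\Xmin(A)$ comes directly from Theorem~\ref{thm:ub-cmin-no-cons}.
\end{itemize}
The space is then $n\log(2+2\sqrt2)+n=n\log(4+4\sqrt2)$ by plain arithmetic, with no entropy optimization needed. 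This is exactly what a corrected version of your split would give: $U$ stored once, the run and color arrays for the min side, and $R$ for the max side.

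Your treatment of $\Agen_n$ is fine and matches the paper. It uses $C$, Lemma~\ref{lem:gen-bp}, and the inequality $cm+\log\binom{n}{m}\le n\log(1+2^c)$.
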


\begin{proof}
Let $A\in\Aneq_n$. Our data structure stores the representation of $\cMin(A)$ from Theorem~\ref{thm:ub-cmin-no-cons}. Hence, the queries $\PSV$, $\NSV$, and $\RMin$ are supported in constant time. It remains to support the information needed from $\Max(A)$. For this, we store the bit array $R$ for $\Max(A)$, defined as in Lemma~\ref{lem:narayana}, using $n$ bits.
The representation of $\cMin(A)$ already contains the bit array $L$, where $L[j]=1$ iff node $j$ is a leaf in $\Min(A)$.
By Proposition~\ref{prop:minmax}, $L$ also determines the leaf information for $\Max(A)$: for $1\le j<n$, node $j$ is a leaf iff $L[j]=0$, and node $n$ is always a leaf.
Thus, applying the decoding method from the proof of Lemma~\ref{lem:narayana} using this information and the stored array $R$ supports $\wordacc$ on $\BP(\Max(A))$ in constant time.

The query $\PLV$ is given by the parent operation in $\Max(A)$, and $\RMax(\cdot,\cdot,n)$ can be answered using a constant number of standard navigation operations on $\Max(A)$~\cite{fischerheun2011}. Since these operations are supported from $\BP(\Max(A))$ with $o(n)$ additional bits~\cite{navarro2014}, the required queries involving $\Max(A)$ are supported in constant time. The space usage for $A\in\Aneq_n$ is
\[
    n\log(2+2\sqrt{2}) + n + o(n)
    = n\log(4+4\sqrt{2})+o(n)
\]
bits.

For an array $A\in\Agen_n$, we use the same reduction as in Section~\ref{subsec:ub-all}.
Define the bit array $C$ and the induced subsequence $A'\in\Aneq_m$ as in that subsection.
We store the above representation for $A'$, together with $C$ encoded using Lemma~\ref{lem:rrr}.
By the same space calculation as in Section~\ref{subsec:ub-all}, this uses
\[
    m\log(4+4\sqrt{2})+\log\binom{n}{m}+o(n)
    \le n\log(5+4\sqrt{2})+o(n)
\]
bits.
By Lemma~\ref{lem:gen-bp}, this supports $\wordacc$ on both $\BP(\Min(A))$ and $\BP(\Max(A))$ in constant time.
Moreover, we support $\wordacc$ on $\Xmin(A)$ in constant time using the block decoding method for general arrays described in the proof of Lemma~\ref{lem:all-wordacc} (Appendix~\ref{subsec:proof-all-wordacc}).
Consequently, Lemma~\ref{lem:cbp} and the standard tree navigation operations support all desired queries in constant time.
\end{proof}

\end{document}